\newtheorem{thm}{\bf Theorem}
\begin{document}

% Multi-priority 
\title{Hierarchical Cognitive Spectrum Sharing in Space-Air-Ground Integrated Networks\\
% {\footnotesize \textsuperscript{*}Note: Sub-titles are not captured in Xplore and should not be used}
% \thanks{Identify applicable funding agency here. If none, delete this.}
}

\author{
% Zizhen Zhou, \IEEEmembership{Graduate Student Member, IEEE}, 
% Qianqian Zhang, \IEEEmembership{Member, IEEE},
% Jungang Ge, \IEEEmembership{Graduate Student Member, IEEE},
Zizhen Zhou, Qianqian Zhang, Jungang Ge, 
and Ying-Chang Liang, \IEEEmembership{Fellow, IEEE}
\thanks{
% This work was supported by the National Key R\&D Program of China under Grant 2018YFB1801105, the National Natural Science Foundation of China under Grants U1801261 and 61631005, the Key Areas of Research and Development Program of Guangdong Province, China, under Grant 2018B010114001, the Macau Science and Technology Development Fund (FDCT), Macau SAR, under Grant 0009/2020/A1, the Fundamental Research Funds for the Central Universities under Grant ZYGX2019Z022, and the Programme of Introducing Talents of Discipline to Universities under Grant B20064. (\emph{Corresponding author: Ying-Chang Liang.})
Z. Zhou, Q. Zhang, and J. Ge are with the National Key Laboratory of Wireless Communications, University of Electronic Science and Technology of China (UESTC), Chengdu 611731, China (e-mails: zhouzizhen@std.uestc.edu.cn; qqzhang\_kite@163.com; and gejungang@std.uestc.edu.cn).

Y.-C. Liang is with the Center for Intelligent Networking and Communications (CINC), University of Electronic Science and Technology of China (UESTC), Chengdu 611731, China (e-mail: liangyc@ieee.org).
% Y.-C. Liang is with University of Electronic Science and Technology of China (UESTC), Chengdu 611731, and also with Peng Cheng Laboratory, Shenzhen 518055, China (e-mail: liangyc@ieee.org).
}
}

\maketitle

\begin{abstract}
In space-air-ground integrated networks (SAGINs), cognitive spectrum sharing has been regarded as a promising solution to meet the rapidly increasing spectrum demand of various applications, because it can significantly improve the spectrum efficiency by enabling a secondary network to access the spectrum of a primary network. 
However, different networks in SAGIN may have different quality of service (QoS) requirements, which can not be well satisfied with the traditional cognitive spectrum sharing architecture. 
For example, the aerial network typically has high QoS requirements, which however may not be met when it acts as a secondary network.
To address this issue, in this paper, we propose a hierarchical cognitive spectrum sharing architecture (HCSSA) for SAGINs, where the secondary networks are divided into a preferential one and an ordinary one.
% consisting of a satellite network, an aerial network, and a terrestrial network. 
% the aerial network and the terrestrial network
Specifically, the aerial and terrestrial networks can access the spectrum of the satellite network under the condition that the caused interference to the satellite terminal is below a certain threshold. 
Besides, considering that the aerial network has a higher priority than the terrestrial network, we aim to use a rate constraint to ensure the performance of the aerial network.
% and its performance is ensured by a rate constraint. 
Subject to these two constraints, we consider a sum-rate maximization for the terrestrial network by jointly optimizing the transmit beamforming vectors of the aerial and terrestrial base stations. 
To solve this non-convex problem, we propose a penalty-based iterative beamforming (PIBF) scheme that uses the penalty method and the successive convex approximation technique.
% near-optimal
Moreover, we also develop three low-complexity schemes, where the beamforming vectors are obtained by optimizing the normalized beamforming vectors and power control. 
% Finally, we provide extensive numerical simulations to show the effectiveness of the proposed penalty-based iterative optimization scheme and the low-complexity schemes.
Finally, we provide extensive numerical simulations to compare the performance of the proposed PIBF scheme and the low-complexity schemes.
The results also demonstrate the advantages of the proposed HCSSA compared with the traditional cognitive spectrum sharing architecture.

\end{abstract}

\begin{IEEEkeywords}
    beamforming, cognitive spectrum sharing, space-air-ground integrated network (SAGIN)
\end{IEEEkeywords}

\section{Introduction}
\label{sec_introduction}
In recent years, space-air-ground integrated networks (SAGINs) have been extensively studied to provide diverse communication services \cite{liu2018space}. 
The terrestrial network, such as the the fifth-generation (5G) wireless communication network, can provide high data rate services, but with limited coverage. 
The satellite networks, which consist of satellites and terrestrial infrastructures, can provide seamless connectivity to remote areas, but with a relatively low transmission rate.
% the transmission data rate is relatively low. 
% With the advantage of flexible deployment, 
The aerial network, which mainly includes a variety of aircrafts, such as unmanned aerial vehicles (UAV), airships, balloons, airplanes, and so on, is also thriving due to its advantage of flexible deployment \cite{zeng2019accessing}. 
% These aircrafts can be used as aerial base stations (BSs) to serve hot spot areas, as aerial relays to improve coverage, or as mobile users to communicate with the aerial BS \cite{zeng2019accessing}. 
% To make full use of resources of different networks through joint allocation, the three complementary networks above can be integrated to form a huge SAGIN consisting of numerous and various communication devices.
SAGIN, formed by the integration of these three complementary networks, is designed to accommodate a variety of services and applications with diverse communication requirements in various scenarios. 
It has received considerable attention from both academia and industry. 
However, the rapid emergence of diverse communication requirements makes it challenging to allocate limited spectrum resources to meet the demands.
% These three complementary networks integrate to form SAGIN for accommodating a variety of services and applications with diversified communication requirements in various scenarios.
% SAGIN has received considerable attention from both academia and industry.
% The rapid emergence of diversified communication requirements makes it challenging to allocate limited spectrum resources to meet the demands.
% containing numerous and various communication devices
% As a consequence, a rapid increase of diversified communication requirements emerges,
Dynamic spectrum management (DSM) is regarded as a promising solution as it provides an effective and flexible spectrum allocation method. In particular, the key enabling technology of
DSM is cognitive radio, which allows the secondary network to use the spectrum of the primary network \cite{liang2011cognitive}. 
When a primary network is active, its spectrum can be utilized by the secondary user under the condition that the caused interference to the primary user is below a certain threshold, a.k.a., the interference temperature limit \cite{liang2011cognitive}.

Cognitive spectrum sharing has been extensively studied in terrestrial networks \cite{yuan2020intelligent}, satellite-terrestrial networks \cite{liang2021realizing, sharma2013transmit, lin2019joint}, and aerial-terrestrial networks, such as UAV communications \cite{saleem2015integration} and air-to-ground (A2G) communications \cite{wang2010cognitive, jacob2016cognitive, zhang2014aeronautical}.
% dias2021integrating
% , where airplanes as aerial users are served by dedicated BSs on the ground \cite{zhang2014aeronautical, jacob2016cognitive}.
% Cognitive spectrum sharing has been extensively studied in the satellite-terrestrial network \cite{liang2021realizing} and in the aerial-terrestrial network.
% In order to expand the coverage of the terrestrial network, cognitive spectrum sharing has also been extensively studied in the satellite-terrestrial network \cite{liang2021realizing}.
% In the satellite-terrestrial network, cognitive spectrum sharing has also been extensively studied \cite{liang2021realizing}.
% Cognitive spectrum sharing has been extensively studied in integrated satellite-terrestrial networks (ISTN) \cite{liang2021realizing}.
% For the underlay mode of cognitive spectrum sharing, some existing studies realize interference management through beamforming \cite{sharma2015cognitive}.
% For the underlay mode of cognitive spectrum sharing, existing interference management schemes include power control, beamforming\cite{sharma2015cognitive}, interference alignment, geographical protection area and beamhopping.
% \cite{sharma2013transmit} studied the coexistence scenario of satellite downlink and terrestrial downlink, in which the terrestrial network, as a secondary network, uses the spectrum of the satellite network to transmit information.
In \cite{sharma2013transmit}, the signal-to-interference-plus-noise ratio (SINR) of the secondary network is maximized under the interference temperature constraint of the primary user by optimizing the beamforming of the terrestrial base station (BS).
% that acts as a secondary network transmitter.
In \cite{lin2019joint}, under the quality of service (QoS) constraints of satellite and cellular users, the beamforming and power allocation of the satellite and terrestrial networks are optimized to maximize the sum rate of the satellite-terrestrial network.
% In \cite{lin2019joint}, the sum rate of the satellite-terrestrial network is maximized by jointly optimizing the beamforming and power allocation of the satellite and terrestrial networks,  under the quality of service (QoS) constraints of satellite and cellular users.
% However, all the existing studies adopt the two-priority architecture.
% There is also some research about spectrum sharing in SAGIN.
% There are also some studies on spectrum sharing in SAGIN.
% quality of service (QoS)
% 
% 
% In addition, few existing studies on cognitive spectrum sharing consider aerial users.
% To ensure the QoS of aerial users, 
% To meet the increasing needs 
% In the aerial-terrestrial network, 
% Cognitive radio for UAV communications is studied in \cite{dias2021integrating}.
% For UAV communications, cognitive radio is studied in \cite{dias2021integrating}.
% Moreover, in aerial-terrestrial networks such as UAV communications, cognitive radio is also studied \cite{dias2021integrating}.
% Moreover, cognitive radio is also studied in UAV communications \cite{saleem2015integration, dias2021integrating}.
% To improve the communication performance of aerial users, dedicated BSs serving aerial users are considered in \cite{chen2022dedicating}.
% \cite{cherif2020downlink, chen2022dedicating}.
With the emergence of new applications, such as aircraft passenger communications, the need for high transmission rates for A2G communications is increasing, where airplanes as aerial users are served by dedicated BSs on the ground \cite{lin20215g, mozaffari2021toward}.
% For A2G communications, cognitive radio is also studied in \cite{wang2010cognitive, zhang2014aeronautical, jacob2016cognitive}.
% Cognitive radio for A2G communications is studied in \cite{wang2010cognitive, zhang2014aeronautical, jacob2016cognitive}.
In \cite{zhang2014aeronautical}, a central cognitive structure for broadband A2G communications is studied, in which the airplanes that transmit important data are treated as primary users and others as secondary users.
% In this structure, an interference cognition table is maintained to avoid interfering with the primary users.
% In A2G communications, airplanes as aerial users are served by dedicated BSs on the ground.
% The aerial network where airplanes as aerial users are served by dedicated BSs on the ground is called air-to-ground (A2G) communications \cite{lin20215g, mozaffari2021toward}. 
% In this paper, we focus on the aerial network where airplanes as aerial users are served by dedicated BSs on the ground, which is called air-to-ground (A2G) communications \cite{lin20215g, mozaffari2021toward}. 
% A2G communications can achieve greater rates and lower latency than satellite communications although the coverage of serving airplanes with A2G communications is smaller than that of satellite communications \cite{chen20215g}. 
% Although the coverage of serving airplanes with A2G communications is smaller than that of satellite communications, A2G communications can achieve greater rates and lower latency \cite{chen20215g}. 
In SAGIN, since airplanes suffer from interference from numerous terrestrial BSs and A2G BSs also interfere with other networks when A2G communications use the same spectrum as the satellite and terrestrial networks, interference control is needed to ensure their performance \cite{tadayon2016inflight, lin2021sky}.
% make them all work properly 

% 
% Considering the new communication needs brought by the aerial network, there are some cognitive spectrum sharing studies in SAGIN \cite{hua2019joint, pervez2021joint, liu2020cell, lei2021joint, guo2022multi, wang2019dynamic, liu2023resource, ruan2020cooperative}.
In SAGIN, cognitive spectrum sharing has also been extensively studied \cite{liu2020cell, lei2021joint, guo2022multi, wang2019dynamic, liu2023resource, hua2019joint, pervez2021joint}.
To expand the coverage of the terrestrial network and make up for the shortcomings of the satellite network, such as limited rate and large delay, UAV and the high altitude platform (HAP) are used as aerial BSs.
% 
% In \cite{liu2020cell, lei2021joint, guo2022multi}, the cognitive satellite-UAV network is studied, where the satellite network is the primary network and multiple UAVs as aerial BSs serve terrestrial users in the secondary network.
In \cite{liu2020cell, lei2021joint, guo2022multi}, the cognitive satellite-UAV network is studied, where multiple UAVs as aerial BSs serve their users by sharing the spectrum of the satellite network.
% In \cite{wang2018joint}, considering the cross-tier interference of SAGIN, the uplink resources allocation problem of UAV network is studied and a joint hovering altitude and power control scheme to maximize the sum rate of UAV network is proposed.
% 考虑到SAGIN的跨层干扰，文献AAA研究了无人机网络的上行链路资源分配问题，提出了一种联合悬停高度和功率控制方案。
% 具体来说，在无人机网络的用户服务质量约束、卫星网络的用户和地面蜂窝网络的用户所能承受的干扰约束等条件下，通过优化无人机网络中的用户上行传输功率、信道分配、无人机的悬停高度，从而最大化无人机网络的和速率。
% References \cite{liu2020cell}, \cite{lei2021joint}, and \cite{guo2022multi} focus on maximizing the data transmission efficiency, maximizing the sum secrecy rate, and minimizing the transmission latency of the secondary network, respectively.
Data transmission efficiency maximization, sum secrecy rate maximization, and transmission delay minimization in the secondary network are focused by \cite{liu2020cell}, \cite{lei2021joint}, and \cite{guo2022multi}, respectively.
% Specifically, in \cite{liu2020cell}, the data transmission efficiency of the UAV network is maximized by jointly allocating the subchannels, transmit power, and hovering time of UAVs.
% In \cite{lei2021joint}, a physical layer security scenario is considered and the sum secrecy rate of UAV users is maximized by jointly optimizing the power, and channel allocation for UAVs.
% % the UAV swarm.
% In \cite{guo2022multi}, the transmission latency for all the ground users is minimized by jointly optimizing the power allocation and trajectory of UAVs.}
% 
% 
% NOMA-enabled cognitive satellite-HAP-terrestrial network
% To cover the hot-spot area and realize a higher data rate, the high altitude platform (HAP) network is considered in \cite{wang2019dynamic, liu2023resource} and the cognitive satellite-HAP network is studied, in which the satellite network is the primary network and the HAP network is the secondary network, each with terrestrial users that need to be served.
In \cite{wang2019dynamic, liu2023resource}, the cognitive satellite-HAP network is studied, where the satellite network acts as the primary network and the HAP network acts as the secondary network to serve their respective ground users.
% The sum rate of secondary network is maximized by optimizing the channel allocation and the power allocation of a HAP.
A sum-rate maximization problem for the secondary network is formulated, where the channel allocation and the power allocation of a HAP are optimized.
% In particular, non-orthgonal multiple access (NOMA) is used to serve secondary users in \cite{liu2023resource}.
% , each with terrestrial users that need to be served.
% Similar system model has also been studied in \cite{wang2019dynamic}.
%  and a problem to maximize the sum rate of secondary network, while guaranteeing the QoS of the primary users is formulated.
% 
% (CoMP) 
% unmanned aerial vehicle (UAV)
In \cite{hua2019joint}, coordinated multi-point transmission in cognitive satellite-terrestrial networks assisted by a UAV is studied, in which the UAV and a terrestrial BS cooperate to serve terrestrial users in the secondary network by using the spectrum of the satellite network.
The BS/UAV transmission power allocation and UAV trajectory are jointly optimized to maximize the sum rate of secondary users.
% the satellite network is the primary network and
% Under interference temperature constraint of satellite network, the transmission power of BS and UAV is jointly optimized to maximize the sum rate of terrestrial users.
% 
In \cite{pervez2021joint}, a cognitive SAGIN is studied in which the satellite network is the primary network, while the vehicles on the ground are served by UAVs and a terrestrial BS in the secondary network.
The vehicle association, BS/UAV transmission power allocation, and UAV trajectory are jointly optimized to maximize the throughput of all vehicles.

In SAGIN, different networks in SAGIN are heterogeneous and have diverse needs.
% , which can not be well satisfied by 
However, most of the above studies adopt the traditional cognitive spectrum sharing architecture (TCSSA), which can not well satisfy the diverse needs.
For example, the airplane often has a high QoS requirement to support flight control as well as some emerging applications, such as in-flight entertainment and communications \cite{baltaci2021survey}. 
% For example, due to severe interference from many BSs of terrestrial network and poor channel quality, the aerial user (such as the airplane) often have QoS requirement. 
% 
In such cases, the QoS requirement may not be met in TCSSA since the performance of aerial users, who suffer from severe interference from terrestrial BSs, is not protected.
Keeping this in mind, the spectrum sharing structure of the Citizen Broadband Radio Service (CBRS) can be a promising solution \cite{sohul2015spectrum, ying2018sas}.
The CBRS includes three service tiers, i.e., incumbent access (IA), priority access (PA), and general authorized access (GAA).
% The three service tiers of CBRS are incumbent access (IA), priority access (PA), and general authorized access (GAA).
% The priority of these three tiers decreases in turn.
% Existing primary operations, including federal users and grandfa-thered fixed-satellite service (FSS) earth stations, would make up the incumbent access (IA) tier \cite{sohul2015spectrum}.
The incumbents have the highest priority and are protected from interference caused by the PA and GAA tiers.
PA users have QoS requirements and are authorized to operate with certain interference protection.
% PA users have QoS requirements and need to purchase a license to be protected.
GAA devices have the lowest priority, are not protected from interference, and should be avoided from interfering with incumbents and PA users.
% 
% The way CBRS allocates the 3.5GHz band has some characteristics \cite{sohul2015spectrum}.
% First, for the allocation of PA and GAA, the spectral position is not fixed.
% Second, PA will be allocated in blocks of 10MHz, but there is no fixed size for the channel used by GAA.
% Third, GAA can use unallocated or unused PAL channels on an opportunistic basis.
In particular, CBRS allows GAA to use unallocated or unused PA channels on an opportunistic basis \cite{sohul2015spectrum}.
Different from CBRS, in this paper, we focus on concurrent spectrum access, which allows each tier of the network to transmit on the same frequency band if the high-priority tier is well protected from the co-channel interference.
 
%  if the co-channel interference is tolerable for the tier with higher priority.
% and brings the co-channel interference problem.
% If the higher priority layer can tolerate the co-channel interference received
% Inspired by CBRS,
% The three-tier idea of CBRS can be developed to manage the spectrum in SAGIN.
% In the following, we develop the three-tier idea of CBRS to manage the spectrum in SAGIN.

% Motivated by the above reasons, in this paper, 
% we consider a cognitive spectrum sharing architecture with three priorities in SAGIN.
% we consider a cognitive spectrum sharing architecture with hierarchical secondary networks (CSSAHSN).
% In this paper, we propose a hierarchical cognitive spectrum sharing architecture (HCSSA) for SAGIN, where the secondary networks are hierarchical.
% the idea of CBRS,
Motivated by the above considerations, to satisfy the diverse needs of different networks, we propose a hierarchical cognitive spectrum sharing architecture (HCSSA) for SAGIN, in which the satellite network acts as the primary network, the aerial network acts as the preferential secondary network with a QoS requirement, and the terrestrial network acts as the ordinary secondary network.
% HCSSA adopts the underlay mode for spectrum sharing, in which the satellite network is the primary network, while the aerial network and the terrestrial network are the secondary networks.
% The secondary networks can use the spectrum of the primary network when the interference to the primary network is below a predetermined threshold.
% 
Specifically, the satellite network shares its spectrum with the aerial and terrestrial networks if the caused interference to the satellite terminal is below a threshold.
% Specifically, the primary network shares its spectrum with the secondary networks including the aerial network and the terrestrial network if the interference caused by the secondary networks to the satellite terminal is below a threshold.
% Specifically, the secondary networks including the aerial network and the terrestrial network are allowed to use the spectrum of the primary network if the interference caused by the secondary networks to the satellite terminal is below a threshold.
% is tolerable for the satellite terminal.
% Specifically, if the interference is tolerable, the satellite network allows the aerial network and the terrestrial network to use its spectrum.
% Specifically, if the interference is tolerable, the satellite network allows the secondary networks to use its spectrum.
% Besides, the aerial user requires its rate to be higher than a given threshold to ensure its performance, and this requirement is met preferentially.
% Besides, the aerial network requires the rate of its receiver to be higher than a given threshold to ensure its performance.
% 
% Besides, to ensure the QoS requirement of the aerial user preferentially, the rate of the aerial user is required to be higher than a given threshold in HCSSA.
% In addition, if this requirement is met, then the excess resources can be utilized to improve the sum rate of the terrestrial network.
% 
%  of the satellite terminal
Under the interference temperature constraint and the rate constraint of the aerial user, we aim to maximize the sum rate of the terrestrial network by jointly optimizing the beamforming vectors of the aerial and terrestrial networks.
% of the BSs of the secondary networks.
% while meeting the above constraints and the maximum transmit power constraints of the BSs of the secondary networks.
% 
Since the formulated problem is non-convex, we first transform it into a convex one by introducing auxiliary variables and performing first-order Taylor expansion.
Then, we propose a penalty-based iterative beamforming (PIBF) scheme by exploiting the penalty-based method and the successive convex approximation (SCA) technique.
%, where the convex problem is solved iteratively.
% The convex problem is solved iteratively.
% Finally, the rank-one solution of the original problem is recovered by the Gaussian randomization technique.
In order to improve the convergence, we propose an initialization algorithm for the PIBF scheme.
% Nevertheless, the complexity of the PIBF scheme is high because the optimization problem with many variables has to be solved in each iteration.
Nevertheless, due to the large number of optimization variables, the complexity of the PIBF scheme is quite high.
% of the optimization problem to be solved in each iteration
% 
% by decoupling variables
To overcome this, we also develop three alternative low-complexity beamforming schemes, where the beamforming vectors are obtained by optimizing the normalized beamforming vectors and power control, and the number of variables in the optimization problem is reduced.
% In particular, the zero-forcing-based scheme, one of the three schemes, has the lowest complexity since the interference is eliminated, no Taylor expansion is required and no iteration is required.}
% does not require iteration since the interference is eliminated and .}
% Because the interference is eliminated, no Taylor expansion is required and no iteration is required, the complexity of the scheme is the lowest.
% 
The main contributions of the paper are summarized as follows:
\begin{itemize}
\item To satisfy the diverse needs of different networks, we propose a hierarchical cognitive spectrum sharing architecture for SAGIN, where the aerial network acts as a secondary network and is protected by a QoS constraint.
The proposed architecture can not only meet the communication requirements of the aerial network preferentially under the condition of limited resources but also further exploit the underutilized resources of the aerial network to improve the performance of the terrestrial network.
% To the best of our knowledge, this is the first work to study the hierarchical cognitive spectrum sharing architecture for SAGIN.
% Particularly, by optimizing the beamforming of the secondary networks, we aim to maximize the sum rate of the terrestrial network while meeting the above constraints and the maximum transmit power constraints of the BSs of the secondary networks.
% The proposed architecture has advantages in the following two aspects.
% On the one hand, under the condition of limited resources, the communication requirements of the aerial network can be met first.
% % the new architecture can first meet the communication needs of high-priority networks.
% On the other hand, the underutilized resources of the aerial network can be further exploited to improve the performance of the terrestrial network.
\item We formulate the beamforming design problem to maximize the sum rate of the terrestrial network under the interference temperature constraint of the satellite terminal and the rate constraint of the aerial user.
We propose a PIBF scheme and three low-complexity schemes to solve the formulated non-convex problem.
% by exploiting the penalty method 
\item Finally, simulation results show that the PIBF scheme can obtain a higher sum rate of the terrestrial network than the three low-complexity schemes.
Moreover, the results also
demonstrate that when the aerial user has a high rate requirement, HCSSA can preferentially meet this requirement while TCSSA cannot.
% \item Finally, simulation results compare the performance of the PIBF scheme with that of the low-complexity schemes and illustrate the advantages of the proposed HCSSA compared with TCSSA.
% Finally, simulation results validate the superiority of the PIBF scheme compared with other low-complexity schemes and the advantages of HCSSA compared with TCSSA.
\end{itemize}

The rest of this paper is organized as follows:
Section \ref{secSystemModel} introduces the system model of SAGIN considered in this paper and formulates the hierarchical cognitive spectrum sharing problem.
% the sum rate maximization problem.
In Section \ref{sec_Proposed_Scheme}, a PIBF scheme is proposed to solve the formulated problem.
Section \ref{sec_Low_Com_Scheme} provides three low-complexity suboptimal schemes.
Section \ref{sec_Simulation_Results} shows the simulation results.
% , compares the performance of the PIBF scheme with that of the low-complexity schemes, and illustrates the advantages of HCSSA compared with TCSSA.
% validate the superiority of the PIBF scheme compared with the low-complexity schemes and the advantages of HCSSA compared with TCSSA.
% Finally, this paper is concluded in Section \ref{sec_Conclusions}.
Finally, Section \ref{sec_Conclusions} concludes this paper.
% Section IV develops an efficient algorithm to solve the formulated problem and 
% Section V provides two low-complexity suboptimal algorithms to solve it in two cases, respectively. 
% Section VI shows the simulation results to evaluate the performances of the proposed algorithms. 

% We formulate the cognitive spectrum sharing problem as an optimization problem to 
% maximize the sum rate of the terrestrial network 
% by optimizing the beamforming vector of the aerial network and the terrestrial network under 
% the constraint of the interference temperature of the satellite network, 
% the constraints of the transmit power of the base station (BS) of the aerial network and the terrestrial network and 
% the quality of service (QoS) constraint of the aerial network.

Notations used in this paper are listed as follows.
The lowercase, bold lowercase, and bold uppercase, i.e., $a$, ${\bf{a}}$, and ${\bf{A}}$ are scalar, vector, and matrix, respectively.
$\mathbb{C}^{a \times b}$ denotes the space of $a \times b$ complex-valued matrices.
${\bf{I}}_a$ denotes an identity matrix of size $a \times a$.
$| \cdot |$ denotes the absolute value. 
% $\exp( \cdot )$ denotes the exponential function with base $e$. 
${{\left\| {\bf{a}} \right\|}_2}$ denotes the $\ell_2$ norm of vector ${\bf{a}}$.
$(\cdot )^T$, $(\cdot )^H$, and ${\rm{Tr}}(\cdot )$ denote transpose, conjugate transpose, and trace, respectively.
% $(\cdot )^T$, $(\cdot )^H$, ${\rm{rank}}(\cdot )$, and ${\rm{Tr}}(\cdot )$ denote transpose, conjugate transpose, rank, and trace, respectively.
${\cal C}{\cal N} (\mu, \sigma^2)$ denote the complex Gaussian distribution with mean $\mu$ and variance $\sigma^2$.

\section{System Model and Problem Formulation}
\label{secSystemModel}

% \subsection{System Description}

\begin{figure}[t]
\centering
\includegraphics[width=8cm]{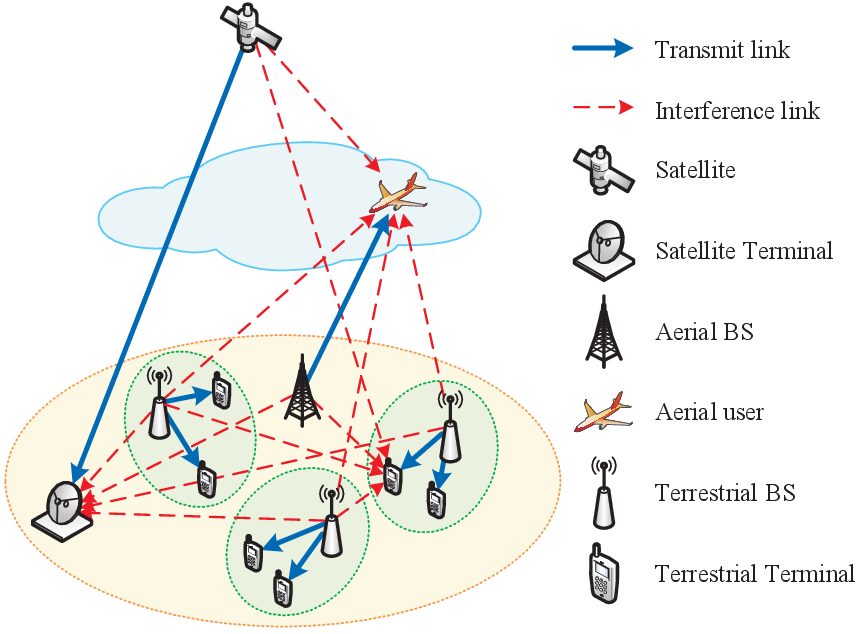}
%\vspace{-0.3cm}
% \caption{considered SAGIN, where the satellite network is downlink, the aerial network is downlink and the terrestrial network is downlink.}
\caption{Considered SAGIN, where the satellite network, the aerial network, and the terrestrial network all transmit downlink signals.}
% \caption{Considered SAGIN, where the satellite network is the primary network, the aerial network is the secondary network with high priority and the terrestrial network is the secondary network with low priority.}
% \caption{System Model.}
% \caption{considered AGIN, where the aerial network is downlink and the terrestrial network is downlink.}
\label{scenario}
\vspace{-0.4cm}
\end{figure}

% J. Wang, C. Jiang, Z. Wei, C. Pan, H. Zhang and Y. Ren, "Joint UAV Hovering Altitude and Power Control for Space-Air-Ground IoT Networks," in IEEE Internet of Things Journal, vol. 6, no. 2, pp. 1741-1753, April 2019, doi: 10.1109/JIOT.2018.2875493.
% 清华 姜春晓这篇引用超过150
% 这篇用的都是tier，搜索显示18处

% 

% In this paper, 
% As shown in Fig. \ref{scenario}, we consider a cognitive SAGIN consisted of three layers of networks, namely, satellite network, aerial network and terrestrial network, with their priorities decreasing in turn.
As shown in Fig. \ref{scenario}, we consider a cognitive SAGIN consisting of three-tier networks, namely, a satellite network, an aerial network, and a terrestrial network.
The aerial network acts as the preferential secondary network with a QoS requirement and the terrestrial network acts as the ordinary secondary network.
Both of them are able to access the spectrum of the satellite network.
Moreover, the interference from the two secondary networks should satisfy the interference temperature constraint of the primary satellite network.
% The satellite network is the primary network, while the aerial network and the terrestrial network are the secondary network.
% These three networks share the spectrum through HCSSA, where the satellite network is the primary network, the aerial network is the preferential secondary network, and the terrestrial network is the ordinary secondary network.
% If the interference from the secondary networks can be tolerated by the primary network, the secondary networks can use the spectrum of the primary network.
% 
% Besides, the aerial network has the QoS requirement.
% When the requirement is met, the aerial network does not seek to achieve higher rates.
% Thus, the excess resources can be used by the terrestrial network to improve the rate.

Specifically, in the satellite network, a geosynchronous Earth orbit (GEO) satellite with $M_S$ antennas serves a ground satellite terminal with a single antenna. 
% on the air
In the aerial network, an aerial BS on the ground with $M_A$ antennas serves an aerial user (such as a UAV, an airplane, and so on) with a single antenna\footnote{We mainly focus on illustrating the advantages of HCSSA, so assume that the aerial user is relatively stationary.
% the proposed three-priority spectrum sharing architecture
For cases where the aerial user is dynamic, the proposed architecture still works, but the beamforming vectors need to be updated periodically, which is beyond the scope of this paper.}. 
The terrestrial network is a multicell multi-user spatial multiplex system with $N$ cells.
%  and $K_n$ terrestrial terminals in the $n$-th ($n=1,\dots,N$) cell.
% with $M_G$ antennas at each terrestrial BS and a single antenna at each terrestrial terminal
In each cell, there is a terrestrial BS with $M_G$ antennas and $K_n$ ($n=1,\dots, N$) terrestrial terminals with a single antenna.
% Besides, each terrestrial terminal is equipped with a single antenna.
% 
% \footnote{This paper considers that there is one aerial user and several terrestrial cells in the scenario.
% This is because aerial users are usually interfered by signals from multiple terrestrial cells.
% When the aerial user is an airplane, there is usually only one aerial user within a few square kilometers.}. 
% 
% \cite{dahrouj2010coordinated}
% For ease of analysis, we assume $K_n=K, \forall n$.
% As shown in Fig. \ref{scenario}, there are 3 cells, and a unique BS in each cell serves 2 terrestrial terminals.
% 

% \begin{remark}
%     In this paper, the GEO satellite is considered, so the satellite and the ground remain relatively stationary.
%     Besides, the aerial user is considered as an airplane, which is usually interfered by signals from multiple terrestrial cells.
%     In addition, we consider that there is one aerial user and several terrestrial cells coexist in the scenario because there is usually only one airplane within a few square kilometers.
% \end{remark}
% % \newtheorem{remark}{\indent 注}[]
% % Note that 
% % \footnote{GEO satellite is considered, so the satellite and the ground remain relatively stationary.
% % The aerial user is considered to be airplane.}

In the following, we will introduce the channel model and the transmission model for the terrestrial terminals and the aerial user.
Then, we formulate a hierarchical cognitive spectrum sharing problem to jointly design the transmit beamforming vectors of the terrestrial and aerial BSs.
% In the following, we first introduce the channel model and then we introduce the transmission model and illustrate the achievable rates of the terrestrial terminals and the aerial user.
% Finally, we formulate a hierarchical cognitive spectrum sharing problem to jointly design the transmit beamforming vectors of the terrestrial BSs and the aerial BS.

\subsection{Channel Model}
% First of all, the notations of the channels are introduced as follows:
% three-tier network
% In this section, we first introduce the channels of the cognitive SAGIN and then present the corresponding channel models.
%   adopted by these channels.

The channels from the terrestrial BS in the $n$-th ($n=1,\dots, N$) cell to the satellite terminal, the aerial user, and the $l$-th ($l=1,\dots, K_m$) receiver in the $m$-th cell are denoted by ${\bf{h}}_{S,n} \in \mathbb{C}^{M_G \times 1}$, ${\bf{h}}_{A,n} \in \mathbb{C}^{M_G \times 1}$, and ${\bf{h}}_{n,m,l} \in \mathbb{C}^{M_G \times 1}$, respectively.
The channels from the aerial BS to the satellite terminal, the aerial user, and the $k$-th receiver in the $n$-th cell are denoted by ${\bf{g}}_{S} \in \mathbb{C}^{M_A \times 1}$, ${\bf{g}}_{A} \in \mathbb{C}^{M_A \times 1}$, and ${\bf{g}}_{n,k} \in \mathbb{C}^{M_A \times 1}$, respectively.
The channels from the satellite to the aerial user and the $k$-th receiver in the $n$-th cell are denoted by ${\bf{f}}_{A} \in \mathbb{C}^{M_S \times 1}$ and ${\bf{f}}_{n,k} \in \mathbb{C}^{M_S \times 1}$, respectively.

% ${\bf{h}}_{S,n}, {\bf{h}}_{A,n}, {\bf{h}}_{n,m,l} \in \mathbb{C}^{M_G \times 1}$ denote the channel vector from the BS of the $n$-th cell to the satellite terminal, the aerial user and the $k$-th receiver in the $n$-th cell, respectively.
% ${\bf{g}}_{S}, {\bf{g}}_{A}, {\bf{g}}_{n,k} \in \mathbb{C}^{M_A \times 1}$ denote the channel vector from the aerial BS to the satellite terminal, the aerial user and the $k$-th receiver in the $n$-th cell, respectively.
% ${\bf{f}}_{A}, {\bf{f}}_{n,k} \in \mathbb{C}^{M_S \times 1}$ denote the channel vector from the satellite to the aerial user and the $k$-th receiver in the $n$-th cell, respectively.

% \subsubsection{Other Links}
% We assume that the channels from the BS to the aerial user including ${\bf{h}}_{A,n}, {\bf{g}}_{A}$ follow Rician fading with line-of-sight (LoS) path-loss \cite{access2010further}, 
% the channels from the BS to the receiver on the ground including ${\bf{h}}_{S,n}, {\bf{h}}_{n,m,l}, {\bf{g}}_{S}, {\bf{g}}_{n,k}$ follow Rayleigh fading with non-line-of-sight (NLoS) path-loss \cite{access2010further}.

%, \forall n
% We assume that the channels from the BS to the aerial user including ${\bf{h}}_{A,n}, {\bf{g}}_{A}$ follow Rician fading: 
%  in the air
%, including ${\bf{h}}_{A,n}$ and ${\bf{g}}_{A}$,
As the aerial user is usually located high, the channels from the terrestrial and aerial BSs to the aerial user are usually dominated by the line-of-sight (LoS) component.
Thus, these channels are assumed to follow Rician fading, i.e.,
\begin{align}
\label{Rician_channel}
{\overline {\bf{h}}} = \sqrt {{L_{\rm{LoS}}^{-1}}\left( {d,f} \right)} \left( {\sqrt {\frac{\kappa }{{1 + \kappa }}} {\bf{h}}^{\rm{LoS}} + \sqrt {\frac{1}{{1 + \kappa }}} {\bf{h}}^{\rm{NLoS}} } \right),
\end{align}
% where $R_1 =\{(A,n), AA\}$, {\bf{\cal{H}}}=
with ${\overline {\bf{h}}} \in \{{\bf{h}}_{A,n}, {\bf{g}}_{A}\}$, 
where ${L_{\rm{LoS}}}\left( {d,f} \right)$, $\kappa$, ${\bf{h}}^{\rm{LoS}}$, and ${\bf{h}}^{\rm{NLoS}}$ denote the LoS path-loss, the Rician factor, the LoS components, and the non-line-of-sight (NLoS) components, respectively.
The LoS path-loss in ${\rm{dB}}$ are given by ${L_{\rm{LoS}}}\left( {d,f} \right) = 28 + 22{\log _{10}}\left( d \right) + 20{\log _{10}}\left( f \right)$, where $d$ is the distance between transmitter and receiver in ${\rm{m}}$ and $f$ is the carrier frequency in ${\rm{GHz}}$ \cite{access2010further}. 
% In the LoS path-loss ${L_{\rm{LoS}}}\left( {d,f} \right)$, $d$ is the distance between transmitter and receiver and $f$ is the carrier frequency \cite{access2010further}. 
% ${L_{LoS}}\left( {d,f} \right)$ denotes the LoS path-loss, where $d$ is the distance between transmitter and receiver and $f$ is the carrier frequency \cite{access2010further}. 
% ${L_{LoS}}\left( {d,f} \right)$ denotes the corresponding LoS path-loss with $d$ is the distance between transmitter and receiver and $f$ is the carrier frequency \cite{access2010further}. 
% ${\bf{h}}^{NLoS}$ denotes the NLoS components whose elements are chosen from ${\cal C}{\cal N} (0, 1)$.
The elements of the NLoS components ${\bf{h}}^{\rm{NLoS}}$ follow the Gaussian distribution, i.e., ${\cal C}{\cal N} (0, 1)$.

To illustrate the composition of ${\bf{h}}^{\rm{LoS}}$, we first define the steering vector as ${{\bf {a}}_{X}}\left(\vartheta\right) = \left[{ {1,{e^{j\frac {2\pi \overline d}{\lambda }\sin \vartheta }},\cdots,{e^{j\frac {2\pi \overline d}{\lambda }\left({{X - 1} }\right)\sin \vartheta }}} }\right]^{T}$, where
$\lambda$ is the carrier wavelength, 
$\overline d$ is the antenna element separation, 
$X$ is the dimension of the vector, and
$\vartheta $ is the angular parameter, i.e., the angle of departure (AoD) or angle of arrival (AoA).
% We set $\overline d/\lambda=1/2$ for simplicity.
% 
Then, the LoS components of ${\bf{h}}_{A,n}$ and ${\bf{g}}_{A}$ can be expressed as ${\bf{h}}^{\rm{LoS}} = {{\bf {a}}_{M_G}}\left(\vartheta_{A,n}\right)$ and ${\bf{h}}^{\rm{LoS}} = {{\bf {a}}_{M_A}}\left(\vartheta_{AA}\right)$, respectively, where $\vartheta_{A,n}$ and $\vartheta_{AA}$ is the AoA or AoD of the signal from the terrestrial BS in the $n$-th cell and the aerial BS to the aerial user, respectively.
Considering that the channels from the terrestrial and aerial BSs to the receiver on the ground are usually dominated by the NLoS components, these channels are assumed to follow Rayleigh fading, i.e.,
\begin{align}
\label{Rayleigh_channel}
\widetilde {\bf{h}} = \sqrt {{L_{\rm{NLoS}}^{-1}}\left( {d,f} \right)} {\bf{h}}^{\rm{NLoS}},
\end{align}
% where $R_2=\{(S,n), (l,n,k), AS, (A,n,k)\}$, 
with $\widetilde {\bf{h}} \in \{{\bf{h}}_{S,n}, {\bf{h}}_{n,m,l}, {\bf{g}}_{S}, {\bf{g}}_{n,k}\}$, 
% where ${L_{\rm{NLoS}}}\left( {d,f} \right)$ denotes the corresponding NLoS path-loss \cite{access2010further}.
where the NLoS path-losses in ${\rm{dB}}$ are given by ${L_{\rm{NLoS}}}\left( {d,f} \right) = 22.7 + 36.7{\log _{10}}\left( d \right) + 26{\log _{10}}\left( f \right)$ \cite{access2010further}. 
%, where $d$ is the distance between transmitter and receiver in meters and $f$ is the carrier frequency in ${\rm{GHz}}$
% ${\rm{m}}$

% \subsubsection{Satellite Links}
% In this section, we will introduce the channel model of satellite channels, which include the channel from the satellite to the aerial user and the $k$-th receiver in the $n$-th cell, i.e., ${\bf{f}}_{A} \in \mathbb{C}^{M_S \times 1}$ and ${\bf{f}}_{n,k} \in \mathbb{C}^{M_S \times 1}$, respectively.
% In this section, we will introduce the channel model of satellite channels including ${\bf{f}}_{A}$ and ${\bf{f}}_{n,k}$.

% Next, we will introduce the channel model of satellite channels including ${\bf{f}}_{A}$ and ${\bf{f}}_{n,k}$.
% It is assumed that the satellite link adopts the Shadowed-Rician fading model.

% \subsubsubsection{Channel Fading Distribution}
% Then, the channels for the satellite channels can be modeled as \cite{li2017resource}: 
%, including ${\bf{f}}_{A}$ and ${\bf{f}}_{n,k}$,
% from satellite to receiver
% The satellite channels
The channels from the satellite to the aerial user and the terrestrial terminals can be modeled as \cite{li2017resource}: 
\begin{align}
\label{Shadowed_Rician_channel}
{\overline {\bf{f}}} = \sqrt {{L_{\rm{LoS}}^{-1}}\left( {d,f} \right) b\left( \varphi  \right)} \widetilde {\bf{f}},
\end{align}
% where $R_3=\{(n,k), SA\}$, 
with ${\overline {\bf{f}}} \in \{{\bf{f}}_{A}, {\bf{f}}_{n,k}\}$, where $b\left( \varphi  \right)$ is the beam gain of the satellite and
$\widetilde {\bf{f}}$ is the channel fading vector that follows shadowed-Rician (SR) distribution \cite{abdi2003new, an2016secure}.
% The SR distribution is determined by three parameters, i.e., $(\Omega, b, m)$,
The SR distribution is parameterized by $(\Omega, b, m)$, where $\Omega$ is the average power of the LoS component, $2b$ is the average power of the scattering component, and $m$ is the Nakagami-$m$ parameter relating to the fading severity.
% $\widetilde {\bf{f}}$ is the channel fading vector including the LoS and the scattering components.
% The channel fading vector $\widetilde {\bf{f}}$ follows shadowed-Rician (SR) distribution \cite{abdi2003new, an2016secure}, i.e., $\widetilde {\bf{f}} \sim \mathrm {SR}\left ({ {\Omega ,b,m} }\right )$,
% where $\Omega$ is the average power of the LoS component, 
% $2b$ is the average power of the scattering component, and 
% $m$ is the Nakagami-$m$ parameter relating to the fading severity.
% 
The beam gain of the satellite can be expressed as \cite{zheng2012generic, li2017resource}: 
% $b\left( \varphi  \right)$ is the satellite beam gain
\begin{align}
\label{Beam_Gain}
b\left( \varphi  \right) = {b_{\max}}{\left( {\frac{{{J_1}\left( u \right)}}{{2u}} + 36\frac{{{J_3}\left( u \right)}}{{{u^3}}}} \right)^2},
\end{align}
where $u = 2.07123\sin \varphi /\sin \left( {{\varphi _{{\rm{3dB}}}}} \right)$, 
$\varphi$ is the angle between the location of the receiver and the beam center with respect to the satellite, 
$\varphi _{\rm{3dB}}$ is the 3-dB angle, 
$b_{\max}$ is the maximal satellite antenna gain \cite{li2018robust}, 
$J_1(\cdot)$ and $J_3(\cdot)$ represent the first-kind Bessel function of order 1 and 3, respectively.
% corresponding to the fading severity.
% multipath

%  , which can be expressed as: 
% \begin{align}
% \label{Shadowed_Rician_fading_model}
% \widetilde {\bf{f}} = A\exp \left( {j{\boldsymbol{\psi}} } \right) + Z\exp \left( {j\boldsymbol{\zeta} } \right)
% \end{align}
% where $\boldsymbol{\psi}$ is the stationary random phase vector with elements uniformly distributed over $[0,2\pi)$ , and 
% $\boldsymbol{\zeta}$ is the deterministic phase vector of the LoS component. 
% In addition, 
% $A$ and $Z$ are the amplitudes of the scattering and the LoS components.
% %, which are independent stationary random processes following Rayleigh and Nakagami-$m$ distributions, respectively. 
% The SR fading distribution can be characterized as , 

% Finally, 

% where ${L_{LoS}}\left( {d,f} \right)$ denote the corresponding LoS path-loss \cite{access2010further}.

\subsection{Transmission Model}

% $x_{n,k}$ denotes the information signal for the $k$-th receiver in the $n$-th cell.

% ${{\bf{w}}_{n,k}} \in \mathbb{C}^{M_G \times 1}$ denotes the beamforming vector for the $k$-th receiver in the $n$-th cell.

% $x_{A}$ denotes the information signal for the aerial user.

% ${\bf{v}} \in \mathbb{C}^{M_A \times 1}$ denotes the beamforming vector for the aerial user.

% $x_S$ denotes the information signal for the satellite.

% ${\bf{u}} \in \mathbb{C}^{M_S \times 1}$ denotes the beamforming vector for the satellite.

\subsubsection{Terrestrial Network}
Denote the information signal for the $k$-th receiver in the $n$-th cell, the aerial user, and the satellite terminal by $x_{n,k}$, $x_{A}$, and $x_S$, respectively.
Besides, the corresponding beamforming vectors are denoted by ${{\bf{w}}_{n,k}} \in \mathbb{C}^{M_G \times 1}$, ${\bf{v}} \in \mathbb{C}^{M_A \times 1}$, and ${\bf{u}} \in \mathbb{C}^{M_S \times 1}$, respectively.
% The received signal at the $k$-th receiver in the $n$-th cell, which is a summation of the intended signal, the intracell interference, the intercell interference, the aerial interference, the satellite interference, and the noise, can be expressed as:  
% The received signal at the $k$-th receiver in the $n$-th cell is a summation of the intended signal, the intracell interference, the intercell interference, the aerial interference, the satellite interference, and the noise, which can be expressed as:  
The received signal at the $k$-th receiver in the $n$-th cell can be written as:
\begin{align}
% \label{recsig_kth_user_in_nth_cell}
% {y_{n,k}} = &{\bf{h}}_{n,n,k}^H{{\bf{w}}_{n,k}}{x_{n,k}} + \sum_{(m,l) \ne (n,k)} {{\bf{h}}_{m,n,k}^H{{\bf{w}}_{m,l}}{x_{m,l}}}  \nonumber \\
% &+ {\bf{g}}_{n,k}^H{\bf{v}}{x_A} + {\bf{f}}_{n,k}^H{{\bf{u}}}{x_S} + {z_{n,k}}, \nonumber
{y_{n,k}} = &\underbrace {{\bf{h}}_{n,n,k}^H{{\bf{w}}_{n,k}}{x_{n,k}}}_{{\rm{intended\; signal}}} + \underbrace {\sum_{(m,l) \ne (n,k)} {{\bf{h}}_{m,n,k}^H{{\bf{w}}_{m,l}}{x_{m,l}}} }
_{{\rm{intracell\; and\; intercell\; interference}}}\nonumber \\
% _{\scriptstyle{\rm{intracell\; and }}\hfill\atop \scriptstyle{\rm{intercell\; interference}}\hfill}\nonumber \\
&+ \underbrace {{\bf{g}}_{n,k}^H{\bf{v}}{x_A}}_{{\rm{aerial\; interference}}} + \underbrace {{\bf{f}}_{n,k}^H{{\bf{u}}}{x_S}}_{{\rm{satellite\; interference}}} + {z_{n,k}}, \nonumber
\end{align}
% where $x_{n,k}$, $x_{A}$, and $x_S$ denote the information signal for the $k$-th receiver in the $n$-th cell, the aerial user, and the satellite, respectively.
% ${{\bf{w}}_{n,k}} \in \mathbb{C}^{M_G \times 1}$, ${\bf{v}} \in \mathbb{C}^{M_A \times 1}$, and ${\bf{u}} \in \mathbb{C}^{M_S \times 1}$ are the corresponding beamforming vector, respectively.
where $z_{n,k}\sim {\cal C}{\cal N} (0, \sigma _{n,k}^2)$ is the additive white Gaussian noise.
% with variance $\sigma _{n,k}^2/2$ on each of its real and imaginary components. 
Then, the SINR of the $k$-th receiver in the $n$-th cell for decoding $x_{n,k}$ can be expressed as: 
\begin{align}
\label{SINR_kth_user_in_nth_cell}
{\gamma _{n,k}} = \frac{{{{\left| {{\bf{h}}_{n,n,k}^H{{\bf{w}}_{n,k}}} \right|}^2}}}{{\overline \sigma _{n,k}^2 + {{\left| {{\bf{g}}_{n,k}^H{\bf{v}}} \right|}^2} + \sum_{(m,l) \ne (n,k)} {{{\left| {{\bf{h}}_{m,n,k}^H{{\bf{w}}_{m,l}}} \right|}^2}} }},
% {\gamma _{n,k}} = \frac{{{\bf{w}}_{n,k}^H{{\bf{h}}_{n,n,k}}{\bf{h}}_{n,n,k}^H{{\bf{w}}_{n,k}}}}{{\overline \sigma _{n,k}^2 + {\widetilde{\bf{C}}_{n,k}}}},
% {\gamma _{n,k}} = &{\bf{w}}_{n,k}^H{{\bf{h}}_{n,n,k}}{\bf{h}}_{n,n,k}^H{{\bf{w}}_{n,k}} / ( {\overline \sigma }^2_{n,k} + {\bf{v}}^H{{\bf{g}}_{n,k}}{\bf{g}}_{n,k}^H{\bf{v}} \nonumber \\
% &+ \sum_{(m,l) \ne (n,k)} {{\bf{w}}_{m,l}^H{{\bf{h}}_{m,n,k}}{\bf{h}}_{m,n,k}^H{{\bf{w}}_{m,l}}}  ), \nonumber
% {\gamma _{n,k}} = \frac{{{\bf{w}}_{n,k}^H{{\bf{h}}_{n,n,k}}{\bf{h}}_{n,n,k}^H{{\bf{w}}_{n,k}}}}{{{\overline \sigma }^2_{n,k} + {\bf{v}}^H{{\bf{g}}_{n,k}}{\bf{g}}_{n,k}^H{\bf{v}} + \sum_{(m,l) \ne (n,k)} {{\bf{w}}_{m,l}^H{{\bf{h}}_{m,n,k}}{\bf{h}}_{m,n,k}^H{{\bf{w}}_{m,l}}} }}
\end{align}
% where ${\widetilde{\bf{C}}_{n,k}} = \sum_{(m,l) \ne (n,k)} {{\bf{w}}_{m,l}^H{{\bf{h}}_{m,n,k}}{\bf{h}}_{m,n,k}^H{{\bf{w}}_{m,l}}} + {\bf{v}}^H{{\bf{g}}_{n,k}}{\bf{g}}_{n,k}^H{\bf{v}}$.
where ${\overline \sigma }^2_{n,k} = \sigma _{n,k}^2 + {\left| {\bf{f}}_{n,k}^H{\bf{u}} \right|^2}$ represents the sum of the satellite interference power and the noise power and is known to the receiver.
Therefore, the achievable rate of the $k$-th receiver in the $n$-th cell can be expressed as: 
% ${\log _2}\left( {1 + {\gamma _{n,k}}} \right)$.
\begin{align}
\label{rate_kth_user_in_nth_cell}
{R_{n,k}} = {\log _2}\left( {1 + {\gamma _{n,k}}} \right).
\end{align}

\subsubsection{Aerial Network}
The received signal at the aerial user can be expressed as:  
\begin{align}
% \label{recsig_aerial_user}
% {y_A} = {\bf{g}}_{A}^H{\bf{v}}{x_A} + \sum_{(m,l)} {{\bf{h}}_{A,m}^H{{\bf{w}}_{m,l}}{x_{m,l}}}  + {\bf{f}}_{A}^H{{\bf{u}}}{x_S} + {z_A}, \nonumber
{y_A} = \underbrace {{\bf{g}}_{A}^H{\bf{v}}{x_A}}_{{\rm{intended\; signal}}} + \underbrace {\sum_{(m,l)} {{\bf{h}}_{A,m}^H{{\bf{w}}_{m,l}}{x_{m,l}}} }_{{\rm{terrestrial\; interference}}} 
+ \underbrace {{\bf{f}}_{A}^H{{\bf{u}}}{x_S}}_{{\rm{satellite\atop interference}}} + {z_A}, \nonumber
% {y_A} = &\underbrace {{\bf{g}}_{A}^H{\bf{v}}{x_A}}_{{\rm{intended\; signal}}} + \underbrace {\sum_{(m,l)} {{\bf{h}}_{A,m}^H{{\bf{w}}_{m,l}}{x_{m,l}}} }_{{\rm{terrestrial\; interference}}} \nonumber \\
% &+ \underbrace {{\bf{f}}_{A}^H{{\bf{u}}}{x_S}}_{{\rm{satellite\; interference}}} + {z_A}, \nonumber
% {y_A} = &{\bf{g}}_{A}^H{\bf{v}}{x_A} + \sum_{(m,l)} {{\bf{h}}_{A,m}^H{{\bf{w}}_{m,l}}{x_{m,l}}}  \nonumber \\
% &+ {\bf{f}}_{A}^H{{\bf{u}}}{x_S} + {z_A}
\end{align}
where $z_{A}\sim {\cal C}{\cal N} (0, \sigma _A^2)$ is the additive white Gaussian noise.
Then, the SINR of the aerial user for decoding $x_{A}$ can be expressed as:  
\begin{equation}
% \label{SINR_aerial_user}
\beta  = \frac{{{{\left| {{\bf{g}}_A^H{\bf{v}}} \right|}^2}}}{{\overline \sigma _A^2 + \sum_{(m,l)} {{{\left| {{\bf{h}}_{A,m}^H{{\bf{w}}_{m,l}}} \right|}^2}} }},
% {\beta} = \frac{{{\bf{v}}^H{{\bf{g}}_{A}}{\bf{g}}_{A}^H{\bf{v}}}}{{{\overline \sigma }^2_A + 
% \sum_{(m,l)}{{\bf{w}}_{m,l}^H{{\bf{h}}_{A,m}}{\bf{h}}_{A,m}^H{{\bf{w}}_{m,l}}} }}, 
% \sum_{m = 1}^N {\sum_{l = 1}^{{K_n}} {{\bf{w}}_{m,l}^H{{\bf{h}}_{A,m}}{\bf{h}}_{A,m}^H{{\bf{w}}_{m,l}}} } }}
\end{equation}
where ${\overline \sigma }^2_A = \sigma _A^2 + {\left| {\bf{f}}_{A}^H{{\bf{u}}} \right|^2}$ represents the sum of the satellite interference power and the noise power.
Therefore, the achievable rate of the aerial user can be expressed as: 
% ${\log _2}\left( {1 + {\beta}} \right)$.
\begin{align}
% \label{SINR_kth_user_in_nth_cell}
{R_A} = {\log _2}\left( 1 + \beta \right).
\end{align}

% The interference temperature of the satellite terminal can be expressed as 
% $\sum_{(n,k)} {{\left| {{\bf{w}}_{n,k}^H{{\bf{h}}_{S,n}}} \right|}^2} + {\left| {{\bf{v}}^H{{\bf{g}}_{S}}} \right|^2}$.
% $\sum_{n = 1}^N {\sum_{k = 1}^{{K_n}} {{{\left| {{\bf{w}}_{n,k}^H{{\bf{h}}_{S,n}}} \right|}^2}} }  + {\left| {{\bf{v}}^H{{\bf{g}}_{S}}} \right|^2}$.

\subsection{Problem Formulation}
Due to the heterogeneity of different networks, their requirements may be different.
% In HCSSA, t
% The satellite network is the primary network in which the satellite terminal requires its interference from the aerial network and the terrestrial network to be below a threshold.
The satellite network requires the received interference from the aerial and the terrestrial networks to be below a threshold.
The aerial user requires its rate to be higher than a threshold due to its strict QoS requirement.
Inspired by these, we formulate a hierarchical cognitive spectrum sharing problem.
% We aim to protect the satellite network, guarantee the QoS of the aerial network, and maximize the sum rate of the terrestrial network.
Specifically, subject to the interference temperature constraint of the satellite terminal and the rate constraint of the aerial user, we aim to maximize the sum rate of the terrestrial network by jointly optimizing the transmit beamforming vectors of the terrestrial and aerial BSs, which is given by:  
\begin{subequations}
\label{Problem_sate_active}
\begin{align}
\label{Problem_sate_active_sum_rate_terr}
&\mathop{\max} _{ \left\{{{\bf{w}}_{n,k}}\right\}, {\bf{v}} } \;
% &\sum_{n = 1}^N {\sum_{k = 1}^{{K_n}} {{{\log }_2}\left( {1 + {\gamma _{n,k}}} \right)} } \\
\sum_{(n,k)} {R_{n,k}} \\
% &\sum_{(n,k)} {{{\log }_2}\left( {1 + {\gamma _{n,k}}} \right)} \\
&\;\;\;\;{\rm{s}}.{\rm{t}}.\;\;\;\;\;
\label{cons_interference_power_sate}
% &\sum_{n = 1}^N {\sum_{k = 1}^{{K_n}} {{{\left| {{\bf{w}}_{n,k}^H{{\bf{h}}_{S,n}}} \right|}^2}} }  + {\left| {{\bf{v}}^H{{\bf{g}}_{S}}} \right|^2} \le {\overline I _S}\\
\sum_{(n,k)}{{\left| {\bf{h}}_{S,n}^H {\bf{w}}_{n,k} \right|}^2}  + {\left| {\bf{g}}_{S}^H {\bf{v}} \right|^2} \le {\overline I _S},\\
\label{cons_transmit_power_terrestrial_BS}
&\;\;\;\;\;\;\;\;\;\;\;\;\;\;\sum_{k = 1}^{{K_n}} {{\bf{w}}_{n,k}^H{{\bf{w}}_{n,k}}}  \le {\overline p _n}, \forall n = 1,...,N,\\
\label{cons_transmit_power_aerial_BS}
&\;\;\;\;\;\;\;\;\;\;\;\;\;\;{\bf{v}}^H{\bf{v}} \le {\overline q},\\
% &\eqref{cons_interference_power_sate}-
% \eqref{cons_transmit_power_aerial_BS} \nonumber\\
\label{cons_SINR_aerial_user}
% &{\beta} \ge {\overline \beta}
&\;\;\;\;\;\;\;\;\;\;\;\;\;\;R_A \ge {\overline R_A},
% &{\log _2}\left( {1 + {\beta}} \right) \ge {\overline R_A},
\end{align}
\end{subequations}
% where ${\overline I _S}$ is the maximum interference temperature that the satellite terminal can tolerate, 
where ${\overline I _S}$ is the interference temperature of the satellite terminal, 
${\overline p _n}$ is the maximum transmit power of the terrestrial BS in the $n$-th cell, 
${\overline q}$ is the maximum transmit power of the aerial BS, and
${\overline R_A}$ is the minimum rate requirement for the aerial user.
% ${\overline \beta}$ is the minimum SINR requirement for aerial user.
% 
% By simply analyzing problem \eqref{Problem_sate_active}, we can come to some intuitive conclusions.
% In order to satisfy constraint \eqref{cons_SINR_aerial_user}, the rate of the aerial user needs to be increased, but this may cause constraints \eqref{cons_interference_power_sate} and \eqref{cons_transmit_power_aerial_BS} to be unsatisfied.
% At the same time, because of constraint \eqref{cons_interference_power_sate}, increasing the rate of the aerial user will reduce the sum rate of the terrestrial network, which is the objective function.

% On the one hand, if ${\overline I _S}$ increases or ${\overline p _n}, \forall n$ increases while other conditions remain unchanged, then the sum rate of the terrestrial network will increases.
% On the other hand, if ${\overline R_A}$ increases while other conditions remain unchanged, then the sum rate of the terrestrial network will decreases.

% Since the objective function \eqref{Problem_sate_active_sum_rate_terr} is in the form of a sum of logarithms, and the variables exist in the numerator and denominator of the fraction in the logarithmic function, the objective function \eqref{Problem_sate_active_sum_rate_terr} is non-convex.
The objective function \eqref{Problem_sate_active_sum_rate_terr} is in the form of a sum of logarithms of fractions, where the variables exist in both the numerator and the denominator, and thus \eqref{Problem_sate_active_sum_rate_terr} is non-convex.
Therefore, problem \eqref{Problem_sate_active} is non-convex and difficult to be solved directly.
% In the next section, we will present an algorithm to solve problem \eqref{Problem_sate_active} by transforming it into a convex problem and solving this problem iteratively.
%  by exploiting the SDR technique \cite{luo2010semidefinite}, introducing auxiliary variables, and performing first-order Taylor expansion.

% \footnote{Note that in some channel realizations, the PIBF scheme or other schemes cannot obtain feasible solutions, in which cases, the secondary network will not transmit signals.}

% In addition, it should be noted that because of the contradiction between constraint \eqref{cons_SINR_aerial_user} and constraints \eqref{cons_interference_power_sate}-\eqref{cons_transmit_power_aerial_BS}, in some practical channels, a feasible solution may not be obtained, in which case, the secondary network in the proposed architecture will not transmit signals.

% \section{Solution: Gaussian Randomization Scheme}
% Beamforming
\section{Penalty-Based Iterative Scheme}
\label{sec_Proposed_Scheme}
% In this section, a PIBF scheme is proposed to solve the original problem \eqref{Problem_sate_active}, followed by an algorithm for searching feasible initial points.
% In this section, in order to deal with the non-convex problem \eqref{Problem_sate_active}, we propose a PIBF scheme to obtain the suboptimal solution by iteratively solving a convex optimization problem, followed by an algorithm for searching feasible initial points.

In this section, in order to deal with the non-convex problem \eqref{Problem_sate_active}, we first transform the original problem into a convex one.
Then, we propose a PIBF scheme to solve the transformed problem iteratively.
Further, to search for feasible initial points, an initialization algorithm is proposed.
Finally, the convergence performance and computational complexity of the PIBF scheme are analyzed.

\subsection{Problem Transformation}
% In order to solve the non-convex optimization problem \eqref{Problem_sate_active}, we convert \eqref{Problem_sate_active} into \eqref{Problem_CVX_Gaussian_randomization}:
% In order to deal with the non-convex problem \eqref{Problem_sate_active}, we propose a PIBF scheme based on semidefinite relaxation (SDR) technique \cite{luo2010semidefinite}.
% In order to deal with the non-convex problem \eqref{Problem_sate_active}, we propose a PIBF scheme to obtain the suboptimal solution by iteratively solving a convex optimization problem.

Since problem \eqref{Problem_sate_active} is non-convex, we aim to transform it into a convex problem which is easy to handle.
Specifically, we first transform problem \eqref{Problem_sate_active} into an equivalent form by representing each variable in a corresponding matrix form, i.e., ${{\bf{W}}_{n,k}} = {{\bf{w}}_{n,k}}{\bf{w}}_{n,k}^H, \forall n,k$ and ${\bf{V}} = {\bf{v}}{\bf{v}}^H$, which needs to satisfy ${{\bf{W}}_{n,k}} \succeq 0, \forall n,k$, ${\bf{V}} \succeq 0$, ${\rm{rank}}\left( {{\bf{W}}_{n,k}} \right) = 1, \forall n,k$, and ${\rm{rank}}\left( {\bf{V}} \right) = 1$. 
The channels are also expressed in the corresponding matrix form: ${{\bf{H}}_{m,n,k}} = {{\bf{h}}_{m,n,k}}{\bf{h}}_{m,n,k}^H$, ${{\bf{H}}_{A,n}} = {{\bf{h}}_{A,n}}{\bf{h}}_{A,n}^H$, ${{\bf{H}}_{S,n}} = {{\bf{h}}_{S,n}}{\bf{h}}_{S,n}^H$, ${{\bf{G}}_{n,k}} = {{\bf{g}}_{n,k}}{\bf{g}}_{n,k}^H$, ${{\bf{G}}_{A}} = {{\bf{g}}_{A}}{\bf{g}}_{A}^H$, and ${{\bf{G}}_{S}} = {{\bf{g}}_{S}}{\bf{g}}_{S}^H$.

% For the convenience of presentation, 
For ease of notation, we denote the collection of all ${\bf{W}}_{n,k}$ by ${\bf{W}}$ and define $s_{n,k}\left( {\bf{V}}, {\bf{W}} \right) \triangleq  \ln \! \left({{\overline \sigma }_{n,k}^2} \!+ {\rm{Tr}}\left( {{\bf{G}}_{n,k}}{\bf{V}} \right) \!+ \sum_{(m,l)} {{\rm{Tr}}\!\left( {{\bf{H}}_{m,n,k}}{{\bf{W}}_{m,l}} \right)}\!\right)$, which is a concave function.
% where ${\bf{W}}$ represents the collection of ${\bf{W}}_{n,k}$.
% Thus, \eqref{rate_kth_user_in_nth_cell} can be rewritten as its equivalent form, i.e.,
By defining ${\alpha_{n,k}}\left( {\bf{V}}, {\bf{W}} \right) \triangleq \overline \sigma _{n,k}^2  + {\rm{Tr}}\left( {{{\bf{G}}_{n,k}}{\bf{V}}} \right) + \sum_{(m,l) \ne (n,k)} {{\rm{Tr}}\left( {{{\bf{H}}_{m,n,k}}{{\bf{W}}_{m,l}}} \right)}$, we have ${\log _2} \left( {1 + {\gamma _{n,k}}} \right) = {\log _2}(e) \left(s_{n,k} \left( {\bf{V}}, {\bf{W}} \right) -  \ln \left( {\alpha_{n,k}} \left( {\bf{V}}, {\bf{W}} \right) \right)\right)$.
Then, problem \eqref{Problem_sate_active} can be transformed into the following equivalent problem:
\begin{subequations}
\label{Problem_CVX_SDR}
\begin{align}
\label{Problem_CVX_objective}
% &\mathop {\max}_{\left\{ {{\bf{W}}_{n,k}} \succeq 0,{\bf{V}} \succeq 0 \right\}} \;
% {{\bf{W}}_{n,k}}, {\bf{V}}
&\mathop {\max}_{{\bf{V}}, \left\{ {{\bf{W}}_{n,k}}\right\} } \;
% &\sum_{(n,k)} {{{\log }_2}\left( {1 + {\gamma _{n,k}}} \right)} \\
{\log _2}(e) \! \! \sum_{(n,k)} \!\!  {\left( {{s_{n,k}}\left( {{\bf{V}},\!{\bf{W}}} \right) \! - \! \ln \left( {{\alpha_{n,k}}\!\left( {{\bf{V}},\!{\bf{W}}} \right)} \right)} \right)}\\
% &{\log _2}(e)\sum_{(n,k)} {\left( {{s_{n,k}}\left( {{\bf{V}},{\bf{W}}} \right) - \ln \left( {{\alpha_{n,k}}\left( {{\bf{V}},{\bf{W}}} \right)} \right)} \right)}\\
% &\sum_{(n,k)} {s_{n,k}\left( {\bf{V}}, {\bf{W}} \right)} - \sum_{(n,k)} \ln \left( {\alpha_{n,k}}\left( {\bf{V}}, {\bf{W}} \right) \right) \\
% &\sum_{(n,k)} {{{\log }_2}\left( 1 + \frac{{\rm{Tr}}\left( {{{\bf{H}}_{n,n,k}}{{\bf{W}}_{n,k}}} \right)}{{{\overline \sigma }^2}_{n,k} + C_{n,k}}\right)} \\
&\;\;\;\;\;{\rm{s}}.{\rm{t}}.\;\;\;\;
\label{Problem_CVX_cons_inter_temp}
% &\sum_{n = 1}^N {\sum_{k = 1}^{{K_n}} {{\rm{Tr}}\left( {{{\bf{H}}_{S,n}}{{\bf{W}}_{n,k}}} \right)} }  \nonumber\\
% &+ {\rm{Tr}}\left( {{{\bf{G}}_{S}}{\bf{V}}} \right) \le {{\overline I}_S}\\
\sum_{(n,k)} {{\rm{Tr}}\left( {{{\bf{H}}_{S,n}}{{\bf{W}}_{n,k}}} \right)} + {\rm{Tr}}\left( {{{\bf{G}}_{S}}{\bf{V}}} \right) \le {{\overline I}_S},\\
\label{Problem_CVX_cons_terres_BS}
&\;\;\;\;\;\;\;\;\;\;\;\;\;\;\sum_{k = 1}^{{K_n}} {{\rm{Tr}}\left( {\bf{W}}_{n,k} \right)}  \le {{\overline p}_n},\forall n,\\
\label{Problem_CVX_cons_aerial_BS}
&\;\;\;\;\;\;\;\;\;\;\;\;\;\;{\rm{Tr}}\left( {\bf{V}} \right) \le {\overline q},\\
\label{Problem_CVX_cons_SINR}
&\;\;\;\;\;\;\;\;\;\;\;\;\;\;{\overline \beta} \! \left( \! {\overline \sigma }^2_A \! + \! \sum_{(m,l)} \!{{\rm{Tr}}\!\left( {{{\bf{H}}_{A,m}}{{\bf{W}}_{m,l}}} \right)} \!\! \right) \!\! \le \! {\rm{Tr}} \! \left( {{\bf{G}}_{A}}{\bf{V}} \right)\!,\\
% &{\overline \beta}\left( {{\overline \sigma }^2_A + \sum_{(m,l)} {{\rm{Tr}}\left( {{{\bf{H}}_{A,m}}{{\bf{W}}_{m,l}}} \right)} } \right) \le {\rm{Tr}}\left( {{{\bf{G}}_{A}}{\bf{V}}} \right),\\
% &{\rm{Tr}}\left( {{{\bf{G}}_{A}}{\bf{V}}} \right) \ge {\overline \beta}\left( {{\overline \sigma }^2_A + \sum_{(m,l)} {{\rm{Tr}}\left( {{{\bf{H}}_{A,m}}{{\bf{W}}_{m,l}}} \right)} } \right)\\
\label{Problem_CVX_cons_semidefinite}
&\;\;\;\;\;\;\;\;\;\;\;\;\;\;{{\bf{W}}_{n,k}} \succeq 0, \forall n,k,{\bf{V}} \succeq 0,\\
\label{Problem_CVX_cons_rank1}
&\;\;\;\;\;\;\;\;\;\;\;\;\;\;{\rm{rank}}\left( {{\bf{W}}_{n,k}} \right) = 1, \forall n,k, {\rm{rank}}\left( {\bf{V}} \right) = 1,
\end{align}
\end{subequations}
where ${\overline \beta}$ is the SINR corresponding to ${\overline R_A}$, i.e., ${\overline R_A} = {\log _2}\left( {1 + {\overline \beta}} \right)$.
% where ${\bf{W}} = \left\{ {{\bf{W}}_{n,k}}, {\bf{V}} \right\}$, 
% where ${\gamma _{n,k}} = {\rm{Tr}}\left( {{{\bf{H}}_{n,n,k}}{{\bf{W}}_{n,k}}} \right)/({{\overline \sigma }^2}_{n,k} + C_{n,k})$ and
% where $C_{n,k} = {\rm{Tr}}\left( {{\bf{G}}_{n,k}}{\bf{V}} \right) + \sum_{(m,l) \ne (n,k)} {{\rm{Tr}}\left( {{\bf{H}}_{m,n,k}}{{\bf{W}}_{m,l}} \right)}$ and ${\overline \beta}$ is the SINR corresponding to ${\overline R_A}$, i.e., ${\overline R_A} = {\log _2}\left( {1 + {\overline \beta}} \right)$.

% ${\gamma _{n,k}} = {\rm{Tr}}\left( {{{\bf{H}}_{n,n,k}}{{\bf{W}}_{n,k}}} \right) / ( {{\overline \sigma }^2}_{n,k} + {\rm{Tr}}\left( {{\bf{G}}_{n,k}}{\bf{V}} \right) + \sum_{(m,l) \ne (n,k)} {{\rm{Tr}}\left( {{\bf{H}}_{m,n,k}}{{\bf{W}}_{m,l}} \right)} )$.
% ${\gamma _{n,k}} = \frac{{\rm{Tr}}\left( {{{\bf{H}}_{n,n,k}}{{\bf{W}}_{n,k}}} \right)} {{{\overline \sigma }^2}_{n,k} + {\rm{Tr}}\left( {{\bf{G}}_{n,k}}{\bf{V}} \right) + \sum_{(m,l) \ne (n,k)} {{\rm{Tr}}\left( {{\bf{H}}_{m,n,k}}{{\bf{W}}_{m,l}} \right)} }$.

% Since the second summation term in the objective function \eqref{Problem_CVX_objective} is concave, 
Since $\sum_{(n,k)} \ln \left( {\alpha_{n,k}}\left( {\bf{V}}, {\bf{W}} \right) \right)$ is concave, \eqref{Problem_CVX_objective} is non-concave.
% Since the objective function \eqref{Problem_CVX_objective} is still in the form of a sum of logarithms, it is non-convex.
%, which is similar to \eqref{Problem_sate_active_sum_rate_terr}
% Therefore, we transform \eqref{Problem_CVX_objective} into its convex lower bound by introducing auxiliary variables ${s_{n,k}}, \forall n,k$ and ${u_{n,k}}, \forall n,k$.
To deal with this, we aim to maximize \eqref{Problem_CVX_objective} by maximizing its concave lower bound.
% To deal with this difficulty, we intend to maximize the convex lower bound of \eqref{Problem_CVX_objective}, instead of \eqref{Problem_CVX_objective}.
To find a concave lower bound of \eqref{Problem_CVX_objective}, we intend to replace $\ln \left( {\alpha_{n,k}}\left( {\bf{V}}, {\bf{W}} \right) \right)$ with its upper bound.
% get the lower bound of \eqref{rate_kth_user_in_nth_cell} due to \eqref{Problem_CVX_objective} is the sum of \eqref{rate_kth_user_in_nth_cell}.
%, so we first aim to find a convex alternative function of \eqref{Problem_CVX_objective} as its lower bound.
% 
% Specifically, we first rewrite \eqref{rate_kth_user_in_nth_cell} in its equivalent form, i.e., ${\log _2}(e)s_{n,k}\left( {\bf{V}}, {\bf{W}} \right) - {\log _2}(e)\ln \left( \exp(s_{n,k}\left( {\bf{V}}, {\bf{W}} \right)) -  {\rm{Tr}}\left( {{{\bf{H}}_{n,n,k}}{{\bf{W}}_{n,k}}} \right)\right)$.
% ${\log _2}(e)s_{n,k}\left( {\bf{V}}, {\bf{W}} \right) - {\log _2}(e)\ln \left( {C_{n,k} + {{\overline \sigma }^2}_{n,k}} \right)$.
Then, we introduce auxiliary variables ${u_{n,k}}, \forall n,k$ to scale $\ln \left( {\alpha_{n,k}}\left( {\bf{V}}, {\bf{W}} \right) \right)$ up to $u_{n,k}$, which is given by:
% Then, we introduce auxiliary variables ${s_{n,k}}, \forall n,k$ and ${u_{n,k}}, \forall n,k$ to scale $\ln \left( {C_{n,k} + {\rm{Tr}}\left( {{{\bf{H}}_{n,n,k}}{{\bf{W}}_{n,k}}} \right) + {{\overline \sigma }^2}_{n,k}} \right)$ down to $s_{n,k}$ and scale $\ln \left( {C_{n,k} + {{\overline \sigma }^2}_{n,k}} \right)$ up to $u_{n,k}$, which is given by:
\begin{align}
\label{Problem_CVX_auxi_vari_v}
% C_{n,k} + {\overline \sigma }^2_{n,k}
{\alpha_{n,k}}\left( {\bf{V}}, {\bf{W}} \right) \le {e^{u_{n,k}}}, \forall n,k.
\end{align}
% \begin{subequations}
% \begin{align}
% \label{Problem_CVX_auxi_vari_u}
% &C_{n,k} + {\rm{Tr}}\left( {{{\bf{H}}_{n,n,k}}{{\bf{W}}_{n,k}}} \right) + {\overline \sigma }^2_{n,k} \ge {e^{{s_{n,k}\left( {\bf{V}}, {\bf{W}} \right)}}}, \forall n,k,\\
% \label{Problem_CVX_auxi_vari_v}
% &C_{n,k} + {\overline \sigma }^2_{n,k} \le {e^{u_{n,k}}}, \forall n,k.
% \end{align}
% \end{subequations}
% 
% 
% ${\log _2}\left( {1 + {\gamma _{n,k}}} \right) = {\log _2}\left( {1 + \frac{{{\rm{Tr}}\left( {{{\bf{H}}_{n,n,k}}{{\bf{W}}_{n,k}}} \right)}}{{{{\overline \sigma }^2}_{n,k} + C_{n,k}}}} \right) = {\log _2}(e)\ln \left( {C_{n,k} + {\rm{Tr}}\left( {{{\bf{H}}_{n,n,k}}{{\bf{W}}_{n,k}}} \right) + {{\overline \sigma }^2}_{n,k}} \right) - {\log _2}(e)\ln \left( {C_{n,k} + {{\overline \sigma }^2}_{n,k}} \right)$
% \begin{align}
% {\log _2}\left( {1 + {\gamma _{n,k}}} \right) 
% =& {\log _2}\left( {1 + \frac{{{\rm{Tr}}\left( {{{\bf{H}}_{n,n,k}}{{\bf{W}}_{n,k}}} \right)}}{{{{\overline \sigma }^2}_{n,k} + C_{n,k}}}} \right)\\
% =& {\log _2}(e)\ln \left( {C_{n,k} + {\rm{Tr}}\left( {{{\bf{H}}_{n,n,k}}{{\bf{W}}_{n,k}}} \right) + {{\overline \sigma }^2}_{n,k}} \right)\\
% &- {\log _2}(e)\ln \left( {C_{n,k} + {{\overline \sigma }^2}_{n,k}} \right)
% \end{align}
Consequently, we obtain a lower bound of \eqref{rate_kth_user_in_nth_cell}, i.e., ${{{\log }_2}\left( {1 + {\gamma _{n,k}}} \right)}  \ge {\log _2}(e)  {\left( {{s_{n,k}\left( {\bf{V}}, {\bf{W}} \right)} - {u_{n,k}}} \right)}$.
% 
% In order to convert \eqref{Problem_CVX_objective} to its convex lower bound, we introduce auxiliary variables ${s_{n,k}}, \forall n,k$ and ${u_{n,k}}, \forall n,k$ and use the exponential function to transform the log function with fraction into a linear form with the subtraction of auxiliary variables.
% Specifically, the numerator of $1 + {\gamma _{n,k}}$ is scaled down to ${e^{{s_{n,k}}}}$ and the denominator of $1 + {\gamma _{n,k}}$ is scaled up to ${e^{u_{n,k}}}$ by introducing two inequality constraints:
% % \begin{subequations}
% % \begin{align}
% % \label{Problem_CVX_auxi_vari_u}
% % &C_{n,k} + {\rm{Tr}}\left( {{{\bf{H}}_{n,n,k}}{{\bf{W}}_{n,k}}} \right) + {\overline \sigma }^2_{n,k} \ge {e^{{s_{n,k}}}}, \forall n,k,\\
% % \label{Problem_CVX_auxi_vari_v}
% % &C_{n,k} + {\overline \sigma }^2_{n,k} \le {e^{u_{n,k}}}, \forall n,k.
% % \end{align}
% % \end{subequations}
% % 
% Since ${\left( {{s_{n,k}} - {u_{n,k}}} \right)}$ is linear and therefore convex, we obtain the convex lower bound of ${{{\log }_2}\left( {1 + {\gamma _{n,k}}} \right)}$, i.e., ${{{\log }_2}\left( {1 + {\gamma _{n,k}}} \right)}  \ge {\log _2}(e)  {\left( {{s_{n,k}} - {u_{n,k}}} \right)}$.
Then, we obtain the lower bound of \eqref{Problem_CVX_objective}:
% Therefore, we can obtain the inequality:
\begin{align}
\label{Problem_objective_lower_bound}
\sum_{(n,k)} \!{{{\log }_2}\!\left( 1 + {\gamma _{n,k}} \right)}  \!\ge {\log _2}(e)\!\! \sum_{(n,k)}\! {\left( {{s_{n,k}\left( {\bf{V}}, {\bf{W}} \right)}\! -\! {u_{n,k}}} \right)}.
\end{align}
Since $s_{n,k}\left( {\bf{V}}, {\bf{W}} \right)$ is concave, the right side of \eqref{Problem_objective_lower_bound} is concave.
% due to ${\left( {{s_{n,k}\left( {\bf{V}}, {\bf{W}} \right)} - {u_{n,k}}} \right)}$ is linear.
% where the right side is the lower bound of the left side, which is linear and therefore convex.
% Note that the equation in \eqref{Problem_objective_lower_bound} holds if and only if all the equations in \eqref{Problem_CVX_auxi_vari_u} and \eqref{Problem_CVX_auxi_vari_v} hold.
% 
Thus, problem \eqref{Problem_CVX_SDR} is transformed into:
%  rewrite problem \eqref{Problem_CVX_SDR} as: 
% Secondly, since the objective function \eqref{Problem_CVX_objective} is non-convex, we transform \eqref{Problem_CVX_objective} into its lower bound by introducing auxiliary variables ${s_{n,k}}, \forall n,k$ and ${u_{n,k}}, \forall n,k$.
% Secondly, by introducing the auxiliary variables ${s_{n,k}}, \forall n,k$ and ${u_{n,k}}, \forall n,k$ into the objective function \eqref{Problem_CVX_objective}, we can rewrite problem \eqref{Problem_CVX_SDR} as: 
\begin{subequations}
\label{Problem_CVX_SDR_auxi_vari}
\begin{align}
\label{Problem_CVX_auxi_vari_objective}
% &\mathop {\max}_{\left\{ {{{\bf{W}}_{n,k}} \succeq 0,{\bf{V}} \succeq 0, {s_{n,k}}, {u_{n,k}}} \right\}} \;
&\mathop {\max}_{{\bf{V}}, \left\{ {{\bf{W}}_{n,k}}, {u_{n,k}} \right\}} \;
% &{\log }_2(e) \sum_{n = 1}^N {\sum_{k = 1}^{{K_n}} {\left( {{s_{n,k}} - {u_{n,k}}} \right)} } \\
{\log _2}(e) \sum_{(n,k)} \left( {{s_{n,k}\left( {\bf{V}}, {\bf{W}} \right)} - {u_{n,k}}} \right) \\
&\;\;\;\;\;\;\;\;{\rm{s}}.{\rm{t}}.\;\;\;\;\;\;\;\;\;
\eqref{Problem_CVX_cons_inter_temp} - \eqref{Problem_CVX_cons_rank1}, \eqref{Problem_CVX_auxi_vari_v}. \nonumber
% \label{Problem_CVX_auxi_vari_u}
% &C_{n,k} + {\rm{Tr}}\left( {{{\bf{H}}_{n,n,k}}{{\bf{W}}_{n,k}}} \right) + {\overline \sigma }^2_{n,k} \ge {e^{{s_{n,k}}}}, \forall n,k,\\
% \label{Problem_CVX_auxi_vari_v}
% &C_{n,k} + {\overline \sigma }^2_{n,k} \le {e^{u_{n,k}}}, \forall n,k,
\end{align}
\end{subequations}
Note that when the optimal solution of problem \eqref{Problem_CVX_SDR_auxi_vari} is obtained, \eqref{Problem_CVX_auxi_vari_objective} is equal to \eqref{Problem_CVX_objective}. 
% This is because if the left side of \eqref{Problem_CVX_auxi_vari_u} is greater than the right side, then $s_{n,k}$ will increase until both sides are equal.
% Similarly,
This is because if the left side of \eqref{Problem_CVX_auxi_vari_v} is less than the right side, then $u_{n,k}$ will decrease until both sides are equal.
If equalities in \eqref{Problem_CVX_auxi_vari_v} hold, then the equality in \eqref{Problem_objective_lower_bound} holds.
Then, \eqref{Problem_CVX_auxi_vari_objective} is equal to \eqref{Problem_CVX_objective}.
Besides, ${\log }_2(e)$ in the objective function is omitted for brevity in the following since it does not affect the solution of problem \eqref{Problem_CVX_SDR_auxi_vari}.

Nonetheless, the constraint \eqref{Problem_CVX_auxi_vari_v} is still non-convex due to the right side of inequality \eqref{Problem_CVX_auxi_vari_v} is an exponential function.
Besides, the rank-one constraint \eqref{Problem_CVX_cons_rank1} is also non-convex.
In the following, we will deal with the non-convex constraints \eqref{Problem_CVX_auxi_vari_v} and \eqref{Problem_CVX_cons_rank1}.
% in turn.

% convert the non-convex constraint \eqref{Problem_CVX_auxi_vari_v} into a convex one
To deal with the non-convex constraint \eqref{Problem_CVX_auxi_vari_v}, we apply the first-order Taylor expansion to ${e^{u_{n,k}}}, \forall n,k$ around ${u_{n,k}^{(t)}}, \forall n,k$ in the $t$ iteration and obtain:
% \begin{align}
% % \label{Problem_CVX_auxi_vari_v_Taylor0}
% {e^{u_{n,k}}} = {e^{u_{n,k}^{(t)}}}\left( 1 + {u_{n,k}} - u_{n,k}^{(t)} \right) + \partial, \nonumber
% % {e^{u_{n,k}}} &= {e^{u_{n,k}^{(t)}}} + {e^{u_{n,k}^{(t)}}}\left( {{u_{n,k}} - u_{n,k}^{(t)}} \right) + \partial  \nonumber \\
% % &= {e^{u_{n,k}^{(t)}}}\left( 1 + {u_{n,k}} - u_{n,k}^{(t)} \right) + \partial, \nonumber
% \end{align}
% where $\partial>0$. 
% Thus, the inequality can be obtained by:
\begin{align}
\label{Problem_CVX_auxi_vari_v_Taylor1}
{e^{u_{n,k}^{(t)}}}\left( {{u_{n,k}} - u_{n,k}^{(t)} + 1} \right) \le {e^{u_{n,k}}}, \forall n,k.
\end{align}
% 
% In order to satisfy constraint \eqref{Problem_CVX_auxi_vari_v}, we aim to make the \eqref{Problem_CVX_auxi_vari_v} tighter by replacing ${e^{u_{n,k}}}$ with ${e^{u_{n,k}^{(t)}}}\left( {{u_{n,k}} - u_{n,k}^{(t)} + 1} \right)$.
% Then, we obtain the transformed constraint:
By replacing ${e^{u_{n,k}}}$ with ${e^{u_{n,k}^{(t)}}}\left( {{u_{n,k}} - u_{n,k}^{(t)} + 1} \right)$ in constraint \eqref{Problem_CVX_auxi_vari_v}, we obtain a tighter convex constraint:
\begin{align}
\label{Problem_CVX_auxi_vari_v_Taylor}
{\alpha_{n,k}}\left( {\bf{V}}, {\bf{W}} \right) \le {e^{u_{n,k}^{(t)}}}\left( {{u_{n,k}} - u_{n,k}^{(t)} + 1} \right), \forall n,k.
\end{align}
% 
% Through the above transformation,
By replacing constraint \eqref{Problem_CVX_auxi_vari_v} with \eqref{Problem_CVX_auxi_vari_v_Taylor}, problem \eqref{Problem_CVX_SDR_auxi_vari} is transformed into:
% We obtain the following optimization problem:
\begin{subequations}
\label{Problem_CVX_transform}
\begin{align}
    \label{Problem_CVX_transform_obj}
&\mathop {\max}_{{\bf{V}}, \left\{ {{\bf{W}}_{n,k}}, {u_{n,k}} \right\}} \;
\sum_{(n,k)} \left( {{s_{n,k}\left( {\bf{V}}, {\bf{W}} \right)} - {u_{n,k}}} \right) \\
&\;\;\;\;\;\;\;\;{\rm{s}}.{\rm{t}}.\;\;\;\;\;\;\;\;\;
\eqref{Problem_CVX_cons_inter_temp} - \eqref{Problem_CVX_cons_rank1}, \eqref{Problem_CVX_auxi_vari_v_Taylor}. \nonumber
\end{align}
\end{subequations}
% where ${\log }_2(e)$ in the objective function is omitted for brevity due to it does not affect the solution of problem \eqref{Problem_CVX_transform}.
% Besides, ${\log }_2(e)$ in the objective function \eqref{Problem_CVX_auxi_vari_objective} does not affect the solution of problem \eqref{Problem_CVX_transform} and can be omitted for brevity.

% Moreover, the rank-one constraint \eqref{Problem_CVX_cons_rank1} can be rewritten as the following equivalent equality constraints \cite{jiang2019over}:
% % Inspired by \cite{lin2019joint, jiang2019over}, 
% \begin{align}
% \label{cons_rank1_equi}
% {\text{Tr}}\left( {\bf{W}}_{n,k} \right) = {\eta}\left( {\bf{W}}_{n,k} \right),\forall n,k, 
% {\text{Tr}}\left( {\bf{V}} \right) = {\eta}\left( {\bf{V}}\right).
% \end{align}
% where ${\eta}( \cdot )$ denotes the largest eigenvalue of a matrix.
% Note that since ${{\bf{W}}_{n,k}}, \forall n,k$ and ${\bf{V}}$ are positive semidefinite hermite matrices, we have ${\text{Tr}}\left( {\bf{W}}_{n,k} \right) \geq {\eta}\left( {\bf{W}}_{n,k} \right),\forall n,k$ and ${\text{Tr}}\left( {\bf{V}} \right) \geq {\eta}\left( {\bf{V}} \right)$.
% Besides, the equality ${\text{Tr}}\left( {\bf{X}} \right) = {\eta}\left( {\bf{X}} \right)$ holds if and only if ${\bf{X}}$ is a rank-one matrix.

Then, we deal with the non-convex constraint \eqref{Problem_CVX_cons_rank1}.
Specifically, since ${{\bf{W}}_{n,k}}, \forall n,k$ and ${\bf{V}}$ are positive semidefinite hermite matrices, we have ${\text{Tr}}\left( {\bf{W}}_{n,k} \right) \geq {\eta}\left( {\bf{W}}_{n,k} \right), \forall n,k$ and ${\text{Tr}}\left( {\bf{V}} \right) \geq {\eta}\left( {\bf{V}} \right)$, where ${\eta}( \cdot )$ denotes the largest eigenvalue of a matrix.
%, whose special case is \eqref{cons_rank1_equi}.
The equality ${\text{Tr}}\left( {\bf{X}} \right) = {\eta}\left( {\bf{X}} \right)$ holds if and only if ${\bf{X}}$ is a rank-one matrix.
% \in \{{\bf{h}}_{S,n}, {\bf{h}}_{n,m,l}, {\bf{g}}_{S}, {\bf{g}}_{n,k}\}
Thus, the rank-one constraint \eqref{Problem_CVX_cons_rank1} can be rewritten as the following equivalent equality constraints: 
% \cite{jiang2019over}:
% Inspired by \cite{lin2019joint, jiang2019over}, 
\begin{align}
\label{cons_rank1_equi}
{\text{Tr}}\left( {\bf{W}}_{n,k} \right) = {\eta}\left( {\bf{W}}_{n,k} \right),\forall n,k, {\text{and}}\;
{\text{Tr}}\left( {\bf{V}} \right) = {\eta}\left( {\bf{V}}\right).
\end{align}

% Unlike constraint \eqref{Problem_CVX_auxi_vari_v}, the constraint \eqref{cons_rank1_equi} consists of equations.
%, which makes the problem difficult to solve.
% In the following
To deal with equation constraints in \eqref{cons_rank1_equi}, we aim to use the penalty method \cite{ben1997penalty} to solve problem \eqref{Problem_CVX_transform}, where \eqref{cons_rank1_equi} is relaxed into a penalty term to be minimized.
% \cite{jorge2006numerical}
% Unlike constraint \eqref{Problem_CVX_auxi_vari_v}, the non-convex rank-one constraint \eqref{Problem_CVX_cons_rank1} consists of equations, so we transform it into a convex penalty term, and then use the penalty method \cite{jorge2006numerical} to solve problem \eqref{Problem_CVX_transform}.
% yang2019federated, lin2019joint, mu2021simultaneously

% In order to solve the non-convex optimization problem \eqref{Problem_sate_active}, we convert \eqref{Problem_CVX_Gaussian_randomization} into \eqref{Problem_CVX_penalty}:
In order to satisfy constraint \eqref{cons_rank1_equi} as much as possible, we aim to measure the violation of constraint \eqref{cons_rank1_equi} by defining a penalty term, which is given by $F \triangleq \sum_{(n,k)} {\left( {\text{Tr}}\left( {{{\bf{W}}_{n,k}}} \right) - {\eta}\left( {{{\bf{W}}_{n,k}}} \right) \right)}  + {\text{Tr}}\left( {\bf{V}} \right) - {\eta}\left( {\bf{V}} \right)$.
% 
% Then, we use the penalty method \cite{jorge2006numerical} to relax the constraint \eqref{cons_rank1_equi} into a penalty term, which measures the violation of constraint \eqref{cons_rank1_equi} and is given by $F = \sum_{(n,k)} {\left( {\text{Tr}}\left( {{{\bf{W}}_{n,k}}} \right) - {\eta}\left( {{{\bf{W}}_{n,k}}} \right) \right)}  + {\text{Tr}}\left( {\bf{V}} \right) - {\eta}\left( {\bf{V}} \right)$.
% 
% that we want to minimize, which 
% which is added to the objective function.
% Thus, we obtain the following optimization problem:
To minimize the penalty term $F$ and maximize \eqref{Problem_CVX_transform_obj},
we add $F$ to the objective function \eqref{Problem_CVX_transform_obj} and obtain the following optimization problem:
\begin{subequations}
\label{Problem_CVX_penalty_function1}
\begin{align}
\label{Problem_CVX_penalty_obj1}
&\mathop {\max}_{{\bf{V}}, \left\{ {{\bf{W}}_{n,k}}, {u_{n,k}} \right\}} \;
\sum_{(n,k)} \left( {{s_{n,k}\left( {\bf{V}}, {\bf{W}} \right)} - {u_{n,k}}} \right) - \xi F\\
&\;\;\;\;\;\;\;\;{\rm{s}}.{\rm{t}}.\;\;\;\;\;\;\;\;\;
\eqref{Problem_CVX_cons_inter_temp} - \eqref{Problem_CVX_cons_semidefinite}, \eqref{Problem_CVX_auxi_vari_v_Taylor}, \nonumber
\end{align}
\end{subequations}
% where $F = \sum_{(n,k)} {\left( {\text{Tr}}\left( {{{\bf{W}}_{n,k}}} \right) - {\eta}\left( {{{\bf{W}}_{n,k}}} \right) \right)}  + {\text{Tr}}\left( {\bf{V}} \right) - {\eta}\left( {\bf{V}} \right)$ is the penalty term, which measures the violation of constraint \eqref{cons_rank1_equi}, and 
where $\xi > 0$ is the penalty factor.
If $\xi$ is larger, then the penalty for violating the rank-one constraint \eqref{Problem_CVX_cons_rank1} is more severe.
% The larger $\xi$ is, the more severe the penalty for not meeting the rank-one constraint \eqref{Problem_CVX_cons_rank1}.
As $\xi$ approaches positive infinity, the solution of problem \eqref{Problem_CVX_penalty_function1} always satisfies constraint \eqref{Problem_CVX_cons_rank1}.
Thus, problem \eqref{Problem_CVX_penalty_function1} is equivalent to problem \eqref{Problem_CVX_transform} \cite{ben1997penalty}.
However, when $\xi$ is large, the objective function \eqref{Problem_CVX_penalty_obj1} is mainly affected by the penalty term, while the effect of the sum-rate term on \eqref{Problem_CVX_penalty_obj1} is negligible.
To avoid this, we first set $\xi$ to a small value, and then increase $\xi$ as the number of iterations increases until constraint \eqref{Problem_CVX_cons_rank1} is satisfied enough.
% \cite{mu2021simultaneously}.

Note that since the penalty term $F$ in \eqref{Problem_CVX_penalty_obj1} is non-convex, problem \eqref{Problem_CVX_penalty_function1} is still non-convex. 
Therefore, we employ SCA \cite{dinh2010local} to obtain a suboptimal solution of problem \eqref{Problem_CVX_penalty_function1}.
% However, we can obtain a suboptimal solution of problem \eqref{Problem_CVX_penalty_function1} by employing SCA \cite{dinh2010local}.
To maximize \eqref{Problem_CVX_penalty_obj1}, we intend to maximize its concave lower bound.
% To replace \eqref{Problem_CVX_penalty_obj1} with its convex lower bound, 
% To find a convex alternative function of \eqref{Problem_CVX_penalty_obj1} as its lower bound,
To find a concave lower bound of \eqref{Problem_CVX_penalty_obj1}, we aim to obtain the convex upper bound of the penalty term $F$.
% Firstly, we aim to obtain the convex upper bound of the penalty term $F$ so that \eqref{Problem_CVX_penalty_obj1} can be replaced by its convex lower bound.
% 
% Specifically, by performing the first-order Taylor expansion in the $t$ iteration on ${\eta}\left( {\bf{V}} \right)$ at the given point ${\bf{V}}^{(t)}$ and performing the first-order Taylor expansion on ${\eta}\left( {\bf{W}}_{n,k} \right)$ at the given point ${\bf{W}}_{n,k}^{(t)}$, we have:
Specifically, by performing the first-order Taylor expansion in the $t$ iteration on ${\eta}\left( {\bf{V}} \right)$ and ${\eta}\left( {\bf{W}}_{n,k} \right), \forall n, k$ at the given point ${\bf{V}}^{(t)}$ and ${\bf{W}}_{n,k}^{(t)}, \forall n, k$, respectively, we have:
% \begin{subequations}
% \label{penalty_lower_bound}
% \begin{align}
% {\eta}\left( {\bf{V}} \right) &\geq {\overline \eta}\left( {{\bf{V}};{{\bf{V}}^{(t)}}} \right), \\
% {\eta}\left( {{{\bf{W}}_{n,k}}} \right) &\geq {\overline \eta}\left( {{{\bf{W}}_{n,k}};{\bf{W}}_{n,k}^{(t)}} \right), \forall n,k, 
% \end{align}
% \end{subequations}
\begin{align}
\label{penalty_lower_bound}
{\eta}\left( \!{{{\bf{W}}_{n,k}}}\! \right) \! \geq \! {\overline \eta}\left( \!{{\bf{W}}_{n,k}};\!{\bf{W}}_{n,k}^{(t)}\! \right)\!, \forall n,\!k, {\text{and}}\;
{\eta}\left( \!{\bf{V}} \!\right) \! \geq \! {\overline \eta}\left( \!{\bf{V}};\!{{\bf{V}}^{(t)}} \!\right)\!, 
\end{align}
where ${\overline \eta} \left( {\bf{X}}; {{\bf{X}}^{(t)}} \right) \triangleq {\text{Tr}}\left( {\boldsymbol{\theta}} \left( {{\bf{X}}^{(t)}} \right){{\left( {{\boldsymbol{\theta}} \left( {{\bf{X}}^{(t)}} \right)} \right)}^H} \left( {{\bf{X}} - {\bf{X}}^{(t)}} \right) \right) + {\eta}\left( {{\bf{X}}^{(t)}} \right)$ and ${\boldsymbol{\theta}} ( \cdot )$ denotes the eigenvector corresponding to the largest eigenvalue of a matrix.

Therefore, given points ${\bf{V}}^{(t)}$ and ${\bf{W}}_{n,k}^{(t)}, \forall n,k$, by exploiting inequalities \eqref{penalty_lower_bound}, we obtain the convex upper bound of the penalty term $F$, which is given by ${\overline F}^{(t)} \triangleq \sum_{(n,k)} {\left( {{\text{Tr}}\left( {{{\bf{W}}_{n,k}}} \right) - {\overline \eta}\left( {{{\bf{W}}_{n,k}};{\bf{W}}_{n,k}^{(t)}} \right)} \right)} + {\text{Tr}}\left( {\bf{V}} \right) - {\overline \eta}\left( {{\bf{V}};{{\bf{V}}^{(t)}}} \right)$.
% the convex lower bound of the objective function can be obtained by exploiting inequalities \eqref{penalty_lower_bound1} and \eqref{penalty_lower_bound2}.
By replacing the non-concave objective function of problem \eqref{Problem_CVX_penalty_function1} with its concave lower bound, problem \eqref{Problem_CVX_penalty_function1} is transformed into a convex problem:
\begin{subequations}
\label{Problem_CVX_penalty}
\begin{align}
\label{Problem_CVX_penalty_obj}
&\mathop {\max}_{{\bf{V}}, \left\{ {{\bf{W}}_{n,k}}, {u_{n,k}} \right\}} \;
\sum_{(n,k)} \left( {{s_{n,k}\left( {\bf{V}}, {\bf{W}} \right)} - {u_{n,k}}} \right) - \xi {\overline F ^{(t)}}\\
&\;\;\;\;\;\;\;\;{\rm{s}}.{\rm{t}}.\;\;\;\;\;\;\;\;\;
\eqref{Problem_CVX_cons_inter_temp} - \eqref{Problem_CVX_cons_semidefinite}, \eqref{Problem_CVX_auxi_vari_v_Taylor}. \nonumber
\end{align}
\end{subequations}
This convex problem can be solved by the existing tools such as CVX \cite{grant2011cvx}.
Since both objective function and constraint \eqref{Problem_CVX_auxi_vari_v_Taylor} involve the Taylor expansion, we can update the points of Taylor expansion after solving problem \eqref{Problem_CVX_penalty} and repeat this process until the objective function converges.
% In the following, we will present the details of the PIBF scheme to solve problem \eqref{Problem_CVX_SDR}.

% where ${\overline F}^{(t)} = \sum_{(n,k)} {\left( {{\text{Tr}}\left( {{{\bf{W}}_{n,k}}} \right) - {\overline \eta}\left( {{{\bf{W}}_{n,k}};{\bf{W}}_{n,k}^{(t)}} \right)} \right)} + {\text{Tr}}\left( {\bf{V}} \right) - {\overline \eta}\left( {{\bf{V}};{{\bf{V}}^{(t)}}} \right)$ is the convex upper bound of the penalty term $F$.
%, which is abbreviated as ${\overline F ^{(t)}}$.
% Problem \eqref{Problem_CVX_penalty} is convex, and can be solved by the existing tools such as CVX \cite{grant2011cvx}.
% $F^{(t)} = {\eta _A}\left( {{\text{Tr}}\left( {\bf{V}} \right) - {\text{Tr}}\left( {{{\bf{v}}^{(t)}}{{\left( {{{\bf{v}}^{(t)}}} \right)}^H}{\bf{V}}} \right)} \right) + \sum_{(n,k)} {{\eta _{n,k}}\left( {{\text{Tr}}\left( {\bf{W}}_{n,k} \right) - {\text{Tr}}\left( {{\bf{w}}_{n,k}^{(t)}{{\left( {{\bf{w}}_{n,k}^{(t)}} \right)}^H}{{\bf{W}}_{n,k}}} \right)} \right)}$

% ${\overline F}\left( {{{\bf{W}}_{n,k}}, {\bf{V}};{\bf{W}}_{n,k}^{(t)}, {{\bf{V}}^{(t)}}} \right) = {\text{Tr}}\left( {\bf{V}} \right) - {\overline \eta}\left( {{\bf{V}};{{\bf{V}}^{(t)}}} \right) + \sum_{(n,k)} {\left( {{\text{Tr}}\left( {{{\bf{W}}_{n,k}}} \right) - {\overline \eta}\left( {{{\bf{W}}_{n,k}};{\bf{W}}_{n,k}^{(t)}} \right)} \right)}$

\subsection{Penalty-Based Iterative Beamforming Scheme}
The PIBF scheme for solving problem \eqref{Problem_CVX_SDR} consists of two loops: the outer loop and the inner loop.
% In this subsection, we propose a PIBF scheme containing two loops to solve problem \eqref{Problem_CVX_SDR}.
% problem \eqref{Problem_CVX_penalty} iteratively and recover the variables of problem \eqref{Problem_sate_active}.
% Specifically, the PIBF scheme consists of two loops.
In each outer loop, the penalty factor is updated by $\xi=\omega \xi$, where $\omega > 1$. 
The outer loop terminates when the violation of constraint \eqref{cons_rank1_equi}, i.e., the penalty term $F$, is less than a threshold.
In each inner loop, given the penalty factor, we iteratively solve problem \eqref{Problem_CVX_penalty} and update the points of Taylor expansion ${u_{n,k}^{(t)}}, \forall n,k$, ${\bf{V}}^{(t)}$, and ${\bf{W}}_{n,k}^{(t)}, \forall n,k$.
%  after solving problem \eqref{Problem_CVX_penalty} until 
% 
% Since Taylor expansion was used in transforming problem \eqref{Problem_sate_active} to problem \eqref{Problem_CVX_penalty}, we need to update the points of Taylor expansion ${u_{n,k}^{(t)}}, \forall n,k$, ${\bf{V}}^{(t)}$, and ${\bf{W}}_{n,k}^{(t)}, \forall n,k$ after solving problem \eqref{Problem_CVX_penalty}, and repeat this process until the objective function of problem \eqref{Problem_CVX_penalty} converges.
% After this, the optimal beamforming matrices ${\bf{V}}^*$ and ${\bf{W}}_{n,k}^*, \forall n,k$ can be obtained.
% 
Specifically, in the $t$-th inner loop, after solving problem \eqref{Problem_CVX_penalty}, we update ${\bf{V}}^{(t+1)}$ and ${\bf{W}}_{n,k}^{(t+1)}, \forall n,k$ with the solution of problem \eqref{Problem_CVX_penalty}.
Then, we update $u_{n,k}^{(t+1)}, \forall n,k$ by making the equality in \eqref{Problem_CVX_auxi_vari_v} holds, i.e.,
% we can update $u_{n,k}^{(t)}, \forall n,k$ with the solution of problem \eqref{Problem_CVX_penalty}, i.e.,
% \begin{align}
% \label{update_v_t}
% u_{n,k}^{(t)} = 
% &\ln ( {\overline \sigma }^2_{n,k} + {\rm{Tr}} \left( {{\bf{G}}_{n,k}}{\bf{V}} \right) + \nonumber\\
% &\sum_{(m,l) \ne (n,k)} {{\rm{Tr}}\left( {{\bf{H}}_{m,n,k}}{{\bf{W}}_{m,l}} \right)}  ).
% \end{align}
\begin{align}
\label{update_v_t}
u_{n,k}^{(t+1)} = \ln \left( {\alpha_{n,k}}\left( {\bf{V}}^{(t+1)}, {\bf{W}}^{(t+1)} \right) \right),
% \ln ( {\rm{Tr}}\left( {{\bf{G}}_{n,k}}{\bf{V}}^{(t)} \right) &+ \sum_{(m,l)\ne (n,k)} {{\rm{Tr}}\left( {{\bf{H}}_{m,n,k}}{{\bf{W}}_{m,l}^{(t)}} \right)}  + {{\overline \sigma }^2}_{n,k}  ).
\end{align}
where ${\bf{W}}^{(t)}$ represents the collection of all ${\bf{W}}_{n,k}^{(t)}$.
Updating $u_{n,k}^{(t+1)}, \forall n,k$ by \eqref{update_v_t} ensures convergence of the inner loop, which is described in detail in Section \ref{subsubsec_Convergence}.
The inner loop terminates when the objective function of problem \eqref{Problem_CVX_penalty} converges.
            
After the outer loop terminates, we obtain the beamforming matrices ${\bf{V}}^*$ and ${\bf{W}}_{n,k}^*, \forall n,k$, which are approximately rank-one matrices.
% Therefore, we can recover ${\bf{v}}^*$ and ${\bf{w}}_{n,k}^*, \forall n,k$ by performing the singular value decomposition (SVD) for ${\bf{V}}^*$ and ${\bf{W}}_{n,k}^*, \forall n,k$, respectively.
Thus, after performing singular value decomposition (SVD) for ${\bf{V}}^*$ and ${\bf{W}}_{n,k}^*, \forall n,k$, we can recover ${\bf{v}}^*$ and ${\bf{w}}_{n,k}^*, \forall n,k$ by ${\bf{v}}^* = \sqrt {{\eta}\left( {{{\bf{V}}^{*}}} \right)} {\boldsymbol{\theta}} \left( {{{\bf{V}}^{*}}} \right)$ and ${{\bf{w}}_{n,k}^*} \!=\! \sqrt {{\eta}\left( {{\bf{W}}_{n,k}^{*}} \right)} {\boldsymbol{\theta}} \!\left( {{\bf{W}}_{n,k}^{*}} \right)\!, \forall n,k$, respectively.
% Specifically, we calculate ${{\bf{V}}^{*}} = {\eta}\left( {{{\bf{V}}^{*}}} \right){\boldsymbol{\theta}} \left( {{{\bf{V}}^{*}}} \right){\left( {{\boldsymbol{\theta}} \left( {{{\bf{V}}^{*}}} \right)} \right)^H}$ and ${\bf{W}}_{n,k}^{*} = {\eta}\left( {{\bf{W}}_{n,k}^{*}} \right){\boldsymbol{\theta}} \left( {{\bf{W}}_{n,k}^{*}} \right){\left( {{\boldsymbol{\theta}} \left( {{\bf{W}}_{n,k}^{*}} \right)} \right)^H}$.
% Thus, we have ${\bf{v}}^* = \sqrt {{\eta}\left( {{{\bf{V}}^{*}}} \right)} {\boldsymbol{\theta}} \left( {{{\bf{V}}^{*}}} \right)$ and ${{\bf{w}}_{n,k}^*} = \sqrt {{\eta}\left( {{\bf{W}}_{n,k}^{*}} \right)} {\boldsymbol{\theta}} \left( {{\bf{W}}_{n,k}^{*}} \right), \forall n,k$.
The PIBF scheme to solve problem \eqref{Problem_sate_active} is summarized in Algorithm \ref{algorithm_penalty_IBF_LYW}.

% By solving problem \eqref{Problem_CVX_Gaussian_randomization} and update $u_{n,k}^{(t)}, \forall n,k$ iteratively until the auxiliary variable $u_{n,k}^{(t)}, \forall n,k$ converges, the optimal beamforming matrices ${\bf{V}}^*$ and ${\bf{W}}_{n,k}^*, \forall n,k$ can be obtained. 

% \section{Iterative Penalty Function Scheme}
% \section{Penalty-Based Iterative Beamforming Scheme}
% The Penalty-Based Iterative Scheme

% &\sum_{n = 0}^N {\sum_{k = 1}^{{K_n}} {{\rm{Tr}}\left( {{{\bf{H}}_{S,n}}{{\bf{W}}_{n,k}}} \right)} }  \le {{\overline I}_S}\\
% &\sum_{k = 1}^{{K_n}} {{\rm{Tr}}\left( {\bf{W}}_{n,k} \right)}  \le {{\overline p}_n},\forall n\\
% &{{\overline \gamma }_{n,k}}\left( {{\overline \sigma }^2}_{n,k} + \sum_{(m,l) \ne (n,k)} {{\rm{Tr}}\left( {{\bf{H}}_{m,n,k}}{{\bf{W}}_{m,l}} \right)}  \right) \le \nonumber\\
% &{\rm{Tr}}\left( {{{\bf{H}}_{n,n,k}}{{\bf{W}}_{n,k}}} \right), \forall n,k\\
% &{{\overline \sigma }^2}_{n,k} + \sum_{(m,l)} {{\rm{Tr}}\left( {{\bf{H}}_{m,n,k}}{{\bf{W}}_{m,l}} \right)}  \ge {e^{u_{n,k}}}, \forall n,k\\
% &{{\overline \sigma }^2}_{n,k} + \sum_{(m,l) \ne (n,k)} {{\rm{Tr}}\left( {{\bf{H}}_{m,n,k}}{{\bf{W}}_{m,l}} \right)}  \le \nonumber\\
% &{e^{v_{n,k}^{\left( {{t_2}} \right)}}}\left( {{v_{n,k}} - v_{n,k}^{\left( {{t_2}} \right)} + 1} \right), \forall n,k

\begin{algorithm}
\caption{Penalty-Based Iterative Beamforming Scheme}
\label{algorithm_penalty_IBF_LYW}
\begin{algorithmic}[1]
        %\REQUIRE ${\bf{Y}}$
    \STATE {Set $\varepsilon_1>0$, $\varepsilon _2>0$, the maximum number of iterations ${t_{\max}}$, the penalty factor $\xi>0$, and $\omega>1$.}
    \STATE {Initialize the algorithm with feasible points ${{\bf{V}}^{\left( 0 \right)}}$, ${\bf{W}}_{n,k}^{(0)}, \forall n,k$, and $u_{n,k}^{\left( 0 \right)}, \forall n,k$.}
    
    \REPEAT 
    \STATE {Set the iteration number $t=0$ and $\phi^{(0)}=-\infty$.}
    \REPEAT 
    \STATE {Given ${{\bf{V}}^{(t)}}$, ${\bf{W}}_{n,k}^{(t)}, \forall n,k$, and ${u_{n,k}^{(t)}}, \forall n,k$, solve problem \eqref{Problem_CVX_penalty} by CVX and obtain the value of \eqref{Problem_CVX_penalty_obj}, i.e., $\phi^{(t+1)}$ and the solution ${\bf{V}}$, ${\bf{W}}_{n,k}, \forall n,k$, and ${u_{n,k}}, \forall n,k$.}
    \STATE {Set ${{\bf{V}}^{(t+1)}} = {\bf{V}}$, ${\bf{W}}_{n,k}^{(t+1)} = {\bf{W}}_{n,k}, \forall n,k$.}
    \STATE {Update $u_{n,k}^{(t+1)}, \forall n,k$ with ${{\bf{V}}^{(t+1)}}$ and ${\bf{W}}_{n,k}^{(t+1)}, \forall n,k$ by \eqref{update_v_t}.}

    \STATE {$t=t+1$.}
    \UNTIL {$|\phi^{(t)}-\phi^{(t-1)}| \le \varepsilon_1$ or $t = {t_{\max}}$.}
    % the fractional decrease of the objective function value
    % \UNTIL {The difference in the value of the objective function of the two adjacent loops is less than the predefined threshold $\varepsilon_1>0$ or $t = {t_{\max}}$.}
    \STATE Set $\xi=\omega \xi$, ${{\bf{V}}^{(0)}}={{\bf{V}}^{(t)}}$, ${\bf{W}}_{n,k}^{(0)}={\bf{W}}_{n,k}^{(t)}, \forall n,k$, and ${u_{n,k}^{(0)}} = {u_{n,k}^{(t)}}, \forall n,k$.
    % \STATE {Set ${{\bf{V}}^{(0)}}={{\bf{V}}^{(t)}}$, ${\bf{W}}_{n,k}^{(0)}={\bf{W}}_{n,k}^{(t)}, \forall n,k$, and ${u_{n,k}^{(0)}} = {u_{n,k}^{(t)}}, \forall n,k$.}
     \STATE Calculate $F$ with ${{\bf{V}}^{(t)}}$ and ${\bf{W}}_{n,k}^{(t)}, \forall n,k$.
   \UNTIL {The violation of constraint \eqref{cons_rank1_equi} is below the predefined threshold $\varepsilon _2$, i.e., $F  < \varepsilon _2$.}
    % \UNTIL {The violation of constraint \eqref{cons_rank1_equi} is below the predefined threshold ${\varepsilon _2}>0$, i.e., $\left| {{\text{Tr}}\left( {{{\bf{V}}^{({t})}}} \right) - {\eta}\left( {{{\bf{V}}^{({t})}}} \right)} \right| + \sum_{(n,k)} {\left| {{\rm{Tr}}\left( {{\bf{W}}_{n,k}^{({t})}} \right) - {\eta}\left( {{\bf{W}}_{n,k}^{({t})}} \right)} \right|}  < {\varepsilon _2}$}

    \STATE {Set ${{\bf{V}}^{*}}={{\bf{V}}^{(t)}}$ and ${\bf{W}}_{n,k}^{*}={\bf{W}}_{n,k}^{(t)}, \forall n,k$.}
    \STATE Recover ${\bf{v}}^*$ and ${\bf{w}}_{n,k}^*, \forall n,k$ by performing SVD for ${\bf{V}}^*$ and ${\bf{W}}_{n,k}^*, \forall n,k$, respectively.
\end{algorithmic}
\end{algorithm}
    % \STATE Perform the singular value decomposition (SVD) for ${{\bf{V}}^{*}}$ and ${\bf{W}}_{n,k}^{*}$ by ${{\bf{V}}^{*}} = {\eta}\left( {{{\bf{V}}^{*}}} \right){\boldsymbol{\theta}} \left( {{{\bf{V}}^{*}}} \right){\left( {{\boldsymbol{\theta}} \left( {{{\bf{V}}^{*}}} \right)} \right)^H}$ and ${\bf{W}}_{n,k}^{*} = {\eta}\left( {{\bf{W}}_{n,k}^{*}} \right){\boldsymbol{\theta}} \left( {{\bf{W}}_{n,k}^{*}} \right){\left( {{\boldsymbol{\theta}} \left( {{\bf{W}}_{n,k}^{*}} \right)} \right)^H}$, respectively.
    % \STATE Output ${\bf{v}} = \sqrt {{\eta}\left( {{{\bf{V}}^{*}}} \right)} {\boldsymbol{\theta}} \left( {{{\bf{V}}^{*}}} \right)$ and ${{\bf{w}}_{n,k}} = \sqrt {{\eta}\left( {{\bf{W}}_{n,k}^{*}} \right)} {\boldsymbol{\theta}} \left( {{\bf{W}}_{n,k}^{*}} \right),\forall n,k$.

\subsection{Initialization Algorithm}
% In existing work, the initial points are generated randomly, which usually interrupts the iterative algorithm.
% ensure the convergenceand
In order to improve the convergence speed of Algorithm \ref{algorithm_penalty_IBF_LYW}, we aim to initialize Algorithm \ref{algorithm_penalty_IBF_LYW} with feasible Taylor expansion points ${{\bf{V}}^{\left( 0 \right)}}$, ${\bf{W}}_{n,k}^{(0)}, \forall n,k$, and $u_{n,k}^{\left( 0 \right)}, \forall n,k$.
% we aim to find the feasible Taylor expansion points ${{\bf{V}}^{\left( 0 \right)}}$, ${\bf{W}}_{n,k}^{(0)}, \forall n,k$, and $u_{n,k}^{\left( 0 \right)}, \forall n,k$ at the beginning of Algorithm \ref{algorithm_penalty_IBF_LYW}.
Inspired by \cite{lin2019joint}, to obtain the feasible initial points, we introduce a variable $\delta$ to measure the satisfaction of all constraints.
%  as the objective function 
For example, we relax \eqref{Problem_CVX_cons_aerial_BS}, i.e., ${\overline q} - {\rm{Tr}}\left( {\bf{V}} \right) \ge 0$, to ${\overline q} - {\rm{Tr}}\left( {\bf{V}} \right) \ge \delta$, which is the constraint \eqref{Problem_CVX_cons_aerial_BS}$^*$.
% Then, it is relaxed as 
When $\delta < 0$, constraint \eqref{Problem_CVX_cons_aerial_BS} may not be satisfied. In this case, if $\delta$ is larger, the largest possible violation of constraint \eqref{Problem_CVX_cons_aerial_BS} is more minor.
When $\delta \ge 0$, constraint \eqref{Problem_CVX_cons_aerial_BS} is satisfied.
Thus, by maximizing $\delta$, we can maximize the satisfaction of constraint \eqref{Problem_CVX_cons_aerial_BS}.
% To maximize the satisfaction of constraint \eqref{Problem_CVX_cons_aerial_BS}, we aim to maximize $\delta$.
Using the same way, we obtain \eqref{Problem_CVX_cons_inter_temp}$^*$ - \eqref{Problem_CVX_cons_semidefinite}$^*$ and \eqref{Problem_CVX_auxi_vari_v_Taylor}$^*$. 
% It can be noticed that the larger the variable $\delta$, the more satisfied the constraints.
% Specifically, the constraint $(X)$ is first transformed to $f(x) \geq 0$.
% Then, $f(x) \geq 0$ is replaced by $f(x)\geq \delta$, which is the constraint $(X)^*$.
% Therefore, the larger the variable $\delta$, the more satisfied the constraints.
% 
% In order to maximize the satisfaction of all constraints, problem \eqref{Problem_CVX_penalty} is modified into an initialization problem:
% modify the problem \eqref{Problem_CVX_Gaussian_randomization} to
% In order to maximize the satisfaction of all constraints, we formulate the following problem to find a feasible solution:
To find a feasible solution to problem \eqref{Problem_CVX_penalty}, we aim to maximize the satisfaction of all constraints by formulating the following problem:
\begin{subequations}
\label{Problem_CVX_initialization}
\begin{align}
    &\mathop {\max}_{ {{\bf{V}}, \left\{ {{\bf{W}}_{n,k}}, {u_{n,k}} \right\}}, \delta} \; \delta \\
    &\;\;\;\;\;\;\;\;\;{\rm{s}}.{\rm{t}}.\;\;\;\;\;\;\;\;\;\;
    % &\eqref{Problem_CVX_Gaussian_randomization1}^*-\eqref{Problem_CVX_Gaussian_randomization_end}^* \nonumber
    \eqref{Problem_CVX_cons_inter_temp}^* - \eqref{Problem_CVX_cons_semidefinite}^* , \eqref{Problem_CVX_auxi_vari_v_Taylor}^*. \nonumber
\end{align}
\end{subequations}
% where the constraint $(X)^*$ denotes the modified version of $(X)$ with $\delta$. 
% Specifically, constraint $(X)$ is first transformed to $f(x) \geq 0$, and then the zero in $f(x) \geq 0$ is replaced by $\delta$ to obtain $f(x)\geq \delta$, which is the constraint $(X)^*$.

% to formulate the constraints $(X)^*$, 
% we first transform constraint $(X)$ as $f(x) \geq 0$, 
% which is then replaced with $f(x)\geq \delta$, 
% thus the constraint $(X)^*$ is formulated.

% Similar to Algorithm \ref{algorithm_Gaussian_randomization}, b
Since \eqref{Problem_CVX_cons_inter_temp}$^*$ - \eqref{Problem_CVX_cons_semidefinite}$^*$ and \eqref{Problem_CVX_auxi_vari_v_Taylor}$^*$ are all convex, this convex problem can be solved by CVX \cite{grant2011cvx}.
By iteratively solving the initialization problem \eqref{Problem_CVX_initialization}, we can obtain feasible initial points ${{\bf{V}}^{\left( 0 \right)}}$, ${\bf{W}}_{n,k}^{(0)}, \forall n,k$, and $u_{n,k}^{\left( 0 \right)}, \forall n,k$ in Algorithm \ref{algorithm_penalty_IBF_LYW}.
The initialization algorithm is summarized in Algorithm \ref{algorithm_Initial}.

\begin{algorithm}
\caption{Initialization Algorithm}
% \caption{Feasible Initial Points Search Algorithm}
\label{algorithm_Initial}
\begin{algorithmic}[1]
        %\REQUIRE ${\bf{Y}}$
    % \STATE {Set $\varepsilon$, the iteration number $t=0$.}
    \STATE {Set the iteration number $t=0$.}
    \STATE {Initialize the algorithm with random points $u_{n,k}^{\left( 0 \right)}, \forall n,k$.}
    %\STATE Initialize
    \REPEAT 
    \STATE {Given ${u_{n,k}^{(t)}}, \forall n,k$, solve problem \eqref{Problem_CVX_initialization} and obtain the solution ${\bf{V}}$, ${{\bf{W}}_{n,k}}, \forall n,k$, and ${u_{n,k}}, \forall n,k$ by CVX.}
    % \STATE {Update $u_{n,k}^{(t+1)}, \forall n,k$ with ${\bf{V}}$ and ${\bf{W}}_{n,k}, \forall n,k$ by \eqref{update_v_t}.}
    \STATE {Set $u_{n,k}^{(t+1)} = \ln \left( {\alpha_{n,k}}\left( {\bf{V}}, {\bf{W}} \right) \right), \forall n,k$.}
    \STATE {$t=t+1$.}
    \UNTIL {$\delta \ge 0$.}
    \STATE Set ${\bf{V}}^{(0)} = {\bf{V}}$, ${\bf{W}}_{n,k}^{(0)} = {\bf{W}}_{n,k}, \forall n,k$, and ${u_{n,k}^{(0)}} = {u_{n,k}^{(t)}}, \forall n,k$.
    \STATE Output feasible points ${{\bf{V}}^{(0)}}$, ${\bf{W}}_{n,k}^{(0)}, \forall n,k$, and ${u_{n,k}^{(0)}}, \forall n,k$.
\end{algorithmic}
\end{algorithm}

% \begin{algorithm}
% \caption{Initialization Algorithm}
% % \caption{Feasible Initial Points Search Algorithm}
% \label{algorithm_Initial}
% \begin{algorithmic}[1]
%         %\REQUIRE ${\bf{Y}}$
%     \STATE {Set the accuracy $\varepsilon$, the iteration number $t=0$.}
%     \STATE {Initialize the algorithm with random points ${{\bf{V}}^{\left( 0 \right)}}$, ${\bf{W}}_{n,k}^{(0)}, \forall n,k$, and $u_{n,k}^{\left( 0 \right)}, \forall n,k$.}
%     %\STATE Initialize
%     \REPEAT 
%     \STATE {Given ${{\bf{V}}^{(t)}}$, ${\bf{W}}_{n,k}^{(t)}, \forall n,k$, and ${u_{n,k}^{(t)}}, \forall n,k$, solve problem \eqref{Problem_CVX_initialization} and obtain the solution ${\bf{V}}$, ${{\bf{W}}_{n,k}}, \forall n,k$, ${s_{n,k}}, \forall n,k$, and ${u_{n,k}}, \forall n,k$ by CVX.}
%     \STATE {${{\bf{V}}^{(t+1)}} = {\bf{V}}$, ${\bf{W}}_{n,k}^{(t+1)} = {\bf{W}}_{n,k}, \forall n,k$.}
%     \STATE {Update $u_{n,k}^{(t+1)}, \forall n,k$ with ${{\bf{V}}^{(t+1)}}$ and ${\bf{W}}_{n,k}^{(t+1)}, \forall n,k$ by \eqref{update_v_t}.}
%     \STATE {$t=t+1$.}
%     \UNTIL {$\delta \ge {\varepsilon}$}
%     \STATE Output feasible points ${{\bf{V}}^{(t)}}$, ${\bf{W}}_{n,k}^{(t)}, \forall n,k$, and ${u_{n,k}^{(t)}}, \forall n,k$.
% \end{algorithmic}
% \end{algorithm}

% \subsection{The overall algorithm}

\subsection{Convergence and Complexity Analysis}
In this section, we will analyze the convergence performance and computational complexity of the PIBF scheme.
\subsubsection{Convergence Analysis}
\label{subsubsec_Convergence}
% \cite{dinh2010local}
The purpose of the outer loop in Algorithm \ref{algorithm_penalty_IBF_LYW} is to make the rank-one constraint \eqref{cons_rank1_equi} be satisfied enough.
Thus, we will present the effectiveness of Algorithm \ref{algorithm_penalty_IBF_LYW} in simulation.
Here, we will focus on the convergence performance for the inner loop in Algorithm \ref{algorithm_penalty_IBF_LYW}.
% we will show that the inner loop in Algorithm \ref{algorithm_penalty_IBF_LYW} is convergent.

To facilitate the analysis of convergence for the inner loop in the PIBF scheme, we define an objective function consisting of a sum rate term minus a penalty term, which is given by: 
\begin{align}
\label{conver_obj}
% \mu\left( {\bf{V}}, {\bf{W}} \right) \triangleq 
% &\sum_{(n,k)} \left( s_{n,k}\left( {\bf{V}}, {\bf{W}} \right) - \ln \left({\alpha_{n,k}}\left( {\bf{V}}, {\bf{W}} \right)\right) \right)  \nonumber \\
% &- \xi F\left( {\bf{V}}, {\bf{W}} \right).
\mu\!\left( {\bf{V}}, \!{\bf{W}} \right) \!\triangleq\!\! \sum_{(n,k)} \!\!\left( s_{n,k}\left( {\bf{V}}, \!{\bf{W}} \right)\! -\! \ln \left({\alpha_{n,k}}\!\left( {\bf{V}}, \!{\bf{W}} \right)\right) \right)\! -\! \xi F\!\left( {\bf{V}},\! {\bf{W}} \right)\!.
\end{align}
% where ${\bf{W}}$ represents the collection of ${\bf{W}}_{n,k}$.
% &\sum_{(n,k)} {\ln \left( 1 + {\gamma _{n,k}} \right)} - \xi F\left( {\bf{V}}, {\bf{W}} \right) \nonumber \\

% Next, we analyze problem \eqref{Problem_CVX_penalty}.
% We denote the solution of problem \eqref{Problem_CVX_penalty} obtained by CVX in the $t$-th iteration as 
In addition, the solution of problem \eqref{Problem_CVX_penalty} obtained by CVX in the $t$-th iteration is denoted by $\widetilde {\bf{W}}_{n,k}^{(t)}, \forall n,k$, $\widetilde {\bf{V}}^{(t)}$, and $\widetilde u_{n,k}^{(t)}, \forall n,k$.
When this solution of problem \eqref{Problem_CVX_penalty} is obtained in the $t$-th iteration, the objective function of problem \eqref{Problem_CVX_penalty} is denoted by:
\begin{align}
\label{conver_obj_penalty}
&\phi\left( {{\widetilde {\bf{V}}}^{(t)}}, {{\widetilde {\bf{W}}}^{(t)}}, {\widetilde u^{(t)}}; {{\bf{V}}^{(t)}}, {{\bf{W}}^{(t)}}, {u^{(t)}} \right) \nonumber \\
&\;\;\;\;\;\;\;\;\;\;\;\;\triangleq 
\sum_{(n,k)} {\left( {s_{n,k}\left( \widetilde {\bf{V}}^{(t)}, \widetilde {\bf{W}}^{(t)} \right) - \widetilde u_{n,k}^{(t)}} \right)}  \nonumber \\
&\;\;\;\;\;\;\;\;\;\;\;\;\;\;\;\;- \xi \overline F\left( {{{\widetilde {\bf{V}}}^{(t)}}, {{\widetilde {\bf{W}}}^{(t)}};{{\bf{V}}^{(t)}}, {{\bf{W}}^{(t)}}} \right),
\end{align}
where $\widetilde {\bf{W}}^{(t)}$,  ${\widetilde u^{(t)}}$, and $u^{(t)}$ represent the collection of all $\widetilde {\bf{W}}_{n,k}^{(t)}$, all $\widetilde u_{n,k}^{(t)}$, and all $u_{n,k}^{(t)}$, respectively.

% where $\widetilde {\bf{W}}^{(t)}$ and ${\bf{W}}^{(t)}$ represent the collection of $\widetilde {\bf{W}}_{n,k}^{(t)}$ and ${\bf{W}}_{n,k}^{(t)}$, respectively.
% where $\widetilde {\bf{W}}^{(t)}$, ${\bf{W}}^{(t)}$, $\widetilde s^{(t)}$, ${\widetilde u^{(t)}}$, and $u^{(t)}$ represent the collection of $\widetilde {\bf{W}}_{n,k}^{(t)}$, ${\bf{W}}_{n,k}^{(t)}$, $s_{n,k}\left( \widetilde {\bf{V}}^{(t)}, \widetilde {\bf{W}}^{(t)} \right)$, $\widetilde u_{n,k}^{(t)}$, and $u_{n,k}^{(t)}$, respectively.

% To prove the convergence for the inner loop in Algorithm \ref{algorithm_penalty_IBF_LYW}, we intend to show that as the number of iterations increases, the solution obtained by solving problem \eqref{Problem_CVX_penalty} makes $\mu\left(  {\bf{V}}, {\bf{W}} \right)$ be a non-decreasing function, i.e., $\mu\left( {{\bf{V}}^{(t + 1)}}, {{\bf{W}}^{(t + 1)}} \right) \ge \mu\left( {{{\bf{V}}^{(t)}}, {{\bf{W}}^{(t)}}} \right)$.
% Thus, we prove the following theorem \cite{sun2016majorization}:
To prove the convergence for the inner loop in Algorithm \ref{algorithm_penalty_IBF_LYW}, we present the following theorem \cite{sun2016majorization}:
\begin{thm}
\label{theorem_convergence}
As the number of the inner loop iterations increases, the solution of problem \eqref{Problem_CVX_penalty} makes $\mu\left(  {\bf{V}}, {\bf{W}} \right)$ to be a non-decreasing function, i.e., $\mu\left( {{\bf{V}}^{(t + 1)}}, {{\bf{W}}^{(t + 1)}} \right) \ge \phi({\widetilde {\bf{V}}^{(t)}}, {\widetilde {\bf{W}}^{(t)}}, {{\tilde u}^{(t)}};{{\bf{V}}^{(t)}}, {{\bf{W}}^{(t)}}, {u^{(t)}}) \ge \mu\left( {{{\bf{V}}^{(t)}}, {{\bf{W}}^{(t)}}} \right)$.
\end{thm}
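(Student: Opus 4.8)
The plan is to establish the chain of inequalities
$$\mu\left({\bf{V}}^{(t+1)}, {\bf{W}}^{(t+1)}\right) \ge \phi\left({\widetilde {\bf{V}}}^{(t)}, {\widetilde {\bf{W}}}^{(t)}, {\widetilde u}^{(t)}; {\bf{V}}^{(t)}, {\bf{W}}^{(t)}, u^{(t)}\right) \ge \mu\left({\bf{V}}^{(t)}, {\bf{W}}^{(t)}\right)$$
by reading it as a standard minorize-maximize (majorization-minimization) argument, which is precisely the framework of \cite{sun2016majorization}. The key observation is that $\phi$ is built from two surrogate constructions applied to $\mu$: first, the term $\ln\left({\alpha_{n,k}}({\bf{V}},{\bf{W}})\right)$ is bounded above using the auxiliary variable $u_{n,k}$ via \eqref{Problem_CVX_auxi_vari_v} together with the Taylor linearization \eqref{Problem_CVX_auxi_vari_v_Taylor_term} of the exponential; second, the largest-eigenvalue term $\eta(\cdot)$ inside the penalty $F$ is bounded below by its first-order expansion ${\overline \eta}(\cdot;\cdot)$ in \eqref{penalty_lower_bound}, so that $-\xi F$ is bounded below by $-\xi{\overline F}^{(t)}$. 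Both constructions are tight at the current iterate, which is what drives the two inequalities.

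\emph{Right inequality (surrogate lower-bounds the true objective at the current point).} First I would verify that at the expansion point, the surrogate $\phi$ coincides with $\mu\left({\bf{V}}^{(t)}, {\bf{W}}^{(t)}\right)$, and that $\left({\bf{V}}^{(t)}, {\bf{W}}^{(t)}, u^{(t)}\right)$ is feasible for problem \eqref{Problem_CVX_penalty}. The Taylor inequality \eqref{Problem_CVX_auxi_vari_v_Taylor_term} holds with equality at $u_{n,k}=u_{n,k}^{(t)}$, and since $u^{(t)}$ was updated in the previous iteration via \eqref{update_v_t} so that \eqref{Problem_CVX_auxi_vari_v} holds with equality, the current iterate indeed satisfies constraint \eqref{Problem_CVX_auxi_vari_v_Taylor}. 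Likewise ${\overline \eta}({\bf{X}};{\bf{X}}^{(t)})=\eta({\bf{X}}^{(t)})$ at ${\bf{X}}={\bf{X}}^{(t)}$, so ${\overline F}^{(t)}$ evaluated at the current point equals $F\left({\bf{V}}^{(t)}, {\bf{W}}^{(t)}\right)$. Because $\left({\widetilde {\bf{V}}}^{(t)}, {\widetilde {\bf{W}}}^{(t)}, {\widetilde u}^{(t)}\right)$ is the CVX-optimal solution of \eqref{Problem_CVX_penalty} and the current iterate is merely feasible, optimality gives $\phi \ge \mu\left({\bf{V}}^{(t)}, {\bf{W}}^{(t)}\right)$.

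\emph{Left inequality (true objective at the new point dominates the surrogate value).} Here I would use that the surrogate is a global lower bound on $\mu$, not just a tight one. Substituting the optimal solution back: the Taylor bound \eqref{Problem_CVX_auxi_vari_v_Taylor_term} gives $e^{u_{n,k}^{(t)}}\left(u_{n,k}-u_{n,k}^{(t)}+1\right)\le e^{u_{n,k}}$, which combined with the feasibility constraint \eqref{Problem_CVX_auxi_vari_v_Taylor} yields ${\alpha_{n,k}}\left({\widetilde {\bf{V}}}^{(t)}, {\widetilde {\bf{W}}}^{(t)}\right)\le e^{\widetilde u_{n,k}^{(t)}}$, i.e.\ $\widetilde u_{n,k}^{(t)} \ge \ln\left({\alpha_{n,k}}\left({\widetilde {\bf{V}}}^{(t)}, {\widetilde {\bf{W}}}^{(t)}\right)\right)$, so $-\widetilde u_{n,k}^{(t)} \le -\ln\left({\alpha_{n,k}}(\cdot)\right)$; meanwhile \eqref{penalty_lower_bound} gives ${\overline F}^{(t)} \le F$ so $-\xi{\overline F}^{(t)} \ge -\xi F$. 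After setting $\left({\bf{V}}^{(t+1)}, {\bf{W}}^{(t+1)}\right)=\left({\widetilde {\bf{V}}}^{(t)}, {\widetilde {\bf{W}}}^{(t)}\right)$, the concavity and the two bounds combine to give $\mu\left({\bf{V}}^{(t+1)}, {\bf{W}}^{(t+1)}\right) \ge \phi$. The main obstacle I expect is bookkeeping the two surrogates simultaneously and keeping the direction of each inequality straight: the exponential/auxiliary-variable bound loosens $\mu$ through the $-u_{n,k}$ terms while the eigenvalue bound loosens it through $-\xi F$, and both must be checked to be tight at the expansion point to close the right inequality while remaining valid global bounds to close the left one. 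Finally, since $\mu$ is monotonically non-decreasing along the inner iterations and bounded above (by the finite transmit-power and interference budgets in \eqref{Problem_CVX_cons_inter_temp}--\eqref{Problem_CVX_cons_aerial_BS}), convergence of the inner loop follows.
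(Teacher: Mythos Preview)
Your approach is essentially the same majorization--minimization argument as the paper's Appendix~A: you verify tightness of both surrogates at the expansion point to get the right inequality (the paper's steps~(c) and~(d)), and then use the global validity of both bounds at the new iterate to get the left inequality (the paper's steps~(a) and~(b), which you collapse into one). The only difference is that the paper inserts the intermediate point $\bigl({\bf V}^{(t+1)},{\bf W}^{(t+1)},u^{(t+1)}\bigr)$, with $u^{(t+1)}$ given by \eqref{update_v_t}, between $\mu$ and $\phi$ on the left; your direct comparison is slightly cleaner and does not require the observation that constraint \eqref{Problem_CVX_auxi_vari_v_Taylor} is active at the CVX optimum.

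There is, however, one genuine slip in your left-inequality paragraph: from \eqref{penalty_lower_bound} one has $\eta(\cdot)\ge\overline\eta(\cdot;\cdot)$, hence $F=\mathrm{Tr}(\cdot)-\eta(\cdot)\le \mathrm{Tr}(\cdot)-\overline\eta(\cdot;\cdot)=\overline F^{(t)}$, i.e.\ $\overline F^{(t)}\ge F$, not $\overline F^{(t)}\le F$ as you wrote. The needed consequence is $-\xi F\ge -\xi\overline F^{(t)}$, which is exactly the direction required for $\mu\ge\phi$; your stated inequality $-\xi\overline F^{(t)}\ge -\xi F$ points the wrong way and would not close the argument. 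This is precisely the bookkeeping hazard you flagged, and once corrected the proof goes through.
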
 

\begin{proof}
Please refer to the Appendix A.
\end{proof}

Theorem \ref{theorem_convergence} indicates that as the number of the inner loop iterations increases, the objective function of problem \eqref{Problem_CVX_penalty}, i.e., \eqref{conver_obj_penalty} increases.
Thus, the inner loop of Algorithm \ref{algorithm_penalty_IBF_LYW} is convergent.

\subsubsection{Complexity Analysis}
\label{Complexity_Analysis_proposed}
% \cite{nesterov1994interior}
% \cite{grant2011cvx}
The complexity of Algorithm \ref{algorithm_penalty_IBF_LYW} is mainly affected by the complexity of solving problem \eqref{Problem_CVX_penalty}, which is analyzed as follows.
% The computational complexity of Algorithm \ref{algorithm_penalty_IBF_LYW} is analyzed as follows.
% First, we analyze the complexity of solving problem \eqref{Problem_CVX_penalty}.
% However, it is difficult because there is an exponential function in the constraint \eqref{Problem_CVX_auxi_vari_u}, which is handled by CVX using the successive approximation method \cite{grant2011cvx}.
% In the constraint \eqref{Problem_CVX_auxi_vari_u}, there is an exponential function, which is handled by CVX using the successive approximation method \cite{grant2011cvx}.
% nguyen2016resource
Since $s_{n,k}\left( {\bf{V}}, {\bf{W}} \right)$ involves a logarithmic function in the objective function \eqref{Problem_CVX_penalty_obj}, problem \eqref{Problem_CVX_penalty} is a generalized nonlinear convex problem, whose complexity is difficult to analyze directly \cite{li2020cooperative}.
% In CVX, a successive approximation heuristic is constructed to handle the logarithmic function \cite{grant2011cvx}. 
% To handle logarithmic functions, a successive approximation heuristic is constructed in CVX \cite{grant2011cvx}. 
In fact, CVX handles the logarithmic function by a successive approximation heuristic \cite{grant2011cvx}. 
By approximating $s_{n,k}\left( {\bf{V}}, {\bf{W}} \right)$ with a sequence of second order cone (SOC), problem \eqref{Problem_CVX_penalty} can be transformed into a SOC programming (SOCP) problem.
This problem can be solved by the interior-point method with complexity ${\cal{O}}\!\left( {(\widetilde K {M_G} +\! {M_A} \!+\! \widetilde K)}^{3.5} \log \left( 1/\epsilon  \right) \right)$ given a solution accuracy $\epsilon > 0$ \cite{li2020cooperative}, where $\widetilde K = \sum_{n = 1}^N {{K_n}}$ denotes the total number of terrestrial terminals.
We can find that when the number of antennas of the terrestrial BSs and the aerial BS, i.e., $M_G$ and $M_A$, increases, the complexity of solving problem \eqref{Problem_CVX_penalty} will be high.
% the solution of problem \eqref{Problem_CVX_penalty} will face high complexity.
% % the number of variables in problem \eqref{Problem_CVX_penalty} increases, and
% % solving problem \eqref{Problem_CVX_Gaussian_randomization} will face high complexity.
% % Therefore, in the next section, we will propose three low-complexity beamforming schemes in which the number of variables in the optimization problem is reduced by decoupling the variables.
% Therefore, by reducing the number of variables in the optimization problem, we will propose three low-complexity beamforming schemes in the next section.

%  nonlinear?????????

\section{Low Complexity Beamforming Schemes}
\label{sec_Low_Com_Scheme}
% In section \ref{sec_Proposed_Scheme}, we proposed a PIBF scheme to obtain near-optimal sum rates, but this scheme has high computational complexity.
% In order to reduce the calculation amount of Algorithm \ref{algorithm_Gaussian_randomization}, 
% To overcome this problem, 
% In order to reduce the complexity of the PIBF scheme, 
Considering the high complexity of the PIBF scheme, in this section, we propose three low-complexity schemes, namely, interference suppression (IS) scheme, zero-forcing (ZF) scheme, and maximal ratio combining (MRC) scheme, to reduce the complexity of solving problem \eqref{Problem_sate_active}.
% \eqref{Problem_CVX_penalty}.
%  and the maximal ratio combining (MRC) scheme.
% with two steps each
% Both schemes involve two steps.

% \subsection{Main Idea}
\subsection{Basic Principle}
\label{sec_Low_Com_Scheme_Principle}
% The complexity of the PIBF scheme mainly lies in solving the problem \eqref{Problem_CVX_penalty}, which suffers from a large number of variables.
We have seen that the complexity of the PIBF scheme is high, which is mainly due to the high complexity of solving problem \eqref{Problem_CVX_penalty} with a large number of variables.
% In contrast, by reducing the number of variables of the optimization problem, the three schemes described below have lower complexity.
%  by decoupling the variables
% In contrast, the three schemes described below reduce the number of variables of the optimization problem, so they have lower complexity.
% Specifically, all the low-complexity schemes decouple the beamforming vectors into their normalized part and the corresponding power part:
To reduce the number of variables of the optimization problem, we decouple the beamforming vectors into two parts:
% their normalized part and the corresponding power part:
% \begin{align}
% \label{beamforming_vector_rewrite}
% {{\bf{w}}_i} = \sqrt {{p_i}} {\overline {\bf{w}} _i},i = \{ A,\{ n,k\} \} , \forall n,k,
% \end{align}
\begin{subequations}
\label{beamforming_vector_rewrite}
\begin{align}
&{\bf{v}} = \sqrt {{q}} {\overline {\bf{v}}},\\
&{{\bf{w}}_{n,k}} = \sqrt {p_{n,k}} {\overline {\bf{w}} _{n,k}}, \forall n,k,
\end{align}
\end{subequations}
where $\overline {\bf{v}}  = {\bf{v}}/{\left\| {\bf{v}} \right\|_2}$ and ${\overline {\bf{w}} _{n,k}} = {{\bf{w}}_{n,k}}/{\left\| {{{\bf{w}}_{n,k}}} \right\|_2}, \forall n,k$ are the normalized part of ${\bf{v}}$ and ${{\bf{w}}_{n,k}}, \forall n,k$, respectively.
Besides, $q = \left\| {\bf{v}} \right\|_2^2$ and ${p_{n,k}} = \left\| {{{\bf{w}}_{n,k}}} \right\|_2^2, \forall n,k$ are the power part of ${\bf{v}}$ and ${{\bf{w}}_{n,k}}, \forall n,k$, respectively.
% where ${\left\| {{\overline {\bf{w}} }_{n,k}} \right\|_2} = 1, \forall n,k$ and ${\left\| {\overline {\bf{v}}} \right\|_2} = 1$.
% In step 1 and step 2, both schemes solve for ${\overline {\bf{v}}}$, ${\overline {\bf{w}} _{n,k}}, \forall n,k$ and ${q}$ and ${p_{n,k}}, \forall n,k$, respectively.
To design ${\overline {\bf{v}}}$, ${\overline {\bf{w}} _{n,k}}, \forall n,k$, ${q}$, and ${p_{n,k}}, \forall n,k$, we design a two-step algorithm, where ${\overline {\bf{v}}}$ and ${\overline {\bf{w}} _{n,k}}, \forall n,k$ are designed in step 1, then ${q}$ and ${p_{n,k}}, \forall n,k$ are designed in step 2.
% Specifically, in step 1, we design ${\overline {\bf{v}}}$ and ${\overline {\bf{w}} _{n,k}}, \forall n,k$, and in step 2, we design ${q}$ and ${p_{n,k}}, \forall n,k$.
% All the low-complexity schemes obtain ${\overline {\bf{v}}}$ and ${\overline {\bf{w}} _{n,k}}, \forall n,k$ in step 1, and then ${q}$ and ${p_{n,k}}, \forall n,k$ in step 2, which will be described in the following three subsections.
% All the low-complexity schemes obtain ${\overline {\bf{v}}}$, ${\overline {\bf{w}} _{n,k}}, \forall n,k$ and ${q}$, ${p_{n,k}}, \forall n,k$ in step 1 and step 2, respectively, which will be described in the following three subsections.
Finally, ${\bf{v}}$ and ${{\bf{w}}_{n,k}}, \forall n,k$ can be obtained by \eqref{beamforming_vector_rewrite}.
% Compared with solving for ${\bf{v}}$ and ${{\bf{w}}_{n,k}}, \forall n,k$ in problem \eqref{Problem_CVX_penalty},
Note that CVX is only used in step 2 to optimize ${q}$ and ${p_{n,k}}, \forall n,k$ but not in step 1.
Thus, the three proposed schemes have lower complexity compared with the PIBF scheme.
% Compared with the PIBF scheme, in the low-complexity scheme, CVX is not used in step 1, and CVX is only used to optimize ${q}$ and ${p_{n,k}}, \forall n,k$ in step 2.
% Thus, the three proposed schemes have lower complexity.
% Thus, the low-complexity schemes have lower complexity.

% Since CVX is only needed to solve the optimization problem in step 2 of both schemes and the number of variables in the problem is reduced compared to problem \eqref{Problem_CVX_Gaussian_randomization}, both schemes have lower complexity than the PIBF scheme.

% The complexity of the PIBF scheme mainly lies in solving the problem \eqref{Problem_CVX_Gaussian_randomization}, while the three schemes described in this section decouple the variables and reduce the number of variables of the optimization problem, so they have lower complexity.

\subsection{Interference Suppression Scheme}
\label{sec_Low_Complexity_Beamforming_Scheme}
Based on the above principle, we use the two-step algorithm to design ${\overline {\bf{v}}}$, ${\overline {\bf{w}} _{n,k}}, \forall n,k$, ${q}$, and ${p_{n,k}}, \forall n,k$.

\subsubsection{Step 1}
% The main idea of step 1 of the IS scheme is that the interference to all other receivers in the terrestrial network and the aerial user can be completely eliminated.
% The main idea of step 1 of the IS scheme is individually designing each normalized beamforming vector to suppress the corresponding interference within a threshold.
In the IS scheme, each normalized beamforming vector is designed individually.
The goal of this design is not only to suppress the sum of the interference (excluding the interference to the satellite terminal) within a given threshold but also to maximize the signal strength over the channel to the intended receiver.
% with the goal of suppressing the sum of the interference (excluding the interference to the satellite terminal) within a given threshold, while maximizing the signal strength over the channel to the intended receiver.

% Inspired by \cite{lin2020robust}, we formulate $\widetilde K = \sum_{n = 1}^N {{K_n}}$ optimization problems of the same form for ${\overline {\bf{w}} _{n,k}}, \forall n,k$.
% for the normalized beamforming vector for the $k$-th receiver in the $n$-th cell, i.e., ${\overline {\bf{w}} _{n,k}}$,
% the channel matrix from the terrestrial BS in the $n$-th cell to the $k$-th receiver in the $n$-th cell is ${{\bf{H}}_{n,n,k}}$, and
Specifically, the sum of the matrices of the interference channels is given by ${\bf{D}} = {{\bf{H}}_{A,n}} + \sum_{(m,l) \ne (n,k)} {{{\bf{H}}_{n,m,l}}}$, which includes the interference from the terrestrial BS in the $n$-th cell to all other receivers in the terrestrial network and the aerial user.
Inspired by \cite{lin2020robust}, to design ${\overline {\bf{w}} _{n,k}}$, we formulate an optimization problem as follows: 
% Thus, the $(n,k)$-th optimization problem to obtain ${\overline {\bf{w}} _{n,k}}$ can be solved independently and is formulated as: 
\begin{subequations}
\label{Problem_Interference_Suppression_step1_1}
\begin{align}
\label{Problem_Interference_Suppression_step1_1_1}
&{\max _{\left\{ {{\overline {\bf{w}} }_{n,k}} \right\}}}\;
\overline {\bf{w}} _{n,k}^H {{\bf{H}}_{n,n,k}} {\overline {\bf{w}} _{n,k}}\\
&\;\;{\rm{s}}.{\rm{t}}.\;\;\;
\label{Problem_Interference_Suppression_step1_1_2}
{\left\| {{\overline {\bf{w}} }_{n,k}} \right\|_2} = 1, \\
\label{Problem_Interference_Suppression_step1_1_3}
&\;\;\;\;\;\;\;\;\;\;\overline {\bf{w}} _{n,k}^H{\bf{D}}{\overline {\bf{w}} _{n,k}} \le \chi,
\end{align}
\end{subequations}
where $\chi$ is an interference threshold that is small enough.
% where ${\bf{D}}$ is the sum of the interference channels including the interference to all other receivers in the terrestrial network and the aerial user
% set to a number small enough to ensure that interference is
% 
% In a similar way, we can find ${\overline {\bf{v}}}$ with ${\bf{D}} = \sum_{(n,k)} {{\bf{G}}_{n,k}}$, which is omitted for brevity.
% For brevity, this paper omits the relevant content.
In a similar way, we can obtain ${\overline {\bf{v}}}$ by solving the following problem: 
\begin{subequations}
\label{Problem_Interference_Suppression_step1_0}
\begin{align}
&{\max _{ {\overline {\bf{v}}} }}\;
{\overline {\bf{v}}}^H{{\bf{G}}_{A}}{\overline {\bf{v}}}  \\ 
&{\rm{s}}.{\rm{t}}.\;\;\;
{\left\| {\overline {\bf{v}}} \right\|_2} = 1,  \\ 
&\;\;\;\;\;\;\;\;{\overline {\bf{v}}}^H{\bf{D}}{\overline {\bf{v}}} \leq \chi, 
\end{align}
\end{subequations}
where ${\bf{D}} = \sum_{(n,k)} {{\bf{G}}_{n,k}}$ denotes the interference from the aerial BS to all receivers in the terrestrial network. 
Due to the similarity between problem \eqref{Problem_Interference_Suppression_step1_0} and problem \eqref{Problem_Interference_Suppression_step1_1}, problem \eqref{Problem_Interference_Suppression_step1_0} is solved in a similar way to problem \eqref{Problem_Interference_Suppression_step1_1}, we only focus on solving problem \eqref{Problem_Interference_Suppression_step1_1} for brevity.
% the method for solving problem \eqref{Problem_Interference_Suppression_step1_1} can be directly applied to solving problem \eqref{Problem_Interference_Suppression_step1_0},
% ${\bf{D}} = \sum_{(m,l)} {{{\bf{H}}_{A,m,l}}}$.

% To obtain the analytical solution of BF weight vectors, an auxiliary variable η, satisfying 0 ≤ η ≤ 1, is introduced, and the optimization problem (32) can be transformed as: 
% $\rho  \in [0,1]$
% by introducing a variable , problem \eqref{Problem_Interference_Suppression_step1_1} can be transformed as: 
In order to solve problem \eqref{Problem_Interference_Suppression_step1_1}, we aim to transform it into a generalized Rayleigh quotient form.
Firstly, the inequality constraint \eqref{Problem_Interference_Suppression_step1_1_3} can be transformed into an equality constraint by introducing a variable $\rho$ satisfying $0 \!\leq \!\rho \!\leq \!1$.
Then, problem \eqref{Problem_Interference_Suppression_step1_1} can be transformed into the following equivalent problem:
\begin{subequations}
\label{Problem_Interference_Suppression_step1_2}
\begin{align}
\label{Problem_Interference_Suppression_step1_2_1}
&{\max _{\left\{ {{\overline {\bf{w}} }_{n,k}}\right\}, \rho  }}\;
\overline {\bf{w}} _{n,k}^H{{\bf{H}}_{n,n,k}}{\overline {\bf{w}} _{n,k}}\\
&\;\;\;{\rm{s}}.{\rm{t}}.\;\;\;\;
\label{Problem_Interference_Suppression_step1_2_2}
{\left\| {{\overline {\bf{w}} }_{n,k}} \right\|_2} = 1, \\
\label{Problem_Interference_Suppression_step1_2_3}
&\;\;\;\;\;\;\;\;\;\;\;\;{\chi ^{ - 1}}\overline {\bf{w}} _{n,k}^H{\bf{D}}{\overline {\bf{w}} _{n,k}} + \rho  = 1, \\
&\;\;\;\;\;\;\;\;\;\;\;\;0 \leq \rho \leq 1.
\end{align}
\end{subequations}
% where $\rho  \in [0,1]$.
% \begin{align}
% \label{Problem_Interference_Suppression_step1_2_3}
% {\chi ^{ - 1}}\overline {\bf{w}} _{n,k}^H{\bf{D}}{\overline {\bf{w}} _{n,k}} + \rho  = 1
% \end{align}

% Given η, substituting (33b) and (33c) into (33a), we obtain a generalized Rayleigh quotient form as
By iteratively obtaining $\rho$ and ${\overline {\bf{w}} _{n,k}}$, we can solve \eqref{Problem_Interference_Suppression_step1_2}.
When $\rho$ is given, by substituting \eqref{Problem_Interference_Suppression_step1_2_2} and \eqref{Problem_Interference_Suppression_step1_2_3} into \eqref{Problem_Interference_Suppression_step1_2_1}, \eqref{Problem_Interference_Suppression_step1_2} is transformed into a generalized Rayleigh quotient form:
\begin{subequations}
\label{Problem_Interference_Suppression_step1_3}
\begin{align}
&{\max _{\left\{ {{{\overline {\bf{w}} }_{n,k}} } \right\}}}\;
\frac{{\overline {\bf{w}} _{n,k}^H{{\bf{H}}_{n,n,k}}{{\overline {\bf{w}} }_{n,k}}}}{{\overline {\bf{w}} _{n,k}^H\left( {{\bf{D}}{\chi ^{ - 1}} + \rho {\bf{I}}} \right){{\overline {\bf{w}} }_{n,k}}}}\\
&\;\;{\rm{s}}.{\rm{t}}.\;\;\;
{\left\| {{\overline {\bf{w}} }_{n,k}} \right\|_2} = 1.
\end{align}
\end{subequations}

% The optimal solutions of problem \eqref{Problem_Interference_Suppression_step1_3} are given by
Then, the optimal solution of problem \eqref{Problem_Interference_Suppression_step1_3} and the corresponding objective function value $\psi$ are respectively given by:
% \begin{equation}
\begin{align}
\label{Calculate_max_eigenvector}
{\overline {\bf{w}} _{n,k}}\left( {{\rho}} \right) 
&= {\boldsymbol{\theta}} \left( {{{\bf{H}}_{n,n,k}}, {\bf{D}}\chi^{-1} + {\rho}{\bf{I}}} \right),\\
\label{Calculate_max_eigenvalue}
{\psi}\left( {{\rho}} \right) 
&= {\eta}\left( {{{\bf{H}}_{n,n,k}}, {\bf{D}}\chi^{-1} + {\rho}{\bf{I}}} \right),
\end{align}
% \end{equation}
where ${\eta}\left( {\bf{A}}, {\bf{B}} \right)$ and ${\boldsymbol{\theta}} \left( {\bf{A}}, {\bf{B}} \right)$ denote the largest generalized eigenvalue and the corresponding eigenvector of $\left( {\bf{A}}, {\bf{B}} \right)$.
% Note that since $\psi$ is the generalized eigenvalue of two positive definite hermite matrices, $\psi$ is real.
Note that since ${{\bf{H}}_{n,n,k}}$ and ${\bf{D}}\chi^{-1} + {\rho}{\bf{I}}$ are both positive definite hermite matrices, $\psi$ is real.
% ??????????????????????
% ${\rho} = 1 - \overline {\bf{w}} _{n,k}^H{\bf{D}}{\overline {\bf{w}} _{n,k}}\chi^{-1}$

When ${\overline {\bf{w}} _{n,k}}$ is given, $\rho$ is updated by the equality constraint \eqref{Problem_Interference_Suppression_step1_2_3}, i.e., $\rho = 1 - {\overline {\bf{w}} _{n,k} ^H}{\bf{D}}\overline {\bf{w}} _{n,k}\chi^{-1}$.
By iteratively solving $\rho$ and ${\overline {\bf{w}} _{n,k}}$ until $\psi$ converges, problem \eqref{Problem_Interference_Suppression_step1_2} is solved.
The algorithm for solving problem \eqref{Problem_Interference_Suppression_step1_2} is summarized in Algorithm \ref{algorithm_interference_suppression}. 
% We solve problem \eqref{Problem_Interference_Suppression_step1_1} for each ${\overline {\bf{w}} _{n,k}}$ using Algorithm \ref{algorithm_interference_suppression} and solve problem \eqref{Problem_Interference_Suppression_step1_0} for ${\overline {\bf{v}}}$ in a similar way.
To obtain ${\overline {\bf{v}}}$ and ${\overline {\bf{w}} _{n,k}}, \forall n,k$ in step 1 of the IS scheme, we solve the corresponding problem \eqref{Problem_Interference_Suppression_step1_1} with Algorithm \ref{algorithm_interference_suppression} for each ${\overline {\bf{w}} _{n,k}}$ and solve problem \eqref{Problem_Interference_Suppression_step1_0} in a similar way for ${\overline {\bf{v}}}$.
% In step 1 of the IS scheme, we obtain each ${\overline {\bf{w}} _{n,k}}$ by solving problem \eqref{Problem_Interference_Suppression_step1_1} with Algorithm \ref{algorithm_interference_suppression}, and ${\overline {\bf{v}}}$ by solving problem \eqref{Problem_Interference_Suppression_step1_0} in a similar way.

% By independently solving $\widetilde K$ problems with the same form as problem \eqref{Problem_Interference_Suppression_step1_2}, we can obtain ${\overline {\bf{v}}}$ and ${\overline {\bf{w}} _{n,k}}, \forall n,k$.
% By independently solving $(\widetilde K +1)$ problems with the same form as problem \eqref{Problem_Interference_Suppression_step1_2}, we can obtain ${\overline {\bf{v}}}$ and ${\overline {\bf{w}} _{n,k}}, \forall n,k$.

% Finally, the algorithm for finding the direction of the beamforming vector in step 1 of the interference suppression scheme is summarized in Algorithm \ref{algorithm_interference_suppression}.

% The algorithm for solving problem \eqref{Problem_Interference_Suppression_step1_1} and 
% \eqref{Problem_Interference_Suppression_step1_0}

\begin{algorithm}
% \caption{Interference Suppression Scheme: Step 1}
\caption{Algorithm for solving problem \eqref{Problem_Interference_Suppression_step1_2}}
\label{algorithm_interference_suppression}
\begin{algorithmic}[1]
        %\REQUIRE ${\bf{Y}}$
    \STATE {Set $\varepsilon_3>0$, the iteration number $t=0$, the threshold $\chi$, and $\rho^{\left( 0 \right)} = 0$.}
    \STATE {Calculate ${\overline{\bf{w}}}_{n,k}^{\left( 0 \right)}$ and $\psi^{\left( 0 \right)}$ with $\rho^{\left( 0 \right)}$ by \eqref{Calculate_max_eigenvector} and \eqref{Calculate_max_eigenvalue}, respectively.}
    \REPEAT 
    \STATE {Calculate $\rho^{(t+1)}$ by $\rho^{(t+1)} = 1 - {\left( {\overline {\bf{w}} _{n,k}^{(t)}} \right)^H}{\bf{D}}\overline {\bf{w}} _{n,k}^{(t)}\chi^{-1}$.}
    \STATE {Calculate ${\overline{\bf{w}}}_{n,k}^{(t+1)}$ and $\psi^{(t+1)}$ with $\rho^{(t+1)}$ by \eqref{Calculate_max_eigenvector} and \eqref{Calculate_max_eigenvalue}, respectively.}
    \STATE {$t=t+1$.}
    \UNTIL {${\left| {\psi^{(t)} - \psi^{\left( t - 1 \right)}} \right|}  \leq \varepsilon_3$.}
    \STATE {Output ${\overline {\bf{w}} _{n,k}} = \overline {\bf{w}} _{n,k}^{(t)}$.}
\end{algorithmic}
\end{algorithm}

% \subsubsection{Step 2: Solve for ${q}$ and ${p_{n,k}}, \forall n,k$}
\subsubsection{Step 2}
% subsection
When ${\overline {\bf{v}}}$ and ${\overline {\bf{w}} _{n,k}}, \forall n,k$ are given, we substitute them into problem \eqref{Problem_sate_active} and obtain an optimization problem with respect to ${q}$ and ${p_{n,k}}, \forall n,k$.
% When given ${\overline {\bf{v}}}$ and ${\overline {\bf{w}} _{n,k}}, \forall n,k$, f
For the convenience of presentation, we denote ${\overline {\bf{V}}} = {\overline {\bf{v}}}{\overline {\bf{v}}}^H$ and ${{\overline{\bf W}}_{n,k}} = {\overline {\bf{w}} _{n,k}}\overline {\bf{w}} _{n,k}^H, \forall n,k$.
From \eqref{beamforming_vector_rewrite}, we have ${\bf{V}} = q{\overline {\bf{V}}}$ and ${{\bf{W}}_{n,k}} = p_{n,k}{\overline {\bf{W}} _{n,k}}$.
Besides, we denote the collection of all $p_{m,l}$ by $p$ and define ${\overline s _{n,k}}\left( {q,p} \right) \triangleq \ln \Big({\rm{Tr}}\left( {{{\bf{G}}_{n,k}}\overline {\bf{V}} } \right)q + \sum_{(m,l)} {{\rm{Tr}}\left( {{{\bf{H}}_{m,n,k}}{{{\bf{\overline W}}}_{m,l}}} \right){p_{m,l}}}  + \overline \sigma _{n,k}^2\Big)$, which is a concave function.
Then, problem \eqref{Problem_sate_active} can be rewritten as:
\begin{subequations}
\label{Problem_find_power_considering_interference1}
\begin{align}
\label{sum_rate_wp}
&\mathop {\max}_{q, \left\{ {p_{n,k}}\right\}} \;
% &\sum_{(n,k)} {{{\log }_2}\left( {1 + {\gamma _{n,k}}} \right)}    \\ 
\sum_{(n,k)} {\overline s _{n,k}}\left( {q,p} \right) - \sum_{(n,k)} \ln \left( {{\overline \alpha}_{n,k}}\left( q,p \right) \right) \\
% &\sum_{(n,k)} {{{\log }_2}\left( 1 + \frac{{{\rm{Tr}}\left( {{{\bf{H}}_{n,n,k}}{{{\bf{\overline W}}}_{n,k}}} \right){p_{n,k}}}}{{{{\overline \sigma }^2}_{n,k} + {{\overline {\bf{C}} }_{n,k}}}}\right)}    \\ 
&\;\;\;{\rm{s}}.{\rm{t}}.\;
\label{cons_interference_power_sate_wp}
% &\eqref{Problem_CVX_cons_inter_temp}, \eqref{Problem_CVX_cons_SINR}, \nonumber\\
\sum_{(n,k)} \!{{\rm{Tr}}\!\left( {{{\bf{H}}_{S,n}}{{{\bf{\overline W}}}_{n,k}}} \right){p_{n,k}}}+  {\rm{Tr}}\left( {{{\bf{G}}_{S}}{\overline {\bf{V}}}} \right){q} \leq {{\overline I}_S},  \\ 
% &\sum_{n = 1}^N {\sum_{k = 1}^{{K_n}} {{\rm{Tr}}\left( {{{\bf{H}}_{S,n}}{{{\bf{\overline W}}}_{n,k}}} \right){p_{n,k}}} } \nonumber\\
% & + {\rm{Tr}}\left( {{{\bf{G}}_{S}}{\overline {\bf{V}}}} \right){q} \leq {{\overline I}_S}  \\ 
\label{cons_SINR_wp}
&\;\;\;\;\;\;\;\;\;{\rm{Tr}} \! \left( {{\bf{G}}_{A}} {\overline {\bf{V}}} \right) \! {q}  \! \geq  \!  {\overline \beta}  \! \left(  \! {\overline \sigma }^2_A  \! + \!  \sum_{(m,l)}  \! {{\rm{Tr}} \! \left( {{{\bf{H}}_{A,m}}{{{\bf{\overline W}}}_{m,l}}} \right) \! {p_{m,l}}}   \! \right) \! ,\\
\label{cons_transmit_power_terrestrial_BS_wp}
&\;\;\;\;\;\;\;\;\;\sum_{k = 1}^{{K_n}} {{p_{n,k}}}  \leq {{\overline p}_n},\forall n, \\ 
\label{cons_transmit_power_aerial_BS_wp}
&\;\;\;\;\;\;\;\;\;{q} \leq {\overline q},  \\ 
\label{cons_transmit_power_0}
&\;\;\;\;\;\;\;\;\;{p_{n,k}} \geq 0, \forall n,k, {q} \geq 0,  
% &\eqref{Problem_CVX_cons_SINR} \\ 
% &{\rm{Tr}}\left( {{{\bf{G}}_{A}}{\overline {\bf{V}}}} \right){q} \geq \nonumber\\ 
% & {\overline \beta} \left( {{\overline \sigma }^2_A + \sum_{(m,l)} {{\rm{Tr}}\left( {{{\bf{H}}_{A,m}}{{{\bf{\overline W}}}_{m,l}}} \right){p_{m,l}}} } \right),
\end{align}
\end{subequations}
% where ${\gamma _{n,k}} = {\text{Tr}}\left( {{{\bf{H}}_{n,n,k}}{{{\bf{\overline W}}}_{n,k}}} \right){p_{n,k}}/({{\overline \sigma }^2}_{n,k} + {\text{Tr}}\left( {{{\bf{G}}_{n,k}}{\overline {\bf{V}}}} \right){q} + \sum_{(m,l) \ne (n,k)} {{\text{Tr}}\left( {{{\bf{H}}_{m,n,k}}{{{\bf{\overline W}}}_{m,l}}} \right){p_{m,l}}} )$, 
% where ${\gamma _{n,k}} = {\rm{Tr}}\left( {{{\bf{H}}_{n,n,k}}{{{\bf{\overline W}}}_{n,k}}} \right){p_{n,k}}/({{\overline \sigma }^2}_{n,k} + {\overline {\bf{C}} _{n,k}})$, 
where ${{\overline \alpha}_{n,k}}\left( q,p \right) \triangleq {\rm{Tr}}\left( {{{\bf{G}}_{n,k}}\overline {\bf{V}} } \right)q + \sum_{(m,l) \ne (n,k)} {{\rm{Tr}}\left( {{{\bf{H}}_{m,n,k}}{{{\bf{\overline W}}}_{m,l}}} \right){p_{m,l}}}  + \overline \sigma _{n,k}^2$.
% {\overline {\bf{C}} _{n,k}} = {\rm{Tr}}\left( {{{\bf{G}}_{n,k}}{\overline {\bf{V}}}} \right){q} + \sum_{(m,l) \ne (n,k)} {{\rm{Tr}}\left( {{{\bf{H}}_{m,n,k}}{{{\bf{\overline W}}}_{m,l}}} \right){p_{m,l}}}$.
% where ${\bf{V}} = {\overline {\bf{V}}}q$, ${{\bf{W}}_{m,l}} = {\overline {\bf{W}} _{m,l}}p_{m,l}$, 
% ${\overline {\bf{V}}} = {\overline {\bf{v}}}{\overline {\bf{v}}}^H$, ${{\overline{\bf W}}_{n,k}} = {\overline {\bf{w}} _{n,k}}\overline {\bf{w}} _{n,k}^H, \forall n,k$.

% It can be found that problem \eqref{Problem_find_power_considering_interference1} has a similar form to problem \eqref{Problem_CVX_SDR}.
% Since the objective function \eqref{sum_rate_wp} is in the form of a sum of logarithms, which is similar to \eqref{Problem_sate_active_sum_rate_terr}, it is non-convex.
% the second summation term
Since $\sum_{(n,k)} \ln \left( {{\overline \alpha}_{n,k}}\left( q,p \right) \right)$ in the objective function \eqref{sum_rate_wp} is concave, \eqref{sum_rate_wp} is non-concave.
To deal with this, we aim to maximize \eqref{sum_rate_wp} by maximizing its concave lower bound.
% Thus, the non-convex objective function \eqref{sum_rate_wp} needs to be transformed to its convex lower bound.
Similar to the transformation of problem \eqref{Problem_CVX_SDR} into problem \eqref{Problem_CVX_SDR_auxi_vari}, we introduce variables ${u_{n,k}}, \forall n,k$ and transform problem \eqref{Problem_find_power_considering_interference1} into:
\begin{subequations}
\label{Problem_find_power_considering_interference2}
\begin{align}
&\mathop {\max}_{{q}, \left\{ {p_{n,k}}, {u_{n,k}} \right\}} \;
% {\overline {\bf{\Phi}}}
\sum_{(n,k)} \left( {\overline s _{n,k}}\left( {q,p} \right) - {u_{n,k}} \right)    \\ 
&\;\;\;\;\;\;{\rm{s}}.{\rm{t}}.\;\;\;\;\;\;\;\;
\eqref{cons_interference_power_sate_wp} - \eqref{cons_transmit_power_0}, \nonumber \\ 
% \label{Problem_find_power_u}
% &{\overline {\bf{C}} _{n,k}} \!+\! {\rm{Tr}}\left( {{{\bf{H}}_{n,n,k}}{{\bf{\overline W}}_{n,k}}} \right){p_{n,k}} + {\overline \sigma }^2_{n,k} \ge\nonumber\\ 
% &  {e^{{s_{n,k}}}}, \forall n,k,\\
\label{Problem_find_power_v_noncon}
&\;\;\;\;\;\;\;\;\;\;\;\;\;\;\;\;\;\;\;{{\overline \alpha}_{n,k}}\left( q,p \right) \le {e^{u_{n,k}}}, \forall n,k.
\end{align}
\end{subequations}
% where ${\overline {\bf{\Phi}}} = {q}, \left\{ {p_{n,k}}, {u_{n,k}} \right\}$. 

Nonetheless, the constraint \eqref{Problem_find_power_v_noncon} is still non-convex due to the exponential function on the right-hand side of the inequality.
% It can be found that the structure of constraint \eqref{Problem_find_power_v_noncon} is similar to constraint \eqref{Problem_CVX_auxi_vari_v}, which is non-convex due to the exponential function on the right-hand side of the inequality.
% Then, similar to transforming  problem \eqref{Problem_CVX_SDR} into problem \eqref{Problem_CVX_SDR_auxi_vari}, we introduce variables ${s_{n,k}}, \forall n,k$ and ${u_{n,k}}, \forall n,k$ and their corresponding constraints to transform the objective function \eqref{sum_rate_wp} into its convex lower bound.
Thus, similar to the transformation of the constraint \eqref{Problem_CVX_auxi_vari_v} to the convex constraint \eqref{Problem_CVX_auxi_vari_v_Taylor}, we apply the first-order Taylor expansion to ${e^{u_{n,k}}}, \forall n,k$ around ${u_{n,k}^{(t)}}, \forall n,k$ in the $t$ iteration.
Then, compared with constraint \eqref{Problem_find_power_v_noncon}, we obtain a tighter constraint:
% into a convex form:
\begin{align}
\label{Problem_find_power_v}
{{\overline \alpha}_{n,k}}\left( q,p \right) \le {e^{u_{n,k}^{(t)}}}\left( {{u_{n,k}} - u_{n,k}^{(t)} + 1} \right), \forall n,k.
\end{align}
% 
% in the introduced constraints.
% Thus, problem \eqref{Problem_find_power_considering_interference2} is transformed into: 
By replacing constraint \eqref{Problem_find_power_v_noncon} with \eqref{Problem_find_power_v}, problem \eqref{Problem_find_power_considering_interference2} is transformed into:
\begin{subequations}
\label{Problem_find_power_considering_interference}
\begin{align}
\label{Problem_find_power_obj}
% &\mathop {\max}_{\left\{ {p_{n,k}}, {q}, {s_{n,k}}, {u_{n,k}} \right\}} \;
&\mathop {\max}_{{q}, \left\{ {p_{n,k}}, {u_{n,k}} \right\}} \;
\sum_{(n,k)} \left( {\overline s _{n,k}}\left( {q,p} \right) - {u_{n,k}} \right)    \\ 
&\;\;\;\;\;\;{\rm{s}}.{\rm{t}}.\;\;\;\;\;\;\;\;
% &\eqref{Problem_CVX_cons_inter_temp}, \eqref{Problem_CVX_cons_SINR}, 
\eqref{cons_interference_power_sate_wp} - \eqref{cons_transmit_power_0}, \eqref{Problem_find_power_v}. \nonumber 
% &\sum_{n = 1}^N {\sum_{k = 1}^{{K_n}} {{\rm{Tr}}\left( {{{\bf{H}}_{S,n}}{{{\bf{\overline W}}}_{n,k}}} \right){p_{n,k}}} }  + {\rm{Tr}}\left( {{{\bf{G}}_{S}}{\overline {\bf{V}}}} \right){q} \leq {{\overline I}_S}  \\ 
% &\sum_{k = 1}^{{K_n}} {{p_{n,k}}}  \leq {{\overline p}_n},\forall n \\
% &{q} \leq {\overline q}  \\ 
% &{\rm{Tr}}\left( {{{\bf{G}}_{A}}{\overline {\bf{V}}}} \right){q} \geq \nonumber\\ 
% & {\overline \beta}\left( {{\overline \sigma }^2_A + \sum_{(m,l)} {{\rm{Tr}}\left( {{{\bf{H}}_{A,m}}{{{\bf{\overline W}}}_{m,l}}} \right){p_{m,l}}} } \right)  \\ 
% \label{Problem_find_power_u}
% &{\overline {\bf{C}} _{n,k}} \!+\! {\rm{Tr}}\left( {{{\bf{H}}_{n,n,k}}{{\bf{\overline W}}_{n,k}}} \right){p_{n,k}} \!+\! {\overline \sigma }^2_{n,k} \!\ge\! {e^{{s_{n,k}}}}, \forall n,k,\\
% \label{Problem_find_power_v}
% &{\overline {\bf{C}} _{n,k}} \!+\! {\overline \sigma }^2_{n,k} \!\le\! {e^{u_{n,k}^{(t)}}}\left( {{u_{n,k}} - u_{n,k}^{(t)} + 1} \right), \forall n,k,
% &{\rm{Tr}}\left( {{{\bf{G}}_{n,k}}{\overline {\bf{V}}}} \right){q} + \sum_{(m,l)} {{\rm{Tr}}\left( {{{\bf{H}}_{m,n,k}}{{{\bf{\overline W}}}_{m,l}}} \right){p_{m,l}}}  \nonumber\\ 
% & + {\overline \sigma }^2_{n,k} \geq {e^{{s_{n,k}}}}, \forall n,k,  \\ 
% &{\rm{Tr}}\left( {{{\bf{G}}_{n,k}}{\overline {\bf{V}}}} \right){q} + \sum_{(m,l) \ne (n,k)} {{\rm{Tr}}\left( {{{\bf{H}}_{m,n,k}}{{{\bf{\overline W}}}_{m,l}}} \right)} {p_{m,l}} \nonumber\\ 
% &+ {\overline \sigma }^2_{n,k} \leq   {e^{u_{n,k}^{(t)}}}\left( {{u_{n,k}} - u_{n,k}^{(t)} + 1} \right), \forall n,k.\\
\end{align}
\end{subequations}
% where ${\overline {\bf{\Phi}}} = {q}, \left\{ {p_{n,k}}, {u_{n,k}} \right\}$. 
% 
Problem \eqref{Problem_find_power_considering_interference} is convex and can be easily solved by CVX \cite{grant2011cvx}.
%  and ${q}$ and ${p_{n,k}}, \forall n,k$ can be obtained.
Note that when the optimal solution of problem \eqref{Problem_find_power_considering_interference} is obtained, the equalities in \eqref{Problem_find_power_v} hold.
% due to the similar reason in the analysis of problem \eqref{Problem_CVX_SDR_auxi_vari}.
This is because if the left side of \eqref{Problem_find_power_v} is less than the right side, then $u_{n,k}$ will decrease until the equalities in \eqref{Problem_find_power_v} hold.
Since the constraint \eqref{Problem_find_power_v} involves the Taylor expansion, we update the points of Taylor expansion $u_{n,k}^{(t)}, \forall n,k$ after solving problem \eqref{Problem_find_power_considering_interference}, and repeat this process until the objective function of problem \eqref{Problem_find_power_considering_interference} converges.
% $u_{n,k}^{(t)}, \forall n,k$ converges.
The algorithm to find ${q}$ and ${p_{n,k}}, \forall n,k$ in step 2 of the IS scheme is summarized as Algorithm \ref{algorithm_power}.
The initialization algorithm used to obtain the initial points $u_{n,k}^{\left( 0 \right)}, \forall n,k$ is similar to Algorithm \ref{algorithm_Initial}, so it is omitted for brevity.
\begin{algorithm}
% \caption{Find the Power}
\caption{Step 2 of the interference suppression scheme}
\label{algorithm_power}
\begin{algorithmic}[1]
    \STATE {Set $\varepsilon_4>0$, the iteration number $t=0$, and $\overline \phi^{(0)}=-\infty$.}
    %, given ${\bf{\overline W}}_A, {{\overline{\bf W}}_{n,k}}$.}
    \STATE {Initialize the algorithm with feasible points $u_{n,k}^{\left( 0 \right)}, \forall n,k$.}
    % \STATE {$t=0$.}
    \REPEAT 
    \STATE {Given $u_{n,k}^{(t)}, \forall n,k$, solve problem \eqref{Problem_find_power_considering_interference}  by CVX and obtain the value of \eqref{Problem_find_power_obj}, i.e., $\overline \phi^{(t+1)}$ and the solution $q$, $p_{n,k}, \forall n,k$, and ${u_{n,k}}, \forall n,k$.}
    \STATE {Update $u_{n,k}^{(t+1)}, \forall n,k$ by \eqref{update_v_t}, where ${\bf{V}}^{(t + 1)} = q\overline {\bf{V}}$ and ${\bf{W}}_{m,l}^{(t + 1)} = p_{m,l} {\overline {\bf{W}} _{m,l}}, \forall m,l$.}
    % $u_{n,k}^{(t)} = \ln \left( {\overline \sigma }^2_{n,k} + \sum_{(m,l) \ne (n,k)} {{\rm{Tr}}\left( {{{\bf{H}}_{m,n,k}}{{{\bf{\overline W}}}_{m,l}}} \right){p_{m,l}}} \right)$.}
    \STATE {$t=t+1$.}
    \UNTIL {$|\overline \phi^{(t)}-\overline \phi^{(t-1)}| \le \varepsilon_4$.}
    % \UNTIL {$\sum_{(n,k)} {\left| {u_{n,k}^{(t)} - u_{n,k}^{\left( {t - 1} \right)}} \right|}  \le \varepsilon_4$.}
    % \UNTIL {The difference in the value of the objective function of the two adjacent loops is less than the predefined threshold $\varepsilon_4$.}
    \STATE Output $q$ and $p_{n,k}, \forall n,k$.
\end{algorithmic}
\end{algorithm}

% Finally, the variables of problem \eqref{Problem_sate_active} can be recovered by \eqref{beamforming_vector_rewrite}.

% Note that the number of variables in problem \eqref{Problem_find_power_considering_interference} is reduced compared to problem \eqref{Problem_CVX_penalty}.
% Thus, the complexity of solving problem \eqref{Problem_find_power_considering_interference} is reduced compared to that of solving problem \eqref{Problem_CVX_penalty}, which is analyzed in detail in the following.

\subsubsection{Complexity Analysis}
% , which is analyzed as follows.
% % Then, we analyze the complexity of step 2 of the IS scheme, i.e., Algorithm \ref{algorithm_power}.
% Since ${\overline s _{n,k}}\left( {q,p} \right)$ involves a logarithmic function in the objective function \eqref{Problem_find_power_obj}, problem \eqref{Problem_find_power_considering_interference} is a generalized nonlinear convex problem, whose complexity is difficult to analyze directly \cite{li2020cooperative}.
% % % To handle logarithmic functions, a successive approximation heuristic is constructed in CVX \cite{grant2011cvx}. 
% % In fact, CVX handles the logarithmic function by a successive approximation heuristic \cite{grant2011cvx}. 
% % By approximating ${\overline s _{n,k}}\left( {q,p} \right)$ with a sequence of SOC, problem \eqref{Problem_find_power_considering_interference} can be transformed into a SOCP problem.
% % % \cite{grant2011cvx}
% % This problem can be solved by the interior-point method with complexity ${\cal{O}}\left( {(2\widetilde K  + 1)}^{3.5} \log \left( 1/\epsilon  \right) \right)$ given a solution accuracy $\epsilon > 0$ \cite{li2020cooperative}.
The complexity of step 1 of the IS scheme is ${\cal{O}}( \widetilde K {I_3}( M_G^2 + M_G  + 2M_G^3 ) + {I_3}( M_A^2 + M_A  + 2M_A^3 ))$, where $I_{3}$ is the number of iterations.
This complexity is relatively low since the variables are updated with closed-form expressions and there is no optimization problem involved in Algorithm \ref{algorithm_interference_suppression}.
The complexity of the IS scheme is mainly affected by step 2, i.e., Algorithm \ref{algorithm_power}, whose complexity is mainly affected by solving problem \eqref{Problem_find_power_considering_interference}.
% Firstly, we analyze the complexity of solving problem \eqref{Problem_find_power_considering_interference}.
Similar to the analysis in Section \ref{Complexity_Analysis_proposed}, problem \eqref{Problem_find_power_considering_interference} can be transformed into a SOCP problem, which is solved by the interior-point method with complexity ${\cal{O}}\left( {(2\widetilde K  + 1)}^{3.5} \log \left( 1/\epsilon  \right) \right)$ given a solution accuracy $\epsilon > 0$ \cite{li2020cooperative}.
Since problem \eqref{Problem_find_power_considering_interference} has fewer variables than problem \eqref{Problem_CVX_penalty}, the complexity of solving problem \eqref{Problem_find_power_considering_interference} is lower than that of solving problem \eqref{Problem_CVX_penalty}.
Therefore, the complexity of the IS scheme is lower than that of the PIBF scheme.
Nonetheless, since we need to solve problem \eqref{Problem_find_power_considering_interference} and update $u_{n,k}^{(t)}, \forall n,k$ iteratively in Algorithm \ref{algorithm_power}, the complexity is still high.
% Nonetheless, since the first-order Taylor expansion is still performed in constraint \eqref{Problem_find_power_v}, we need to solve problem \eqref{Problem_find_power_considering_interference} and update $u_{n,k}^{(t)}, \forall n,k$ iteratively.
In the following, we propose the ZF scheme to further reduce the complexity.

\subsection{Zero-Forcing Scheme}
\label{sec_Zero_2orcing_Scheme}
Based on the principle described in section \ref{sec_Low_Com_Scheme_Principle}, we use the two-step algorithm to design ${\overline {\bf{v}}}$, ${\overline {\bf{w}} _{n,k}}, \forall n,k$, ${q}$, and ${p_{n,k}}, \forall n,k$.
% To overcome this problem, inspired by \cite{ouyang2017secrecy}, we propose a suboptimal scheme called zero-forcing scheme.

% First, we rewrite the beamforming vectors as \eqref{beamforming_vector_rewrite}.

% \subsubsection{Step 1: Solve for ${\overline {\bf{v}}}$, ${\overline {\bf{w}} _{n,k}}, \forall n,k$}
\subsubsection{Step 1}
% The main idea of step 1 of the ZF scheme is that the interference (excluding the interference to the satellite terminal) is completely eliminated.
In the ZF scheme, each normalized beamforming variable is designed individually.
The goal of the design is not only to eliminate the interference (excluding the interference to the satellite terminal) but also to maximize the signal strength over the channel to the intended receiver.
% by completely eliminating the interference (excluding the interference to the satellite terminal), while maximizing the signal strength over the channel to the intended receiver.

% We first take ${\overline {\bf{w}} _{n,k}}$ as an example to show how to obtain the normalized beamforming vector.
% We denote $\widetilde K = \sum_{n = 1}^N {{K_n}}$.
% By assuming ${M_G} \textgreater \widetilde K$, the beamforming vector ${\overline {\bf{w}} _{n,k}}$ of terrestrial BS can be designed.
% Consequently, the signals leaking to all other receivers in the terrestrial network and the aerial user can be completely eliminated, namely:
% the channel vector from the terrestrial BS in the $n$-th cell to the $k$-th receiver in the $n$-th cell is ${{\bf{h}}_{n,n,k}}$, and
Specifically, the interference channels include the interference from the terrestrial BS in the $n$-th cell to all other receivers in the terrestrial network and the aerial user.
% For designing ${\overline {\bf{w}} _{n,k}}$, the channel to the intended receiver is ${{\bf{h}}_{n,n,k}}$, and the interference includes the signals leaking to all other receivers in the terrestrial network and the aerial user.
Thus, we combine the interference channels into a new matrix ${{\bf{T}}_{n,k}} \in {\mathbb{C}^{{M_G} \times \widetilde K}}$, whose columns are composed of ${{\bf{h}}_{n,m,l}}, \forall (m,l) \ne (n,k)$ and ${{\bf{h}}_{A,n}}$.
% Note that the total number of terrestrial terminals is $\widetilde K = \sum_{n = 1}^N {{K_n}}$.
By assuming ${M_G} > \widetilde K$, the normalized beamforming vector of the terrestrial BS ${\overline {\bf{w}} _{n,k}}$ can be designed, namely:
\begin{align}
\label{ZF_wnk}
{\bf{T}}_{n,k}^H{\overline {\bf{w}} _{n,k}} = {{\bf{0}}_{{\widetilde K} \times 1}}.
\end{align}
% where ${{\bf{T}}_{n,k}} \in {\mathbb{C}^{{M_G} \times \widetilde K}}$, and its columns are composed of ${{\bf{h}}_{n,m,l}}, \forall (m,l) \ne (n,k)$ and ${{\bf{h}}_{A,n}}$.
% where ${{\bf{T}}_{n,k}} = \left[ {{{\bf{h}}_{n,1,1}}, {{\bf{h}}_{n,1,1}}, \cdots ,{{\bf{h}}_{n,N,K}}} \right]$.

% the normalized beamforming vector
Therefore, ${\overline {\bf{w}} _{n,k}}$ can be designed to maximize ${\left| {{\bf{h}}_{n,n,k}^H{{\overline {\bf{w}} }_{n,k}}} \right|^2}$ and exist in the null-space of ${{\bf{T}}_{n,k}}$: 
\begin{align}
\label{Calculate_ZF_wnk}
{\overline {\bf{w}} _{n,k}} = \frac{{\left( {{{\bf{I}}_{M_G}} - {\bf{T}}_{n,k}^ \bot } \right){{\bf{h}}_{n,n,k}}}}{{{{\left\| {\left( {{{\bf{I}}_{M_G}} - {\bf{T}}_{n,k}^ \bot } \right){{\bf{h}}_{n,n,k}}} \right\|}_2}}},
\end{align}
where ${\bf{T}}_{n,k}^ \bot  = {{\bf{T}}_{n,k}}{\left( {{\bf{T}}_{n,k}^H{{\bf{T}}_{n,k}}} \right)^{ - 1}}{\bf{T}}_{n,k}^H$ is the orthogonal projection matrix of ${{\bf{T}}_{n,k}}$.
% \begin{align}
% % \label{Calculate_max_eigenvector}
% {\overline {\bf{w}} _{n,k}} = {\widetilde {\bf{w}}_{n,k}}/{\left\| {{{\widetilde {\bf{w}}}_{n,k}}} \right\|_2}
% \end{align}
% where ${\widetilde {\bf{w}}_{n,k}} = \left( {{{\bf{I}}_{M_G}} - {\bf{T}}_{n,k}^ \bot } \right){{\bf{h}}_{n,n,k}}$ and
% ${\bf{T}}_{n,k}^ \bot  = {{\bf{T}}_{n,k}}{\left( {{\bf{T}}_{n,k}^H{{\bf{T}}_{n,k}}} \right)^{ - 1}}{\bf{T}}_{n,k}^H$ is the orthogonal projection matrix of ${{\bf{T}}_{n,k}}$.

Similarly, by constructing a new matrix ${\bf{Q}} \in {{\mathbb C}^{{M_A} \times \widetilde K}}$, whose columns are composed of the interference channels from the aerial BS to all terrestrial terminals, i.e., ${{\bf{g}}_{n,k}}, \forall n,k$, and assuming ${M_A} > \widetilde K$, ${\overline {\bf{v}}}$ can be designed as: 
\begin{align}
\label{Calculate_ZF_wA}    
{\overline {\bf{v}}} = \frac{{\left( {{{\bf{I}}_{M_A}} - {\bf{Q}}^ \bot } \right){{\bf{g}}_{A}}}}{{{{\left\| {\left( {{{\bf{I}}_{M_A}} - {\bf{Q}}^ \bot } \right){{\bf{g}}_{A}}} \right\|}_2}}},
\end{align}
where ${\bf{Q}}^ \bot  = {\bf{Q}}{\left( {{\bf{Q}}^H{\bf{Q}}} \right)^{ - 1}}{\bf{Q}}^H$.
Note that in the ZF scheme, the normalized beamforming vector ${\overline {\bf{w}} _{n,k}}$ multiplied by each interference channel is required to be zero in \eqref{ZF_wnk}.
%, i.e., ${{\bf{h}}_{n,m,l}}, \forall (m,l) \ne (n,k)$ and ${{\bf{h}}_{A,n}}$.
In contrast, in the IS scheme, ${\overline {\bf{w}} _{n,k}}$ multiplied by the sum of the interference channel matrices ${\bf{D}}$ is required to be less than the threshold in the constraint \eqref{Problem_Interference_Suppression_step1_1_3}.
% is required to multiply to zero for each interference channel.
Since some channels cancel each other out when calculating ${\bf{D}}$ in the IS scheme, the value range of ${\overline {\bf{w}} _{n,k}}$ in the IS scheme is larger than that in the ZF scheme.
% we can choose the value of ${\overline {\bf{w}} _{n,k}}$ in a larger range than in the ZF scheme.
% the value of ${\overline {\bf{w}} _{n,k}}$ is designed more flexible in the IS scheme, while ${\overline {\bf{w}} _{n,k}}$ is strictly designed in the ZF scheme.

% \subsection{Step 2: finding the power}
% \subsubsection{Step 2: Solve for ${q}$ and ${p_{n,k}}, \forall n,k$}
\subsubsection{Step 2}
% When given ${\overline {\bf{v}}}$, ${\overline {\bf{w}} _{n,k}}, \forall n,k$, we can obtain ${q}$ and ${p_{n,k}}, \forall n,k$ by solving problem \eqref{Problem_find_power_ignoring_interference}.

% Finally, the variables of problem \eqref{Problem_sate_active} can be recovered by \eqref{beamforming_vector_rewrite}.

% In a similar way to writing problem 22 as problem 23
When ${\overline {\bf{v}}}$ and ${\overline {\bf{w}} _{n,k}}, \forall n,k$ are given, ${q}$ and ${p_{n,k}}, \forall n,k$ can be obtained in a similar way to step 2 of the IS scheme, in which we first rewrite problem \eqref{Problem_sate_active} as problem \eqref{Problem_find_power_considering_interference1}. 
However, problem \eqref{Problem_find_power_considering_interference1} is easier to solve in the ZF scheme because the interference is eliminated in step 1, i.e., ${\rm{Tr}}\left( {{{\bf{G}}_{n,k}}\overline {\bf{V}} } \right)q + \sum_{(m,l) \ne (n,k)} {{\rm{Tr}}\left( {{{\bf{H}}_{m,n,k}}{{{\bf{\overline W}}}_{m,l}}} \right){p_{m,l}}}=0, \forall n,k$ in \eqref{sum_rate_wp} and $\sum_{(m,l)} {{\rm{Tr}}\left( {{{\bf{H}}_{A,m}}{{{\bf{\overline W}}}_{m,l}}} \right){p_{m,l}}}=0$ in \eqref{cons_SINR_wp}.
% Thus, we drop the corresponding interference items in \eqref{sum_rate_wp} and \eqref{cons_SINR_wp}.
% when given ${\overline {\bf{v}}}$ and ${\overline {\bf{w}} _{n,k}}, \forall n,k$, 
Therefore, problem \eqref{Problem_find_power_considering_interference1}  is rewritten as:
% we denote ${\overline {\bf{V}}} = {\overline {\bf{v}}}{\overline {\bf{v}}}^H$, ${{\overline{\bf W}}_{n,k}} = {\overline {\bf{w}} _{n,k}}\overline {\bf{w}} _{n,k}^H, \forall n,k$, 
% ${\bf{V}} = {\overline {\bf{V}}}q$, ${{\bf{W}}_{m,l}} = {\overline {\bf{W}} _{m,l}}p_{m,l}$, and
% problem \eqref{Problem_find_power_considering_interference1} can be rewritten as:
\begin{subequations}
\label{Problem_find_power_ignoring_interference1}
\begin{align}
\label{Problem_find_power_ignoring_interference_obj}
&\mathop {\max}_{\left\{ {p_{n,k}} \right\}, {q} } \;
% &\sum_{(n,k)} {{{\log }_2}\left( {1 + {\gamma _{n,k}}} \right)}    \\ 
\sum_{(n,k)} {{{\log }_2}\left( {1 + {\text{Tr}}\left( {{{\bf{H}}_{n,n,k}}{{{\bf{\overline W}}}_{n,k}}} \right){p_{n,k}}/\overline \sigma _{n,k}^2} \right)}    \\ 
&\;\;\;{\rm{s}}.{\rm{t}}.\;\;\;\;\;
% &\eqref{cons_interference_power_sate_wp} - \eqref{cons_transmit_power_aerial_BS_wp}, \eqref{cons_transmit_power_0} \nonumber \\ 
\eqref{cons_interference_power_sate_wp}, \eqref{cons_transmit_power_terrestrial_BS_wp} - \eqref{cons_transmit_power_0}, \nonumber \\ 
\label{cons_SINR_wp_ign}
&\;\;\;\;\;\;\;\;\;\;\;\;\;{\rm{Tr}}\left( {{{\bf{G}}_{A}}{\overline {\bf{V}}}} \right){q} \geq {\overline \beta}  {\overline \sigma }^2_A.
\end{align}
\end{subequations}
Problem \eqref{Problem_find_power_ignoring_interference1} is convex and can be easily solved by CVX \cite{grant2011cvx}.

\subsubsection{Complexity Analysis}
% First, we calculate the computational complexity of step 1 of the ZF scheme.
% Firstly, the complexity of calculating ${\bf{T}}_{n,k}^ \bot$ is ${\cal{O}}\left( {4M_G^3} \right)$ and the complexity of calculating ${\overline {\bf{w}} _{n,k}}$ by \eqref{Calculate_ZF_wnk} is ${\cal{O}}(M_G^2 + {M_G})$.
% Besides, the complexity of calculating ${\bf{Q}}^ \bot$ is ${\cal{O}}\left( {4M_A^3} \right)$ and the complexity of calculating ${\overline {\bf{v}}}$ by \eqref{Calculate_ZF_wA} is ${\cal{O}}(M_A^2 + {M_A})$.
% Therefore, the complexity of step 1 of the ZF scheme is ${\cal{O}}(\widetilde K \left( 4M_G^3 + M_G^2 + {M_G} \right) + 4 M_A^3 + M_A^2 + {M_A})$.

% First, we calculate the computational complexity of step 1 of the ZF scheme by using complex multiplication (CM) \cite{chen2019exploiting35}.
% Calculating ${\bf{T}}_{n,k}^ \bot$ requires $(M_G^3{\widetilde K^2} + \beta {\widetilde K^2})$ CMs.
% Calculating ${\overline {\bf{w}} _{n,k}}$ by \eqref{Calculate_ZF_wnk} requires $(M_G^2 + {M_G})$ CMs.
% Calculating ${\bf{Q}}^ \bot$ requires $(M_A^3{\widetilde K^2} + \beta {\widetilde K^2})$ CMs.
% Calculating ${\overline {\bf{v}}}$ by \eqref{Calculate_ZF_wA} requires $(M_A^2 + {M_A})$ CMs.
% Therefore, step 1 of the ZF scheme requires $(\widetilde K \left( {M_G^3{{\widetilde K}^2} + \beta {{\widetilde K}^2} + M_G^2 + {M_G}} \right) + M_A^3{\widetilde K^2} + \beta {\widetilde K^2} + M_A^2 + {M_A})$ CMs.

Firstly, we analyze the complexity of solving problem \eqref{Problem_find_power_ignoring_interference1}.
% Since the objective function \eqref{Problem_find_power_ignoring_interference_obj} involves the logarithmic function, problem \eqref{Problem_find_power_ignoring_interference1} is a generalized nonlinear convex problem, whose complexity is difficult to analyze directly \cite{li2020cooperative}.
% 
% % To handle logarithmic functions, a successive approximation heuristic is constructed in CVX \cite{grant2011cvx}. 
% In fact, CVX handles the logarithmic function by a successive approximation heuristic \cite{grant2011cvx}. 
% By approximating \eqref{Problem_find_power_ignoring_interference_obj} with a sequence of SOC, problem \eqref{Problem_find_power_ignoring_interference1} can be transformed into a SOCP problem.
% This problem can be solved by the interior-point method with complexity ${\cal{O}}\left( {(\widetilde K  + 1)}^{3.5} \log \left( 1/\epsilon  \right) \right)$ given a solution accuracy $\epsilon > 0$ \cite{li2020cooperative}.
Similar to the analysis in Section \ref{Complexity_Analysis_proposed}, problem \eqref{Problem_find_power_ignoring_interference1} can be transformed into a SOCP problem, which is solved by the interior-point method with complexity ${\cal{O}}\left( {(\widetilde K  + 1)}^{3.5} \log \left( 1/\epsilon  \right) \right)$ given a solution accuracy $\epsilon \!> \!0$ \cite{li2020cooperative}.
The complexity of the ZF scheme is mainly determined by step 2, where only one time is required to solve problem \eqref{Problem_find_power_ignoring_interference1}.
Hence, the complexity of the ZF scheme is lower than that of the IS scheme.

\subsection{Maximal Ratio Combining Scheme}
% Besides, we propose the maximal ratio combining (MRC) scheme for performance comparison.
% \subsubsection{Step 1}
% The only difference between the MRC scheme and the IS scheme is step 1.
% The MRC scheme simplifies step 1 of the IS scheme based on the idea of maximum ratio combining.
To simplify the design of the normalized beamforming vector in step 1 of the IS and ZF schemes, we propose an MRC scheme based on the idea of maximum ratio combination.
% Based on the idea of maximum ratio combination, the MRC scheme simplifies the design of the normalized beamforming vector in step 1 of the IS scheme.
Specifically, in step 1, the normalized beamforming vectors ${\overline {\bf{v}}}$ and ${\overline {\bf{w}} _{n,k}}, \forall n,k$ are obtained by:  
% the maximum ratio transmission method.
\begin{subequations}
\label{Maximal_Ratio_Combining_Step1}
\begin{align}
&{\overline {\bf{v}}} = {{\bf{g}}_{A}}/{\left\| {{{\bf{g}}_{A}}} \right\|_2},\\
&{\overline {\bf{w}} _{n,k}} = {\bf{h}}_{n,n,k}/{\left\| {{\bf{h}}_{n,n,k}} \right\|_2}, \forall n,k.
\end{align}
\end{subequations}
% \cite{lo1999maximum}.

% \subsubsection{Step 2}
In step 2, given ${\overline {\bf{v}}}$ and ${\overline {\bf{w}} _{n,k}}, \forall n,k$, ${q}$ and ${p_{n,k}}, \forall n,k$ are obtained by Algorithm \ref{algorithm_power}.
% solve problem \eqref{Problem_sate_active} to .
% Finally, the variables of problem \eqref{Problem_sate_active} are recovered by \eqref{beamforming_vector_rewrite}.
% 
% \subsubsection{Complexity Analysis}
Since step 1 in the MRC scheme uses a simple closed-form solution to find ${\overline {\bf{v}}}$ and ${\overline {\bf{w}} _{n,k}}, \forall n,k$, and step 2 is the same as that in the IS scheme, the complexity of the MRC scheme is lower than that of the IS scheme.
%, but is higher than that of the ZF scheme.
% Since step 2 of the MRC scheme needs to solve the optimization problem \eqref{Problem_find_power_considering_interference} many times, the complexity of the MRC scheme is higher than that of the ZF scheme.
In addition, the complexity of step 1 of the MRC scheme is lower than that of step 1 of the ZF scheme, while the complexity of step 2 of the MRC scheme is higher than that of step 2 of the ZF scheme.

\section{Simulation Results}
\label{sec_Simulation_Results}

In this section, simulation results are shown to evaluate the performance of the PIBF scheme and the low-complexity schemes\footnote{Note that the channel realizations for which feasible solutions cannot be obtained by the PIBF scheme or other schemes are not shown.} and demonstrate the advantages of HCSSA compared with TCSSA. 
%  in the simulation results
% \footnote{Note that in some channel realizations, the PIBF scheme or other schemes cannot obtain feasible solutions, then these channel realizations will not be presented in the simulation results. }. 
% As is shown in Fig. \ref{scenario_simulation}, 
% We consider the simulation scenario in 
Fig. \ref{scenario_simulation} shows the projection positions of each device on the ground.
The altitudes of the satellite and the aerial user are $3.5786 \times {10^7}{\rm{m}}$ and $10{\rm{km}}$, respectively. 
The altitudes of the other devices are $0{\rm{m}}$.
% Fig. \ref{scenario_simulation} shows the projection positions of each device on the ground.
% scenario_location_N2K2_0407
% scenario_simulation_N2K3_0404
\begin{figure}[t]
\centering
\includegraphics[width=6cm]{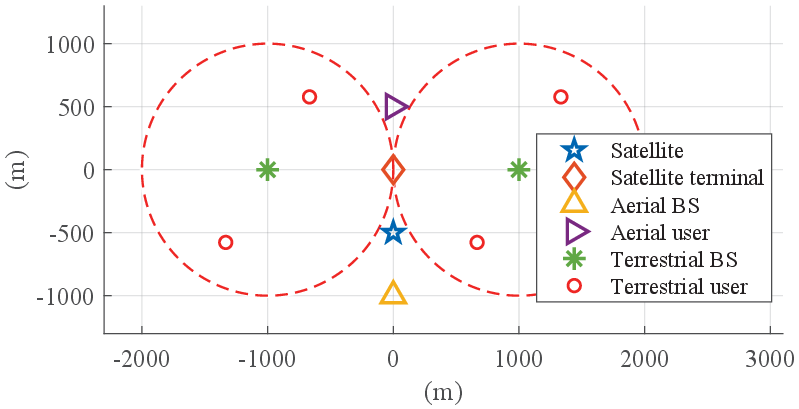}
%\vspace{-0.3cm}
\caption{The simulation scenario.}
\label{scenario_simulation}
\vspace{-0.4cm}
\end{figure}
The corresponding simulation parameters are given in Table \ref{table_simulation_setting} \cite{abdi2003new, an2016secure,li2018robust, lin2020robust}.
%, where ${p_{\rm{s}}}={\bf{u}}^H {\bf{u}}$.
% Table \ref{table_simulation_setting} presents the main simulation parameters.
The beamforming vector for the satellite terminal is given by ${\bf{u}} = \sqrt{p_{\rm{s}}} {\bf{f}}_{S}/{\left\| {\bf{f}}_{S} \right\|_2}$, where ${\bf{f}}_{S} \in \mathbb{C}^{M_S \times 1}$ is the channels from the satellite to the satellite terminal, which also adopts the channel model in \eqref{Shadowed_Rician_channel}.
Besides, we set $\overline d/\lambda=1/2$ for simplicity.
% The Rician factor is set to $\kappa=10$.
The beam angles between the satellite and satellite terminal, the aerial user, as well as terrestrial terminals, are set to $0.01^\circ$, $0.4^\circ$, and $0.8^\circ$, respectively \cite{li2018robust}.
% ${\varphi _{sate\_ter}} = 0.01^\circ , 
% {\varphi _{airplane}} = 0.4^\circ , 
% {\varphi _{terres\_ter}} = 0.8^\circ$ 
% 
% The LoS and NLoS path losses in ${\rm{dB}}$ are ${L_{\rm{LoS}}}\left( {d,f} \right) = 28 + 22{\log _{10}}\left( d \right) + 20{\log _{10}}\left( f \right)$ and ${L_{\rm{NLoS}}}\left( {d,f} \right) = 22.7 + 36.7{\log _{10}}\left( d \right) + 26{\log _{10}}\left( f \right)$, respectively, where $d$ is in ${\rm{m}}$ and $f$ is in ${\rm{GHz}}$ \cite{access2010further}.
% 
The noise variance of all receivers is set to $\sigma^2$, i.e., $\sigma _{n,k}^2=\sigma _A^2=\sigma^2, \forall n,k$, which is given by $\sigma ^{2} = \overline\kappa {T}{B}$ \cite{lin2020robust}.
% The noise variance of all receivers is set to $\sigma^2$, i.e., $\sigma _{n,k}^2=\sigma^2, \forall n,k$ and $\sigma _A^2=\sigma^2$, which is given by $\sigma ^{2} = \overline\kappa {T}{B}$ \cite{lin2020robust}.
% For the PIBF scheme, the parameters of Algorithm \ref{algorithm_penalty_IBF_LYW} are $\varepsilon_1=3\times 10^{-3}$, $\varepsilon_2=10^{-7}$, ${t_{\max}}=100$, $\xi=10^{-3}$, and $\omega=10$.
For the PIBF scheme, the parameters of Algorithm \ref{algorithm_penalty_IBF_LYW} are $\varepsilon_1=3\times 10^{-3}$, $\varepsilon_2=10^{-3}$, ${t_{\max}}=20$, $\xi=10^{-5}$, and $\omega=10$.
For the IS scheme, the accuracy parameters of Algorithm \ref{algorithm_interference_suppression} and Algorithm \ref{algorithm_power} are given by $\varepsilon_3=10^{-18}$ and $\varepsilon_4=10^{-2}$, respectively.
%  is the Boltzmann constant \cite{lin2020robust}.
\begin{table}[htbp]
\centering  % 显示位置为中间
\caption{Main simulation parameters}  % 表格标题
\label{table_simulation_setting}  % 用于索引表格的标签
%字母的个数对应列数，|代表分割线
% l代表左对齐，c代表居中，r代表右对齐
\begin{tabular}{cc}  
\toprule  % 顶部线
Parameter & Value \\  % 表格中的内容，用&分开，\\表示下一行
\midrule  % 中部线
Satellite channel parameters & 
$(\Omega, b, m) = (0.835, 0.126, 10)$  \\
Transmit power of satellite & ${p_{\rm{s}}} = 40{\rm{W}}$  \\
Maximal antenna gain of satellite & ${b_{\max}} = 52.1{\rm{dB}}$  \\
Rician factor & $\kappa=10$  \\
3-dB angle & ${\varphi _{3{\rm{dB}}}} = 0.4^\circ$  \\
Number of antennas & ${M_S} = 7$, ${M_A} = 8$, ${M_G} = 8$ \\
Carrier frequency & $f = 18{\rm{GHz}}$  \\
Signal bandwidth & $B = 0.5{\rm{MHz}}$  \\
Noise temperature & $T = 300{\rm{K}}$  \\
Boltzmann constant & $\overline\kappa = 1.38 \times 10^{-23}{\rm{J/K}}$\\
\bottomrule  % 底部线
\end{tabular}
% \vspace{-0.4cm}
\end{table}

\begin{figure}[t]
\centering
\includegraphics[width=7cm]{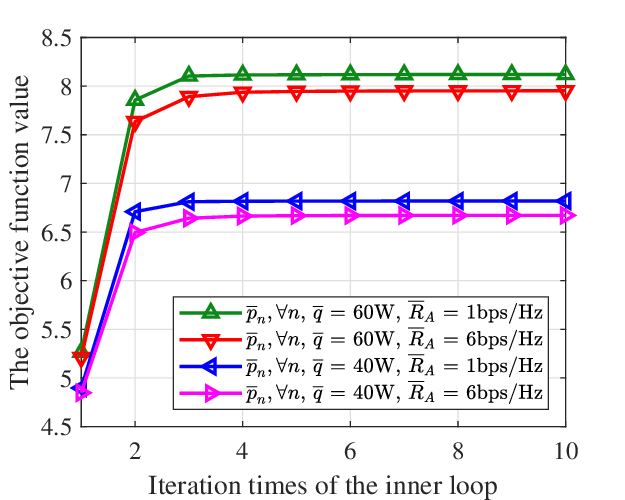}
%\vspace{-0.3cm}
\caption{The objective function value of problem \eqref{Problem_CVX_penalty} versus the number of inner loop iterations of Algorithm \ref{algorithm_penalty_IBF_LYW}: ${\overline I _S} = 2 \times {10^{ - 12}}{\rm{mW}}$.}
% The legend represents the values of ${\overline p _n}, \forall n, {\overline q}$, and ${\overline R_A}$, respectively.
\label{sum_rate_conver_PIBF_in}
\vspace{-0.4cm}
\end{figure}
% Fig. \ref{sum_rate_conver_PIBF_in} shows the convergence performance of the PIBF scheme in one channel realization.
Fig. \ref{sum_rate_conver_PIBF_in} shows the convergence performance of the inner loop in the first outer loop of Algorithm \ref{algorithm_penalty_IBF_LYW} in one channel realization.
It can be seen that as the number of iterations of the inner loop increases, the objective function value of problem \eqref{Problem_CVX_penalty} increases and converges within 10 inner loop iterations in all cases.
% Fig. \ref{sum_rate_conver_PIBF_in} shows that the inner loop of Algorithm \ref{algorithm_penalty_IBF_LYW} converges within 100 inner loopiterations in all cases.
% 
% 
% 
% \begin{figure}[t]
% \centering
% \includegraphics[width=7cm]{}
% %\vspace{-0.3cm}
% % iteration times
% % process
% % The sum rate of terrestrial network 
% \caption{The objective function value of problem \eqref{Problem_CVX_penalty} versus the number of outer loop iterations of Algorithm \ref{algorithm_penalty_IBF_LYW}: ${\overline I _S} = 2 \times {10^{ - 12}}{\rm{mW}}$ and $\xi=10^{-3}$.}
% % The legend represents the values of ${\overline p _n}, \forall n, {\overline q}$, and ${\overline R_A}$, respectively.
% \label{sum_rate_conver_PIBF_out}
% \vspace{-0.4cm}
% \end{figure}
% 
% sum_rate_convergence_N2K2_0415_PIBF_out_obj
% ${\overline I _S} = 2 \times {10^{ - 12}}{\rm{mW}}$, ${\overline R_A} = 3{\rm{bps/Hz}}$, ${\overline p _0} = 60{\rm{W}}$, ${\overline p _n} = 60{\rm{W}},\forall n$, and $\xi=10^{-3}$
\begin{figure}[t]
\centering
\includegraphics[width=7cm]{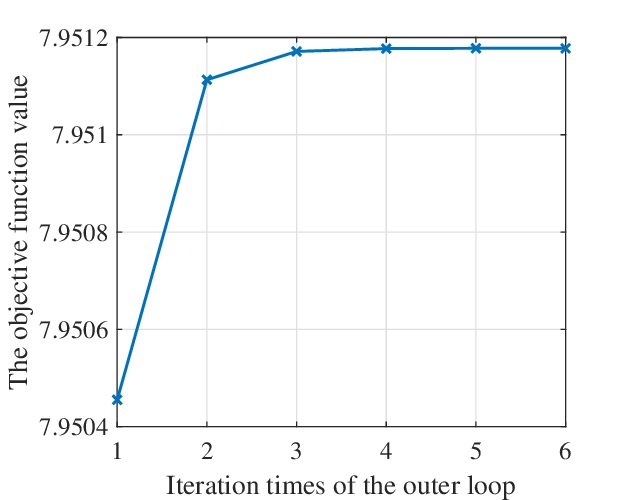}
%\vspace{-0.3cm}
% iteration times
% process
% The sum rate of terrestrial network 
\caption{The objective function value of problem \eqref{Problem_CVX_penalty} versus the number of outer loop iterations of Algorithm \ref{algorithm_penalty_IBF_LYW}: ${\overline I _S} = 2 \times {10^{ - 12}}{\rm{mW}}$, ${\overline R_A} = 6{\rm{bps/Hz}}$, ${\overline p _0} = 60{\rm{W}}$, and ${\overline p _n} = 60{\rm{W}},\forall n$.}
% The legend represents the values of ${\overline p _n}, \forall n, {\overline q}$, and ${\overline R_A}$, respectively.
\label{sum_rate_conver_PIBF_out}
\vspace{-0.4cm}
\end{figure}
Fig. \ref{sum_rate_conver_PIBF_out} shows the convergence performance of the outer loop of Algorithm \ref{algorithm_penalty_IBF_LYW} in one channel realization.
It can be seen that as the number of iterations of the outer loop increases, the objective function value of problem \eqref{Problem_CVX_penalty} increases and converges.
% within 100 outer loop iterations in all cases.

% sum_rate_interfer_thre_N2K2_0620
\begin{figure}[t]
\centering
\includegraphics[width=7cm]{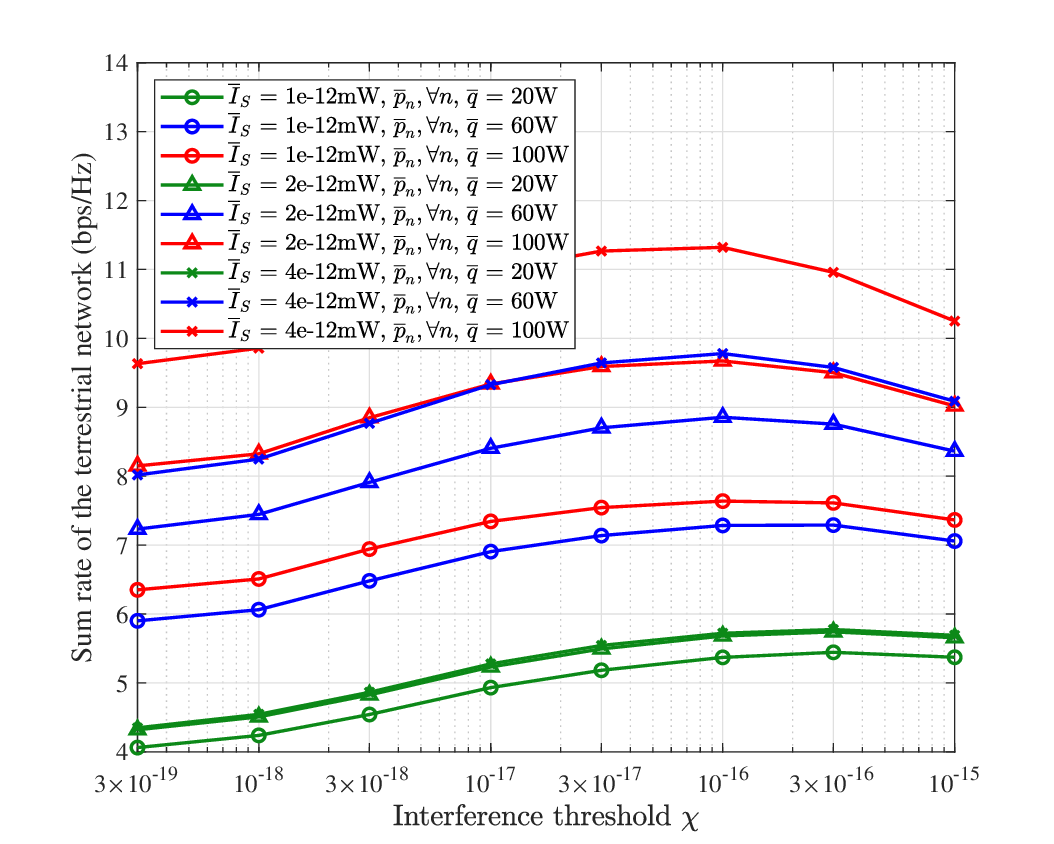}
%\vspace{-0.3cm}
% \caption{Effect of the maximum interference temperature that the satellite terminal can tolerate ${\overline I _S}$ on the sum rate of the aerial network and terrestrial network respectively: ${\overline R_A} = 3$, ${\overline p _0} = 60W$, ${\overline p _n} = 60W,\forall n$.}
\caption{Effect of $\chi$ on the sum rate of the terrestrial network (the IS scheme): ${\overline R_A} = 3{\rm{bps/Hz}}$.}
% The legend represents the values of ${\overline I _S}$ and ${\overline p _n}, \forall n, {\overline q}$, respectively.}
\label{sum_rate_interfer_thre}
\vspace{-0.4cm}
\end{figure}
Fig. \ref{sum_rate_interfer_thre} and Fig. \ref{sum_rate_conver_ISstep2_obj} show the parameter selection and convergence performance of the IS scheme, respectively.
Fig. \ref{sum_rate_interfer_thre} shows the impact of the interference threshold $\chi$ of the IS scheme on the sum rate of the terrestrial network.
% where the legend represents the values of ${\overline I _S}$ and ${\overline p _n}, \forall n, {\overline q}$, respectively.
% It can be observed that 
In most cases, the sum rate of the terrestrial network can approach the maximum value when $\chi=10^{-16}$.
Therefore, for simplicity, we set $\chi=10^{-16}$ in the subsequent simulations.
\begin{figure}[t]
\centering
\includegraphics[width=7cm]{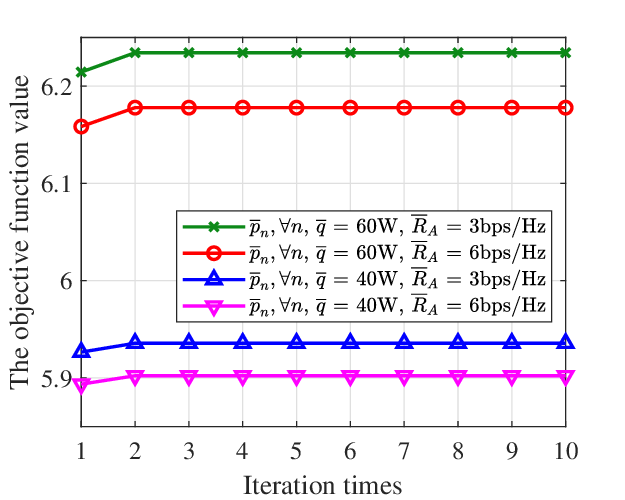}
%\vspace{-0.3cm}
\caption{The objective function value of problem \eqref{Problem_find_power_considering_interference} versus the number of iterations of Algorithm \ref{algorithm_power}: ${\overline I _S} = 2 \times {10^{ - 12}}{\rm{mW}}$.}
% The legend represents the values of ${\overline p _n}, \forall n, {\overline q}$ and ${\overline R_A}$, respectively.}
\label{sum_rate_conver_ISstep2_obj}
\vspace{-0.4cm}
\end{figure}
Fig. \ref{sum_rate_conver_ISstep2_obj} shows the convergence performance of Algorithm \ref{algorithm_power} in one channel realization.
It can be seen that as the number of iterations increases, the objective function value of problem \eqref{Problem_find_power_considering_interference} increases and converges within 2 iterations in all cases.

\begin{figure}[t]
\centering
\includegraphics[width=7cm]{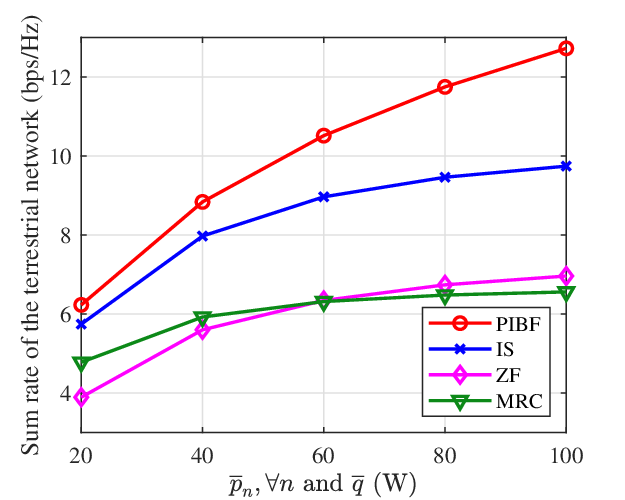}
%\vspace{-0.3cm}
% \caption{Effect of the maximum transmit power of the terrestrial BS and the aerial BS, i.e., ${\overline p _n},\forall n$ and ${\overline q}$ on the sum rate of aerial network and terrestrial network respectively: ${\overline I _S} = 4*{10^{ - 12}}mW$, ${\overline R_A} = 3$.}
\caption{Effect of ${\overline p _n},\forall n$ and ${\overline q}$ on the sum rate of terrestrial network: ${\overline I _S} = 2\times {10^{ - 12}}{\rm{mW}}$ and ${\overline R_A} = 3{\rm{bps/Hz}}$.}
\label{sum_rate_max_trans_power}
\vspace{-0.4cm}
\end{figure}

Fig. \ref{sum_rate_max_trans_power} shows the impact of the maximum transmit power of the terrestrial BSs and the aerial BS, i.e., ${\overline p _n},\forall n$ and ${\overline q}$ on the sum rate of terrestrial network.
% As the maximum transmit power of all BSs increases, 
As ${\overline p _n},\forall n$ and ${\overline q}$ increase, 
the sum rate of the terrestrial network for all schemes in HCSSA increases. 
This is because the terrestrial BSs can transmit signals at higher power.
Hence, a higher sum rate of the terrestrial network is obtained.
Besides, the performance of the PIBF scheme is better than other schemes.
% , which shows the superiority of the PIBF scheme.
% Meanwhile, the low-complexity beamforming scheme achieves a performance close to that of the PIBF scheme.
% Meanwhile, the performance of the IS scheme with low complexity is close to that of the PIBF scheme.
% Moreover, when ${\overline p _n},\forall n$ and ${\overline q}$ are low, the MRC scheme has better performance than the ZF scheme \cite{yang2018cooperative}.
% Intuitively, this is because when the transmit power is low, the interference caused by the beamforming vector obtained by the MRC scheme to other receivers is also low so that the rate of other channels will not be significantly reduced.
Moreover, when ${\overline p _n},\forall n$ and ${\overline q}$ are high, the ZF scheme outperforms the MRC scheme \cite{yang2018cooperative}.
Intuitively, this is because when the transmit power is high, the interference caused by the beamforming vector obtained by the MRC scheme is severe so the rate of other channels will be significantly reduced.
% to other receivers 
% When the transmit power is low, the MRC scheme can not only target the intended user well but also cause less interference, so a higher sum rate can be obtained.
When the transmit power is low, the MRC scheme can target the intended user well and the impact of interference is small, so a higher sum rate can be obtained.

% sum_rate_max_interfer_temp_N2K2_0408
% sum_rate_max_interfer_temp_N2K3_0406
% sum_rate_max_interfer_temp_N2K2_0415_500 ${\overline \beta} = {\rm{4}}$,
\begin{figure}[t]
\centering
\includegraphics[width=7cm]{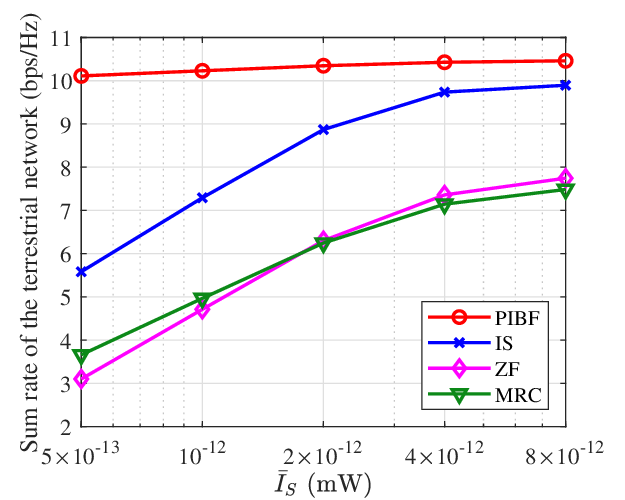}
%\vspace{-0.3cm}
% \caption{Effect of the maximum interference temperature that the satellite terminal can tolerate ${\overline I _S}$ on the sum rate of the aerial network and terrestrial network respectively: ${\overline R_A} = 3$, ${\overline p _0} = 60W$, ${\overline p _n} = 60W,\forall n$.}
\caption{Effect of ${\overline I _S}$ on the sum rate of terrestrial network: ${\overline R_A} = 3{\rm{bps/Hz}}$, ${\overline p _0} = 60{\rm{W}}$, and ${\overline p _n} = 60{\rm{W}},\forall n$.}
\label{sum_rate_max_interfer_temp}
\vspace{-0.4cm}
\end{figure}

Fig. \ref{sum_rate_max_interfer_temp} shows the impact of the interference temperature of satellite terminal ${\overline I _S}$ on the sum rate of the terrestrial network.
As ${\overline I _S}$ increases, the sum rate of the terrestrial network of all schemes in HCSSA increases.
This is because an increase in ${\overline I _S}$ allows the terrestrial BSs to transmit signals with higher transmit power for obtaining a higher rate.
Besides, the performance of the PIBF scheme is better than other schemes.
Moreover, when ${\overline I _S}$ are low, the MRC scheme has better performance than the ZF scheme since the low interference temperature leads to limited transmit power.

% The schemes with "(traditional)" label represent the results obtained in the traditional cognitive spectrum sharing architecture (TCSSA) by solving the following problem with the corresponding scheme: 
% In order to show the benefits of HCSSA compared to the traditional cognitive spectrum sharing architecture (TCSSA), we use the schemes with "(traditional)" label to represent the results obtained in TCSSA by solving the following problem with the corresponding scheme: 
In order to show the benefits of HCSSA compared to TCSSA, we formulate the problem with TCSSA:
\begin{subequations}
\label{Problem_sate_active_2prior}
\begin{align}
&\mathop {\max}_{\left\{ {{\bf{w}}_{n,k}}\right\}, {\bf{v}} } \;
\sum_{(n,k)} {{{\log }_2}\left( {1 + {\gamma _{n,k}}} \right)}  + {\log _2}\left( {1 + {\beta}} \right)\\
&\;\;\;{\rm{s}}.{\rm{t}}.\;\;\;\;\;\;
\eqref{cons_interference_power_sate}-
\eqref{cons_transmit_power_aerial_BS}, \nonumber
\end{align}
\end{subequations}
where the objective function is the sum of the rates of the terrestrial network and the aerial network.
Problem \eqref{Problem_sate_active_2prior} can be solved by the PIBF scheme and the low-complexity schemes.
The results obtained by solving problem \eqref{Problem_sate_active_2prior} with the corresponding scheme are labeled with "TCSSA".
% The schemes labeled with "(traditional)" represents the results obtained by solving problem \eqref{Problem_sate_active_2prior} with the corresponding scheme.

% use the schemes with "(traditional)" label to represent the results obtained in TCSSA by solving the following problem with the corresponding scheme: 

% In addition, the scheme with "(2 priorrrrrr)" label is the result obtained by solving the following problem with the corresponding method: 
% % solving problem \eqref{Problem_sate_active_2prior} with the corresponding method.
% \begin{subequations}
% \label{Problem_sate_active_2prior}
% \begin{align}
%     \mathop {\max}_{\left\{ {{{\bf{w}}_{n,k}}, {\bf{v}}} \right\}} \;
%     &\sum_{n = 1}^N {\sum_{k = 1}^{{K_n}} {{{\log }_2}\left( {1 + {\gamma _{n,k}}} \right)} }  + {\log _2}\left( {1 + {\beta}} \right)\\
%     {\rm{s}}.{\rm{t}}.\;
%     &\eqref{cons_interference_power_sate}-
%     \eqref{cons_transmit_power_aerial_BS} \nonumber
% \end{align}
% \end{subequations}

\begin{figure}[t]
\centering
\includegraphics[width=7cm]{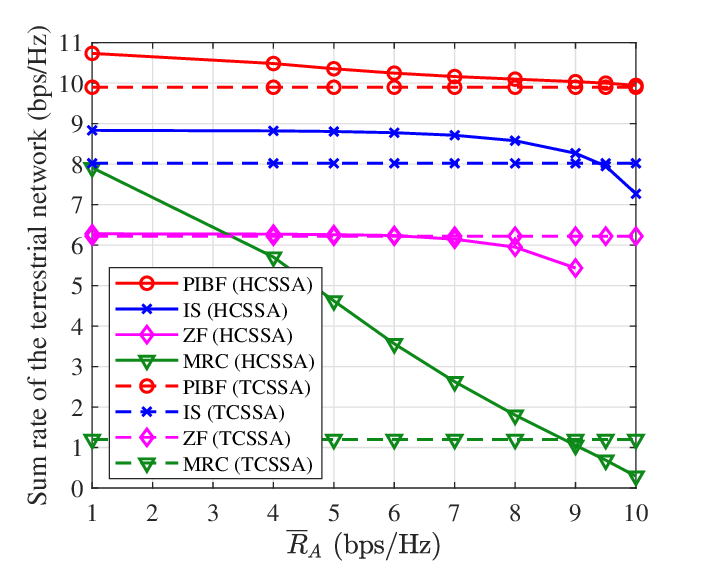}
%\vspace{-0.3cm}
% \caption{Effect of the minimum rate requirement for aerial user ${\overline R_A}$ on the sum rate of aerial network and terrestrial network respectively: ${\overline I _S} = 4*{10^{ - 12}}mW$, ${\overline p _0} = 40W,{\overline p _n} = 40W,\forall n$.}
\caption{Effect of ${\overline R_A}$ on the sum rate of terrestrial network: ${\overline I _S} = 2\times {10^{ - 12}}{\rm{mW}}$, ${\overline p _0} = 60{\rm{W}}$, and ${\overline p _n} = 60{\rm{W}},\forall n$.}
\label{sum_rate_min_SINR_all}
\vspace{-0.4cm}
\end{figure}
% 400*320

Fig. \ref{sum_rate_min_SINR_all} shows the impact of the minimum rate requirement for the aerial user ${\overline R_A}$ on the sum rate of the terrestrial network.
% As ${\overline \beta}$ increases, the sum rate of the aerial network obtained by the PIBF scheme increases, while the sum rate of the terrestrial network obtained by the PIBF scheme decreases.
As ${\overline R_A}$ increases, the sum rate of the terrestrial network of all schemes in HCSSA decreases.
This is because the higher rate requirement of the aerial network results in fewer resources available to the terrestrial network.
Moreover, the rates obtained by all schemes in TCSSA remain unchanged since ${\overline R_A}$ is not considered in problem \eqref{Problem_sate_active_2prior}.
Besides, the performance of the PIBF scheme is better than other schemes.
Note that as ${\overline R_A}$ increases, the sum rate of the terrestrial network of the MRC scheme in HCSSA decreases rapidly.
Intuitively, this is due to the fact that the terrestrial BSs cause severe interference to the aerial user in step 1, which can only be mitigated by reducing the transmit power of the terrestrial BSs in step 2.
% This is because, in step 1 of the MRC scheme, the terrestrial BSs cause severe interference to the aerial user, which can only be mitigated by reducing the transmit power of the terrestrial BSs in step 2.
Thus, the reduction in transmit power rapidly reduces the rate.
In addition, it is noted that when ${\overline R_A}$ is large, the PIBF scheme or other schemes may not be able to obtain feasible solutions, in which case, the secondary networks will not transmit signals.
Since the probability of obtaining a feasible solution for the ZF scheme is too low when ${\overline R_A} > 9{\rm{bps/Hz}}$, it is not shown in Fig. \ref{sum_rate_min_SINR_all}.

% sum_rate_min_SINR_N2K2_0408
% sum_rate_min_SINR_N2K3_0406
% #057748
\begin{figure}[t]
\centering
\includegraphics[width=7cm]{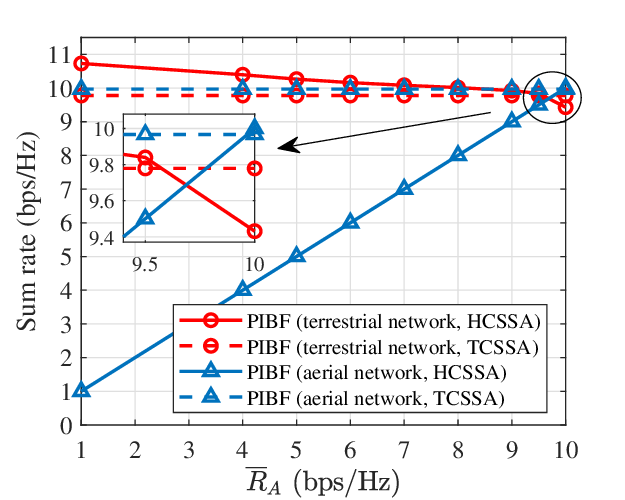}
% sum_rate_min_SINR_N2K2_0415_PIBF_2e12_60W_
%\vspace{-0.3cm}
% \caption{Effect of the minimum rate requirement for aerial user ${\overline R_A}$ on the sum rate of aerial network and terrestrial network respectively: ${\overline I _S} = 4*{10^{ - 12}}mW$, ${\overline p _0} = 40W,{\overline p _n} = 40W,\forall n$.}
\caption{Effect of ${\overline R_A}$ on the sum rate of terrestrial network and aerial network respectively: ${\overline I _S} = 2\times{10^{ - 12}}{\rm{mW}}$, ${\overline p _0} = 60{\rm{W}}$, and ${\overline p _n} = 60{\rm{W}},\forall n$.}
\label{sum_rate_min_SINR}
\vspace{-0.4cm}
\end{figure}
% 400*320

% \begin{figure} [t!]
% \centering
% \subfloat[Effect of ${\overline R_A}$ on the sum rate of terrestrial network\label{sum_rate_min_SINR_terrestrial}]{
%     \includegraphics[width=7cm]{}}
% \\
% \subfloat[Effect of ${\overline R_A}$ on the sum rate of aerial network\label{sum_rate_min_SINR_aerial}]{
%     \includegraphics[width=7cm]{} }
% \caption{Effect of ${\overline R_A}$ on the sum rate of the terrestrial network and aerial network respectively: ${\overline I _S} = 4*{10^{ - 12}}mW$, ${\overline p _0} = 40W,{\overline p _n} = 40W,\forall n$.}
% \label{sum_rate_min_SINR} 
% \end{figure}

Fig. \ref{sum_rate_min_SINR} shows the impact of the minimum rate requirement for the aerial user ${\overline R_A}$ on the sum rate of the aerial network and the terrestrial network, respectively.
It is worth noting that when ${\overline R_A}=10{\rm{bps/Hz}}$, the sum rate of the terrestrial network obtained by the PIBF scheme in HCSSA is lower than that in TCSSA.
This shows that the terrestrial network sacrifices its own rate to ensure the high rate requirement of the aerial network.
% 
% the traditional spectrum sharing architecture with two priorities
% In addition, c
Compared to TCSSA, the advantages of HCSSA are shown in the following two aspects.
On the one hand, when ${\overline R_A} \leq 9.5{\rm{bps/Hz}}$, the rate requirement of the aerial network is satisfied in HCSSA and TCSSA.
The excess resources of the aerial network are available for the terrestrial network to obtain a higher rate in HCSSA.
However, it is not available in TCSSA.
On the other hand, when ${\overline R_A}$ reaches $10{\rm{bps/Hz}}$, the rate requirement of the aerial network is not satisfied in TCSSA.
However, this requirement is preferentially met in HCSSA.

\section{Conclusions}
\label{sec_Conclusions}
% In this paper, a cognitive spectrum sharing architecture with hierarchical secondary networks (CSSAHSN) is studied in SAGIN.
% In this paper, a hierarchical cognitive spectrum sharing architecture (HCSSA) is proposed for SAGIN, where the satellite network is the primary network, the aerial network is the preferential secondary network, and the terrestrial network is the ordinary secondary network.
In this paper, a hierarchical cognitive spectrum sharing architecture, i.e., HCSSA, has been proposed for SAGIN by dividing the secondary networks into a preferential one and an ordinary one.
% , where the secondary networks are divided into a preferential one and an ordinary one.
% the preferential secondary network and the ordinary secondary network.
% the secondary networks are hierarchical.
Specifically, a satellite network shares its
spectrum with an aerial network and a terrestrial network if the received interference of the satellite terminal is below a threshold.
Besides, the aerial network has a higher priority than the terrestrial network, and its performance is ensured by a QoS constraint.
% Besides, the performance of the aerial network is prioritized by a QoS requirement, while the terrestrial network is unprotected.
% Specifically, HCSSA prioritizes the performance of the aerial network, while the terrestrial network is unprotected.
% is the ordinary secondary network without protection.
% 
We have formulated a hierarchical cognitive spectrum sharing problem to maximize the sum rate of the terrestrial network by jointly optimizing the transmit beamforming vectors of the aerial and terrestrial networks while meeting the interference temperature constraint of the satellite network and the rate constraint of the aerial network.
% , and the transmit power constraints of the aerial BS and the terrestrial BSs.
% 
To solve this non-convex problem, we have proposed a PIBF scheme by exploiting the penalty method and SCA technique.
% to reduce the complexity of the PIBF scheme,
Moreover, we also have developed three low-complexity schemes, where the beamforming vectors can be obtained by optimizing the normalized beamforming vectors and power control. 
% by decoupling variables and reducing the number of variables in the optimization problem.
% Moreover, we also have developed three low-complexity schemes by decoupling variables and reducing the number of variables in the optimization problem.
Simulation results have compared the performance of the PIBF scheme with that of the low-complexity schemes and have illustrated the advantages of HCSSA compared with TCSSA.
% Simulation results show the superiority of the PIBF scheme in comparison with the low-complexity schemes and the advantages of HCSSA compared with TCSSA.
% the proposed multi-priority architecture

% \appendix 
% \section*{Proof of Theorem \ref{theorem_convergence}}
\section*{Appendix A}
To prove Theorem \ref{theorem_convergence}, we will  prove the following formula:
\begin{align}
\label{theorem_conver_proof}
&\mu\left( {{\bf{V}}^{(t + 1)}}, {{\bf{W}}^{(t + 1)}} \right)\nonumber \\
&\mathop  \ge ^{(a)} \phi({{\bf{V}}^{(t + 1)}}, {{\bf{W}}^{(t + 1)}}, {u^{(t + 1)}};{{\bf{V}}^{(t)}}, {{\bf{W}}^{(t)}}, {u^{(t)}})\nonumber \\
&\mathop  \ge ^{(b)} \phi({\widetilde {\bf{V}}^{(t)}}, {\widetilde {\bf{W}}^{(t)}}, {{\tilde u}^{(t)}};{{\bf{V}}^{(t)}}, {{\bf{W}}^{(t)}}, {u^{(t)}})\nonumber \\
&\mathop  \ge ^{(c)} \phi({{\bf{V}}^{(t)}}, {{\bf{W}}^{(t)}}, {u^{(t)}};{{\bf{V}}^{(t)}}, {{\bf{W}}^{(t)}}, {u^{(t)}})\nonumber \\
&\mathop  = ^{(d)} \mu\left( {{{\bf{V}}^{(t)}}, {{\bf{W}}^{(t)}}} \right).  
\end{align}
% &\mu\left( {{\bf{V}}^{(t + 1)}}, {{\bf{W}}^{(t + 1)}} \right) {\mathop  \ge ^{(a)}}  
% \phi( {{\bf{V}}^{(t + 1)}}, {{\bf{W}}^{(t + 1)}}, {\widetilde u^{(t)}}; \nonumber \\
% &{{\bf{V}}^{(t)}}, {{\bf{W}}^{(t)}}, {u^{(t)}} ) + {c^{( t )}} {\mathop  \ge ^{(b)}} 
% \phi\left( {{{\bf{V}}^{(t)}}, {{\bf{W}}^{(t)}};{{\bf{V}}^{(t)}}, {{\bf{W}}^{(t)}}} \right) \nonumber \\
% &+ {c^{( t )}} {\mathop  = ^{(c)}} 
% \mu\left( {{{\bf{V}}^{(t)}}, {{\bf{W}}^{(t)}}} \right)
% &\phi\left( {{\widetilde {\bf{V}}}^{(t)}}, {{\widetilde {\bf{W}}}^{(t)}}, {\widetilde u^{(t)}}; {{\bf{V}}^{(t)}}, {{\bf{W}}^{(t)}}, {u^{(t)}} \right) = 
% \mu\left( {{{\widetilde {\bf{V}}}^{(t)}}, {{\widetilde {\bf{W}}}^{(t)}}} \right) - {c^{(t)}} \nonumber \\
% &\;\;\;\;\;\;\;\;+ \xi F\left( {{{\widetilde {\bf{V}}}^{(t)}}, {{\widetilde {\bf{W}}}^{(t)}}} \right) - \xi \overline F\left( {{{\widetilde {\bf{V}}}^{(t)}}, {{\widetilde {\bf{W}}}^{(t)}};{{\bf{V}}^{(t)}}, {{\bf{W}}^{(t)}}} \right).
% $\mu\left( {{\bf{V}}^{(t + 1)}}, {{\bf{W}}^{(t + 1)}} \right) {\mathop  \ge ^{(a)}}  
% \phi( {{\bf{V}}^{(t + 1)}}, {{\bf{W}}^{(t + 1)}}, {\widetilde u^{(t)}}; {{\bf{V}}^{(t)}}, {{\bf{W}}^{(t)}}, {u^{(t)}} ) + {c^{( t )}} {\mathop  \ge ^{(b)}} 
% \phi\left( {{{\bf{V}}^{(t)}}, {{\bf{W}}^{(t)}};{{\bf{V}}^{(t)}}, {{\bf{W}}^{(t)}}} \right) + {c^{( t )}} {\mathop  = ^{(c)}} 
% \mu\left( {{{\bf{V}}^{(t)}}, {{\bf{W}}^{(t)}}} \right)$.

% Firstly, we give the proof of (a) as follows.
In the following, we will prove (c), (b), (a), and (d) in turn.
Since problem \eqref{Problem_CVX_penalty} is optimized in the $t$ iteration, inequality (c) holds.
% Since in the inner loop of Algorithm \ref{algorithm_penalty_IBF_LYW}
After problem \eqref{Problem_CVX_penalty} is optimized in the $t$ iteration, we update the Taylor expansion points for ${\bf{V}}$, ${\bf{W}}_{n,k}, \forall n,k$, and $u_{n,k}, \forall n,k$ and have ${{\bf{V}}^{(t+1)}} = \widetilde {\bf{V}}^{(t)}$, ${\bf{W}}_{n,k}^{(t+1)} = \widetilde {\bf{W}}_{n,k}^{(t)}, \forall n,k$, and $u_{n,k}^{(t+1)} = \ln \left({\alpha_{n,k}}\left( {{\bf{V}}^{(t + 1)}}, {{\bf{W}}^{(t + 1)}} \right)\right)$, i.e., \eqref{update_v_t}.

From \eqref{Problem_CVX_auxi_vari_v_Taylor1}, the Taylor expansion of ${e^{\widetilde u_{n,k}^{(t)}}}$ is given by ${e^{u_{n,k}^{( t )}}}\left( \widetilde u_{n,k}^{(t)} - u_{n,k}^{( t )} + 1 \right) \le {e^{\widetilde u_{n,k}^{(t)}}}, \forall n,k$.
Besides, when the optimal solution of problem \eqref{Problem_CVX_penalty} is obtained, the equalities in constraint \eqref{Problem_CVX_auxi_vari_v_Taylor} hold, i.e., ${e^{u_{n,k}^{( t )}}}\left( {\widetilde u_{n,k}^{(t)} - u_{n,k}^{( t )} + 1} \right) = {\alpha_{n,k}}\left( {{\widetilde {\bf{V}}}^{(t)}}, {{\widetilde {\bf{W}}}^{(t)}} \right)$.
Thus, we have $\ln \left( {{\alpha _{n,k}}\left( {{{\widetilde {\bf{V}}}^{(t)}},{{\widetilde {\bf{W}}}^{(t)}}} \right)} \right) \le \tilde u_{n,k}^{(t)}$.
Besides, from \eqref{update_v_t}, we have $u_{n,k}^{(t + 1)} = \ln \left( {{\alpha _{n,k}}\left( {{{\widetilde {\bf{V}}}^{(t)}},{{\widetilde {\bf{W}}}^{(t)}}} \right)} \right)$.
Thus, we have $u_{n,k}^{(t + 1)} \le \tilde u_{n,k}^{(t)}$ and obtain inequality (b) in \eqref{theorem_conver_proof}.

Since ${\overline F}^{(t)}$ is the convex upper bound of the penalty term $F$, we have:
\begin{align}
\label{rank1_penalty_term}
\overline F\left( {{{\widetilde {\bf{V}}}^{(t)}}, {{\widetilde {\bf{W}}}^{(t)}};{{\bf{V}}^{(t)}}, {{\bf{W}}^{(t)}}} \right) \ge F\left( {{{\widetilde {\bf{V}}}^{(t)}}, {{\widetilde {\bf{W}}}^{(t)}}} \right),
\end{align}
where the equality holds when $\widetilde {\bf{W}}_{n,k}^{(t)} = {\bf{W}}_{n,k}^{(t)},\forall n,k$ and ${\widetilde {\bf{V}}^{(t)}} = {{\bf{V}}^{(t)}}$.
From \eqref{rank1_penalty_term} and \eqref{update_v_t}, inequality (a) holds in \eqref{theorem_conver_proof}.
From $\overline F\left( {{{\bf{V}}^{(t)}}, {{\bf{W}}^{(t)}};{{\bf{V}}^{(t)}}, {{\bf{W}}^{(t)}}} \right) = F\left( {{{\bf{V}}^{(t)}}, {{\bf{W}}^{(t)}}} \right)$ and \eqref{update_v_t}, equality (d) holds in \eqref{theorem_conver_proof}.

\small
%\section*{Acknowledgement}
%This work was supported in part by
%% the National Natural Science Foundation of China under Grants U1801261, 61571100, and 61801101.
\bibliographystyle{IEEEtran}%By using IEEEtrans, the number can be displayed.
% \bibliography{CognitiveSAGIN}
\bibliography{IEEEabrv,CognitiveSAGIN}
\end{document}